\documentclass{article}

\usepackage[final]{neurips_2020}
\usepackage[page]{appendix}

\bibliographystyle{icml2020}
\usepackage[colorlinks=true,citecolor=black,linkcolor=blue]{hyperref}

\usepackage[utf8]{inputenc} 
\usepackage[T1]{fontenc}    
\usepackage{hyperref}       
\usepackage{url}            
\usepackage{booktabs}       
\usepackage{amsfonts}       
\usepackage{nicefrac}       
\usepackage{microtype}      

\usepackage{tikz}
\usepackage{thmtools,thm-restate}
\usepackage{subcaption}
\usepackage{amssymb}
\usepackage{algorithm}
\usepackage{algorithmic}
\usepackage{paralist}
\usepackage{caption}
\usepackage{comment}
\usepackage{flushend}

\newcommand{\todo}[1]{{\color{blue}todo: #1}}

\usepackage{amsmath,amsfonts,amsthm,amssymb}
\usepackage{bbm}
\usepackage{bm}

\newtheorem{defn}{Definition}
\newtheorem{lemma}{Lemma}

\newtheorem{prop}{Proposition}
\newtheorem{cor}{Corollary}
\newtheorem*{remark*}{Remark}




\newcommand{\tT}{\tilde{T}}


\newcommand{\cB}{\mathcal{B}}
\newcommand{\cC}{\mathcal{C}}
\newcommand{\cE}{\mathcal{E}}

\newcommand{\cI}{\mathcal{I}}

\newcommand{\cP}{\mathcal{P}}
\newcommand{\cR}{\mathcal{R}}

\newcommand{\cT}{\mathcal{T}}


\newcommand{\bbE}{\mathbb{E}}

\newcommand{\bbP}{\mathbb{P}}


\newcommand{\pa}{{\textrm{pa}}}

\newcommand{\adj}{{\textrm{adj}}}
\newcommand{\CC}{{\texttt{CC}}}



\newcommand{\rvD}{{\mathsf{D}}}
\newcommand{\DCT}{{\textrm{DCT}}}
\newcommand{\obs}{{\textrm{obs}}}
\newcommand{\res}{{\textrm{res}}}
\DeclareMathOperator{\Res}{Res}

\DeclareMathOperator{\skel}{skel}

\DeclareMathOperator{\Residuals}{\cR}
\DeclareMathOperator{\compRatio}{R}
\newcommand{\up}{{\textrm{up}}}

\newcommand{\cb}[1]{\left\lfloor \frac{#1}{2} \right\rfloor}

\usepackage{prettyref}
\newcommand{\rref}[2][]{\prettyref{#2}}
\newrefformat{model}{Model\,\ref{#1}}
\newrefformat{listing}{Listing\,\ref{#1}}
\newrefformat{alg}{Algorithm\,\ref{#1}}
\newrefformat{line}{line\,\ref{#1}}
\newrefformat{sec}{Section\,\ref{#1}}
\newrefformat{subsec}{Subsection\,\ref{#1}}
\newrefformat{section}{Section\,\ref{#1}}
\newrefformat{appendix}{Appendix\,\ref{#1}}
\newrefformat{app}{Appendix\,\ref{#1}}
\newrefformat{def}{Definition\,\ref{#1}}
\newrefformat{defn}{Definition\,\ref{#1}}
\newrefformat{thm}{Theorem\,\ref{#1}}
\newrefformat{ax}{\ref{#1}}
\newrefformat{prop}{Prop.\,\ref{#1}}
\newrefformat{lemma}{Lemma\,\ref{#1}}
\newrefformat{cor}{Corollary\,\ref{#1}}
\newrefformat{corollary}{Corollary\,\ref{#1}}
\newrefformat{ex}{Example\,\ref{#1}}
\newrefformat{tab}{Table\,\ref{#1}}
\newrefformat{fig}{Fig.\,\ref{#1}}
\newrefformat{eqn}{Equation~(\ref{#1})}
\newrefformat{problem}{Problem\,\ref{#1}}

\newtheorem*{prop*}{Prop.}
\newtheorem*{lemma*}{Lemma}
\newtheorem*{defn*}{Definition}

\newcommand{\Max}{{\textrm{max}}}

\newcommand{\leftarrowstar}{\kern1.5pt\hbox{$\leftarrow$}\kern-1.5pt\hbox{$\ast$}\kern1.5pt}
\newcommand{\rightarrowstar}{\kern1.5pt\hbox{$\ast$}\kern-1.5pt\hbox{$\rightarrow$}}
\newcommand{\tailstar}{\kern1.5pt\hbox{$-$}\kern-1.5pt \hbox{$\ast$}}
\newcommand{\startail}{\kern1.5pt\hbox{$\ast$}\kern-1.5pt\hbox{$-$}}
\newcommand{\staredge}{\kern1.5pt\hbox{$\ast$}\kern-1.5pt\hbox{$-$}\kern-1.5pt\hbox{$\ast$}\kern1.5pt}


\begin{document}

\title{Active Structure Learning of Causal DAGs via Directed Clique Trees}

\author{%
  Chandler Squires \\
  LIDS, MIT\\
  MIT-IBM Watson AI Lab\\
  \texttt{csquires@mit.edu} \\
   \And
  Sara Magliacane \\
  MIT-IBM Watson AI Lab \\
  IBM Research \\ 
  \texttt{sara.magliacane@gmail.com} \\
   \And
  Kristjan Greenewald \\
  MIT-IBM Watson AI Lab \\
  IBM Research \\ 
  \texttt{kristjan.h.greenewald@ibm.com} \\
   \And
   Dmitriy Katz \\
  MIT-IBM Watson AI Lab \\
  IBM Research \\ 
  \texttt{dkatzrog@us.ibm.com} 
  \And
    Murat Kocaoglu \\
  MIT-IBM Watson AI Lab \\
  IBM Research \\ 
  \texttt{murat@ibm.com} 
  \And
    Karthikeyan Shanmugam \\
  MIT-IBM Watson AI Lab \\
  IBM Research \\ 
  \texttt{karthikeyan.shanmugam2@ibm.com} 
}
\maketitle

\begin{abstract}
A growing body of work has begun to study intervention design for efficient structure learning of causal directed acyclic graphs (DAGs).
A typical setting is a \emph{causally sufficient} setting, i.e. a system with no latent confounders, selection bias, or feedback, when the essential graph of the observational equivalence class (EC) is given as an input and interventions are assumed to be noiseless.
Most existing works focus on \textit{worst-case} or \textit{average-case} lower bounds for the number of interventions required to orient a DAG. These worst-case lower bounds only establish that the largest clique in the essential graph \textit{could} make it difficult to learn the true DAG. 
In this work, we develop a \textit{universal} lower bound for single-node interventions that establishes that the largest clique is \textit{always} a fundamental impediment to structure learning.
Specifically, we present a decomposition of a DAG into independently orientable components through \emph{directed clique trees} and use it to prove that 
the number of single-node interventions necessary to orient any DAG in an EC is at least the sum of half the size of the largest cliques in each chain component of the essential graph.
Moreover, we present a two-phase intervention design algorithm that, under certain conditions on the chordal skeleton, matches the optimal number of interventions up to a multiplicative logarithmic factor in the number of maximal cliques. 
We show via synthetic experiments that our algorithm can scale to much larger graphs than most of the related work and achieves better worst-case performance than other scalable approaches.
\footnote{A code base to recreate these results can be found at \url{https://github.com/csquires/dct-policy}.}
\end{abstract}

\section{Introduction}\label{section:introduction}
Causal modeling is an important tool in medicine, biology and econometrics, allowing practitioners to predict the effect of actions on a system and the behavior of a system if its causal mechanisms change due to external factors \citep{Pearl:2009:CMR:1642718, Spirtes2000,PetJanSch17}. 
A commonly-used model is the directed acyclic graph (DAG), which is capable of modeling \emph{causally sufficient} systems, i.e. systems with no latent confounders, selection bias, or feedback. 
However, even in this favorable setup, a causal model cannot (in general) be fully identified from observational data alone; in these cases experimental (``interventional'') data is necessary to resolve ambiguities about causal relationships.

In many real-world applications, interventions may be time-consuming or expensive, e.g. randomized controlled trials or gene knockout experiments.
These settings crucially rely on \emph{intervention design}, i.e. finding a cost-optimal set of interventions that can fully identify a causal model. 
Recently, many methods have been developed for intervention design under different assumptions
\citep{he2008active,hyttinen2013experiment,shanmugam2015learning,kocaoglu2017cost,lindgren2018experimental}.

In this work we extend the Central Node algorithm of \citet{greenewald2019trees} to learn the structure of causal graphs in a \emph{causally sufficient} setting from interventions on single variables for both noiseless and noisy interventions. Noiseless interventions are able to deterministically orient a set of edges, while noisy interventions result in a posterior update over a set of compatible graphs.
We also focus only on interventions with a single target variable, i.e. \emph{single-node interventions}, but as opposed to \citep{greenewald2019trees} which focuses on limited types of graphs, we allow for general DAGs but only consider noiseless interventions.
%
In particular, we focus on \textit{adaptive} intervention design, also known as \textit{sequential} or \textit{active} \citep{he2008active}, where the result of each intervention is incorporated into the decision-making process for later interventions. 
This contrasts with \textit{passive} intervention design, for which all interventions are decided beforehand.

\textbf{Universal lower bound.} Our key contribution is to show that the problem of fully orienting a DAG with single-node interventions is equivalent to fully orienting special induced subgraphs of the DAG, called \emph{residuals} (\rref{thm:vis-characterization} below).
Given this decomposition, we prove a universal lower bound on the minimum number of single-node interventions necessary to fully orient \textit{any} DAG in a given Markov Equivalence Class (MEC), the set of graphs that fit the observational distribution. This lower bound is equal to the sum of half the size of the largest cliques in each chain component of the essential graph (\rref{thm:clique-lower-bound}).
This result has a surprising consequence: the largest clique is \textit{always} a fundamental impediment to structure learning. 
In comparison, prior work \citep{hauser2014two,shanmugam2015learning} established \textit{worst-case} lower bounds based on the maximum clique size, which only implied that the largest clique in each chain component of the essential graph \textit{could} make it difficult to learn the true DAG.

\textbf{Intervention policy.} We also propose a novel two-phase single-node intervention policy. 
The first phase, based on the Central Node algorithm, uses properties of \emph{directed clique trees} (\rref{def:dct}) to reduce the identification problem to identification within the (DAG dependent) residuals. 
The second phase then completes the orientations within each residual. 
We cover the condition of \emph{intersection-incomparability} for the chordal skeleton of a DAG
(\cite{kumar2002clique} introduce this condition in the context of graph theory)
. We show that under this condition, our policy uses at most $O(\log \cC_\Max)$ times as many interventions as are used by the (DAG dependent) optimal intervention set, where $\cC_\Max$ is the greatest number of maximal cliques in any chain component (\rref{thm:dct-policy-competitive-ratio}).

Finally, we evaluate our policy on general synthetic DAGs. 
We find that our intervention policy performs comparably to intervention policies in previous work, while being much more scalable than most policies and adapting more effectively to the difficulty of the underlying identification problem.

\section{Preliminaries}\label{section:preliminaries}
We briefly review our notation and terminology for graphs. 
A mixed graph $G$ is a tuple of vertices $V(G)$, directed edges $D(G)$, bidirected edges $B(G)$, and undirected edges $U(G)$. Directed, bidirected, and undirected edges between vertices $i$ and $j$ in $G$ are denoted $i \rightarrow_G j$, $i \leftrightarrow_G j$, and $i -_G j$, respectively. We use asterisks as wildcards for edge endpoints, e.g., $i \rightarrowstar_G j$ denotes either $i \rightarrow_G j$ or $i \leftrightarrow_G j$. 
A \textit{directed cycle} in a mixed graph is a sequence of edges $i \rightarrowstar_G \ldots \rightarrowstar_G~i$ with at least one directed edge.
A mixed graph is a \textit{chain graph} if it has no directed cycles and $B(G) = \emptyset$, and a chain graph is called a \textit{directed acyclic graph} (\textit{DAG}) if we also have $U(G) = \emptyset$. An \emph{undirected graph} is a mixed graph with $B(G) = \emptyset$ and $D(G) = \emptyset$.

\textbf{DAGs and ($\cI$-)Markov equivalence.} DAGs are used to represent causal models \citep{Pearl:2009:CMR:1642718}. Each vertex $i$ is associated with a random variable $X_i$. The \emph{skeleton} of graph $D$, $\skel(D)$, is the undirected graph with the same vertices and adjacencies as $D$.
A distribution $f$ is \emph{Markov} w.r.t. a DAG $D$ if it factors as $f(X) = \prod_{i \in V(D)} f(X_i \mid X_{\pa_D(i)})$. Two DAGs $D_1$ and $D_2$ are called \textit{Markov equivalent} if all positive distributions $f$ which are Markov to $D_1$ are also Markov to $D_2$ and vice versa. 
The set of DAGs that are Markov equivalent to $D$ is the \emph{Markov equivalence class} (MEC), denoted  as $[D]$. 
$[D]$ is represented by a chain graph called the \emph{essential graph} $\cE(D)$, which has the same skeleton as $D$, with directed edges $i \rightarrow_{\cE(D)} j$ if $i \rightarrow_{D'} j$ for all $D' \in [D]$, and undirected edges otherwise.
Given an intervention $I \subseteq V(D)$, the distributions $(f^\obs, f^I)$ are \emph{I-Markov} to $D$ if $f^\obs$ is Markov to $D$ and $f^I$ factors as 
\begin{eqnarray*}
    f^I(X) = \prod_{i \not\in I} f^\obs(X_i \mid X_{\pa_D(i)}) 
    \prod_{i \in I} f^I(X_i \mid X_{\pa_D(i)})
\end{eqnarray*}
where $\pa_D(i)$ represents the set of parents of vertex $i$ in the DAG $D$.
Given a list of interventions $\cI = [I_1, \ldots, I_M]$, the set of distributions $\{f^\obs, f^{I_1}, \ldots, f^{I_M}\}$ is \emph{$\cI$-Markov} to a DAG $D$ if $(f^\obs, f^{I_m})$ is $I_m$-Markov to $D$ for $\forall m = 1 \dots M$. 
The \emph{$\cI$-Markov equivalence class} of $D$ ($\cI$-MEC), denoted as $[D]_{\cI}$,   
can be represented by the \emph{$\cI$-essential graph} $\cE_\cI(D)$ with the same adjacencies as $D$ and $i \rightarrow_{\cE_\cI(D)} j$ if $i \rightarrow_{D'} j$ for all $D' \in [D]_\cI$.

The edges which are \textit{undirected} in the essential graph $\cE(D)$, but \textit{directed} in the $\cI$-essential graph $\cE_\cI(D)$, are the edges which are learned from performing the interventions in $\cI$. 
In the special case of a single-node intervention, the edges learned are all of those \textit{incident} to the intervened node, along with any edges learned via the set of logical constraints known as Meek rules \rref{app:meek}.



\textbf{Structure of essential graphs.} We now report a known result that proves that any intervention policy can split essential graphs in components that can be oriented independently.
The \textit{chain components} of a chain graph $G$, denoted $\CC(G)$, are the 
connected components of the graph after removing its directed edges. These chain components are then clearly undirected graphs.
An undirected graph is \textit{chordal} if every cycle of length greater than 3 has a \textit{chord}, i.e., an edge between two non-consecutive vertices in the cycle.
\begin{lemma}[\citet{hauser2014two}]\label{lemma:hauser}
Every $\cI$-essential graph is a chain graph with chordal chain components.
Orientations in one chain component do not affect orientations in other components.
\end{lemma}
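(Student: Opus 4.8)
The plan is to reduce the statement to the classical structure theory of (observational) CPDAGs, using only the characterization of $\cI$-Markov equivalence. Recall that $\cE_\cI(D)$ has exactly the adjacencies of $D$, each adjacency being either directed or undirected, so trivially $B(\cE_\cI(D)) = \emptyset$; a directed edge of $\cE_\cI(D)$ is precisely one whose orientation agrees across every $D' \in [D]_\cI$. I will invoke the known characterization of $\cI$-Markov equivalence: every $D' \in [D]_\cI$ is a \emph{consistent extension} of $\cE_\cI(D)$ (obtained by orienting the undirected edges without creating a directed cycle and without creating a new immorality $a \to b \leftarrow c$ with $a,c$ nonadjacent), and conversely every such extension lies in $[D]_\cI$; moreover the immoralities of the underlying DAG and the edges cut by the interventions in $\cI$ are invariant over $[D]_\cI$. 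In particular $\cE_\cI(D)$ is closed under the Meek rules, hence a \emph{proper} chain graph, so any two adjacent vertices lying in the same chain component are joined by an undirected edge.

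\textbf{Chain graph.} Suppose $\cE_\cI(D)$ contained a directed cycle $v_1 \to v_2 \to \cdots \to v_k \to v_1$. Each of these edges is directed the same way in every member of $[D]_\cI$, so picking any $D' \in [D]_\cI$ (e.g.\ $D' = D$) yields a directed cycle in a DAG, a contradiction. Partially directed cycles are excluded because $\cE_\cI(D)$ is closed under Meek rules (equivalently, a partially directed cycle could be completed to a directed cycle in some consistent extension). Hence $\cE_\cI(D)$ is a chain graph, and each chain component $\tau$ is an undirected graph.

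\textbf{Chordal chain components.} Fix a chain component $\tau$ and any $D' \in [D]_\cI$, and let $D'_\tau$ be the subgraph of $D'$ induced on $V(\tau)$; this is a DAG whose skeleton is $\tau$. I claim $D'_\tau$ has no immorality: if $a \to b \leftarrow c$ with $a \not\sim c$ occurred in $D'_\tau$, then since $D'_\tau$ is an induced subgraph we also have $a \not\sim c$ in $D'$, so this is an immorality of $D'$; being invariant over $[D]_\cI$, it would force $a \to b$ and $c \to b$ to be directed in $\cE_\cI(D)$, contradicting that $a,b,c \in V(\tau)$ lie in a single chain component whose internal edges are undirected. Now an immorality-free DAG has a chordal skeleton: the parent set of every vertex is a clique (else two nonadjacent parents would form an immorality at that vertex), so processing the vertices in reverse topological order eliminates, at each step, a sink whose remaining neighbours are exactly its parents --- a clique --- i.e.\ this is a perfect elimination ordering. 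Hence $\tau = \skel(D'_\tau)$ is chordal.

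\textbf{Independence of components, and the main obstacle.} Finally, regarding each $D' \in [D]_\cI$ as an orientation of the undirected part of $\cE_\cI(D)$, the family of valid orientations factorizes as a product over the chain components, so fixing the orientations inside one component places no constraint on another. Both obstructions a consistent extension must avoid are confined to a single component: a new immorality has both its edges incident to its collider (hence in one component), and --- because the directed edges of the chain graph $\cE_\cI(D)$ induce an acyclic ordering \emph{among} its chain components --- a directed cycle can only be formed using undirected edges internal to one component. Hence re-orienting the undirected edges within one component never affects the orientability of another, which is the asserted independence. I expect the substantive step to be establishing the consistent-extension characterization of $[D]_\cI$ directly from the $\cI$-Markov factorization together with the invariance of immoralities and cut edges; once that is available, every remaining step mirrors the classical argument for observational CPDAGs.
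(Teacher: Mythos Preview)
The paper does not supply its own proof of this lemma: it is stated in the preliminaries with an explicit citation to \citet{hauser2014two} and is used as a black box thereafter (none of the appendices revisit it). So there is nothing in the paper to compare your argument against.

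As a standalone sketch, your argument is sound and tracks the standard route taken in the cited reference. The chordality step (no immorality in the induced DAG on a chain component $\Rightarrow$ perfect elimination ordering $\Rightarrow$ chordal skeleton) is exactly the classical argument. The independence-of-components step is also correct once one has that $\cE_\cI(D)$ is a chain graph in the strong sense (no \emph{semi}-directed cycles), since then the chain components admit an acyclic order and any would-be directed cycle or new immorality created by orienting undirected edges is confined to a single component. Your only slightly loose point is the exclusion of semi-directed cycles: saying ``closed under Meek rules'' is the right hook, but the actual deduction (that a directed edge entering a chain component forces, via repeated Meek rule~1, all incident undirected edges to orient outward, hence no semi-directed cycle can exist) deserves one more sentence if you want the sketch to be self-contained. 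You correctly flag that the real work sits in the consistent-extension characterization of $[D]_\cI$, which is precisely the content of the Hauser--B\"uhlmann result being cited.
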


\begin{defn}
    A DAG whose essential graph has a single chain component is called a \textit{moral DAG}.
\end{defn}

In many of the following results we will consider moral DAGs, since once we can orient moral DAGs we can easily generalize to general DAGs through these results. 

\textbf{Intervention Policies.} An \emph{intervention policy} $\pi$ is a (possibly randomized) map from ($\cI$-)essential graphs to interventions. An intervention policy is \textit{adaptive} if each intervention $I_m$ is decided based on information gained from previous interventions, and \textit{passive} if the whole set of interventions $\cI$ is decided prior to any interventions being performed. An intervention is \textit{noiseless} if the intervention set $\cI$ collapses the set of compatible graphs exactly to the $\cI$-MEC, while \textit{noisy} interventions simply induce a posterior update on the distribution over compatible graphs. Most policies assume that the MEC is known (e.g., it has been estimated from observational data) and interventions are noiseless; this is true of our policy too.
Moreover, we focus only on interventions on a single target variable, i.e. \emph{single-node interventions}.
We discuss previous work on intervention policies in \rref{section:related-work}.

\section{Universal lower-bound in the number of single-node interventions}\label{sec:dct}

In this section we prove a lower-bound on any possible single-node policy (\rref{thm:clique-lower-bound}) by decomposing the complete orientation of a DAG in terms of the complete orientation of smaller independent subgraphs, called \emph{residuals} (\rref{thm:vis-characterization}), defined on a novel graphical structure, \emph{directed clique trees} (DCTs). 
We provide all proofs in the Appendix.

\begin{figure*}[t]
    \centering
    \includegraphics[width=.9\textwidth]{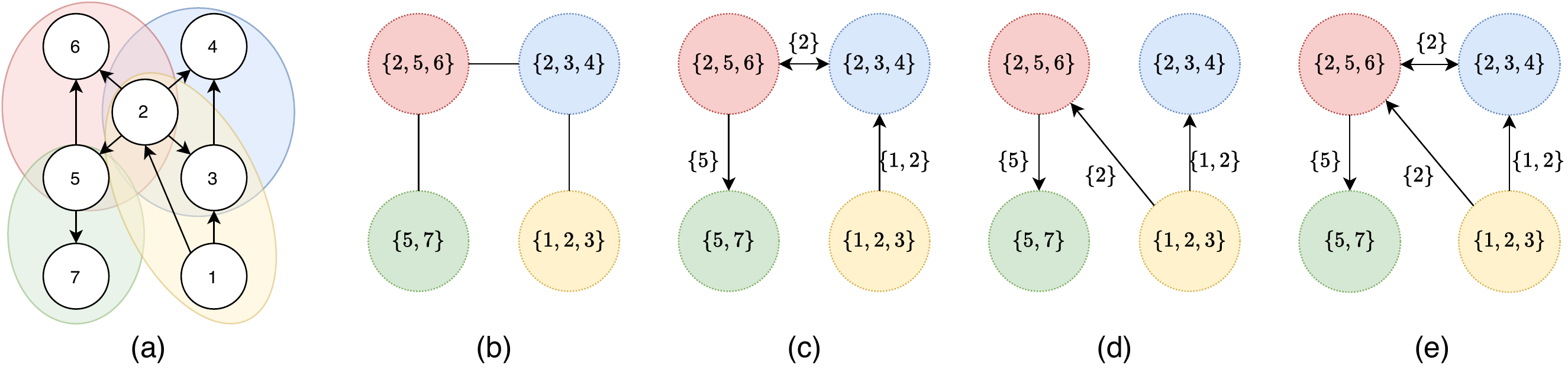}
    \caption{A moral DAG (a), one of its clique trees (b), its two DCTs (c-d) and the DCG (e).}
    \label{fig:dct-example}
\end{figure*}


First, we review the standard definitions of clique trees and clique graphs for undirected chordal graphs (see also \citep{galinier1995chordal}).
A \emph{clique} $C \subseteq V(G)$ is a subset of the nodes with an edge between each pair of nodes. A clique $C$ is \emph{maximal} if $C \cup \{v\}$ is not a clique for any $v \in V(G) \setminus C$. The set of maximal cliques of $G$ is denoted $\cC(G)$. The \textit{clique number} of $G$ is $\omega(G) = \max_{C \in \cC} |C|$. 
A \emph{clique tree} (aka a junction tree) $T_G$ of a chordal graph is a tree with vertices $\cC(G)$ that satisfies the \emph{induced subtree property}, i.e., for any $v \in V(G)$, the induced subgraph on the set of cliques containing $v$ is a tree. 
A chordal graph can have multiple clique trees, so we denote
the set of all clique trees of $G$ as $\cT(G)$. A \textit{clique graph} $\Gamma_G$ is the graph union of all clique trees, i.e. the undirected graph with $V(\Gamma_G) = \cC(G)$ and $U(\Gamma_G) = \cup_{T \in \cT(G)} U(T)$.  A useful characterization of the clique trees of $G$ are as the max-weight spanning trees of the weighted clique graph $W_G$ \citep{koller2009probabilistic}, which is a complete graph over vertices $\cC(G)$, with the edge $C_1 -_{W_G} C_2$ having weight $|C_1 \cap C_2|$.

Given a moral DAG $D$, we can trivially define its clique trees $\cT(D)$ as the clique trees of its skeleton $G = \skel(D)$, i.e. $\cT(G)$.
For example, in \rref{fig:dct-example} (a) we show a DAG, where we have chosen a color for each of the cliques, while in \rref{fig:dct-example} (b) we show one of its clique trees.
We now define a directed counterpart to clique trees based on the orientations in the underlying DAG:

\begin{defn}\label{def:dct}
    A \emph{directed clique tree} $T_D$ of a moral DAG $D$ has the same vertices and adjacencies as a clique tree $T_G$ of $G = \skel(D)$. For each ordered pair of adjacent cliques $C_1 \staredge C_2$ we orient the edge mark of $C_2$ as:
    \begin{compactitem}
    \item $C_1 \rightarrowstar C_2$, if $\forall v_{12} \in C_1 \cap C_2$ and $\forall v_2 \in C_2 \setminus C_1$, we have $v_{12} \rightarrow_D v_2$ in the DAG $D$;
    \item  $C_1 \startail C_2$ otherwise, i.e. if there exists at least one incoming edge from $C_2 \setminus C_1$ into $C_1 \cap C_2$,
    \end{compactitem}
     
    where we recall that $*$ denotes a \textit{wildcard} for an edge. Thus, the above conditions only decide the presence or absence of an arrowhead at $C_2$; the presence or absence of an arrowhead at $C_1$ is decided when considering the reversed order.
\end{defn}

\begin{figure}[!t]
\begin{minipage}[t]{0.48\textwidth}
\centering
    \includegraphics[width=\textwidth]{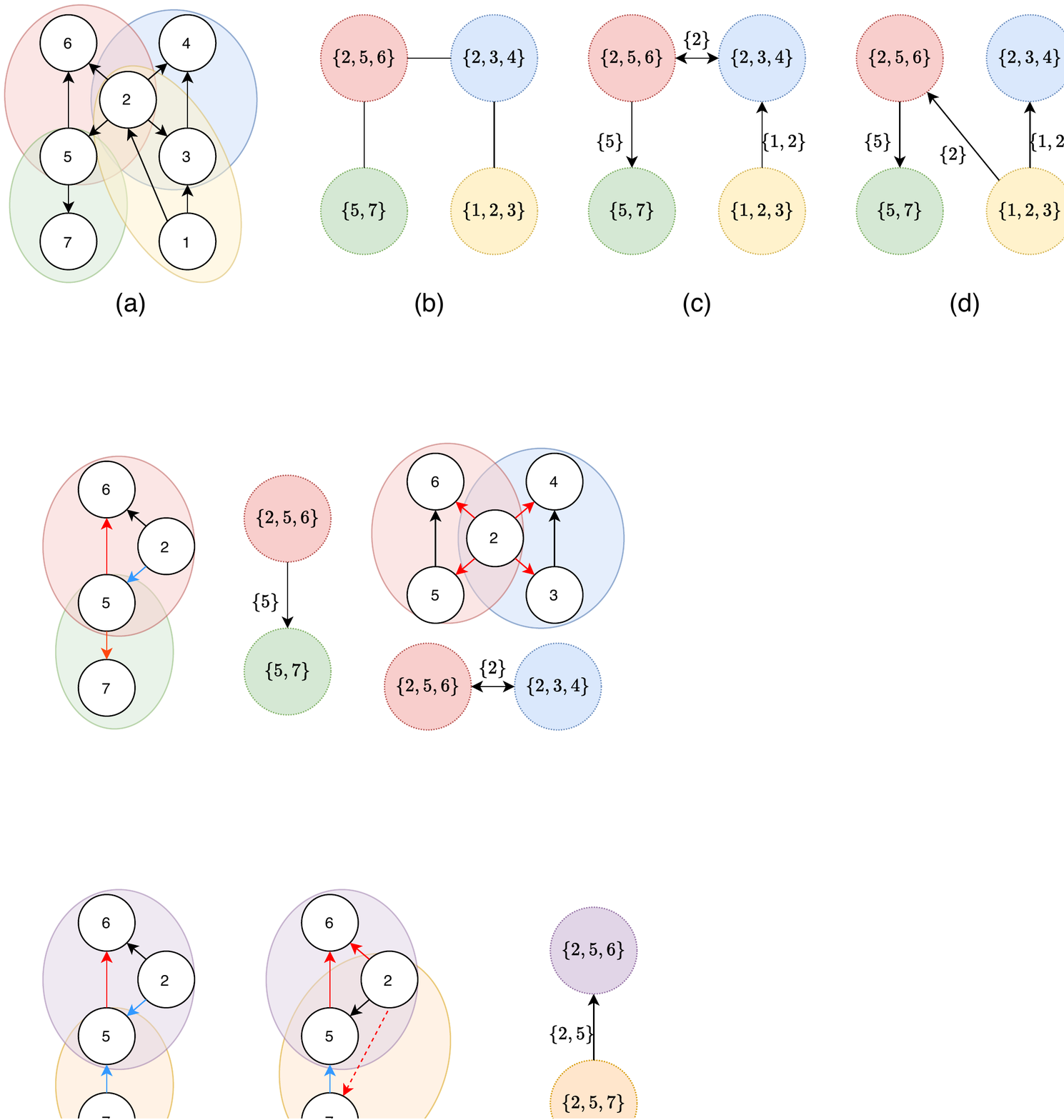}
    \caption{Examples of edge orientations.}
    \label{fig:edges-example}
\end{minipage} 
\begin{minipage}[t]{0.48\textwidth}
\centering
    \includegraphics[width=\textwidth]{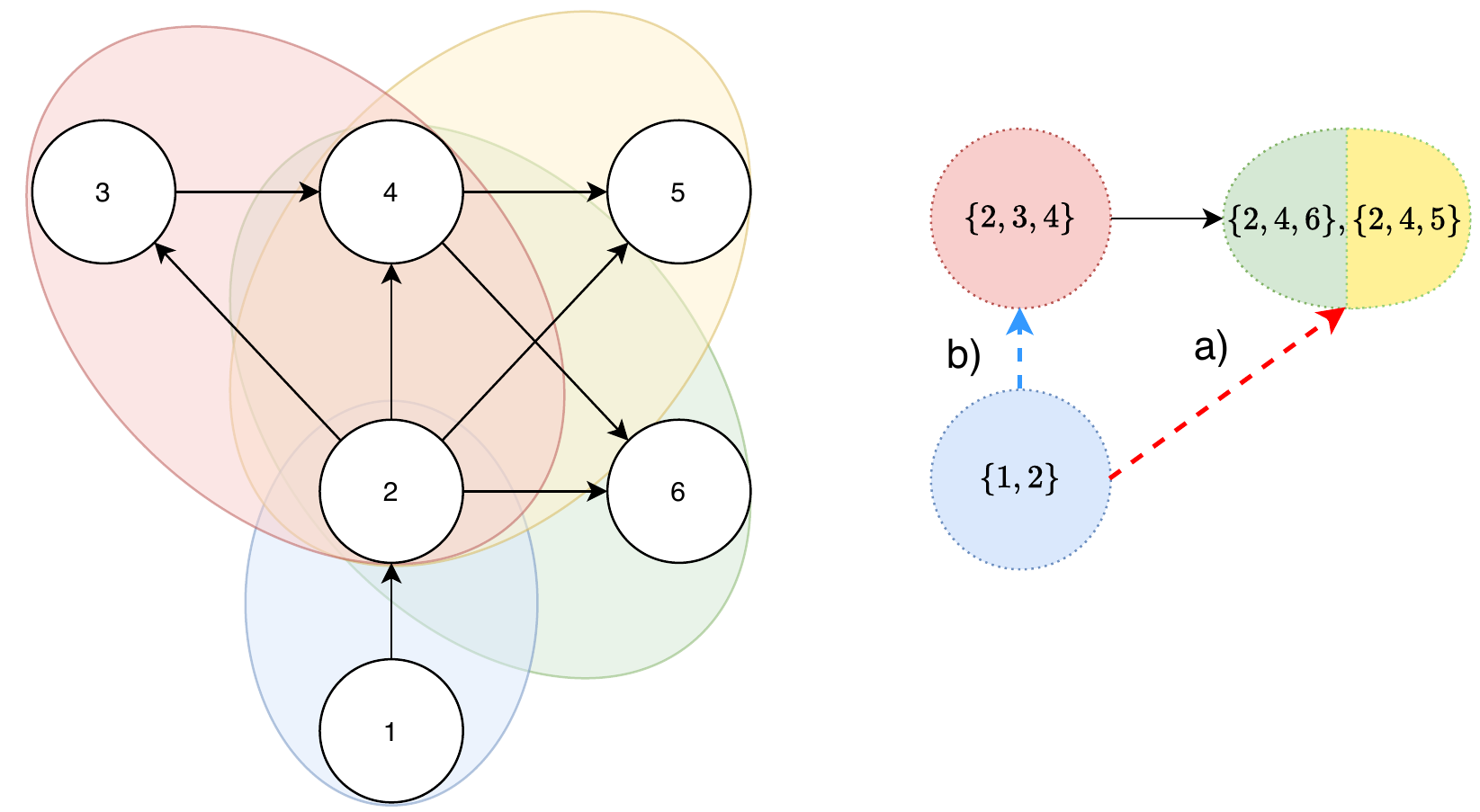}
    \caption{A DAG and its CDCTs with (using only edge a) and without arrow-meets (edge b).}
    \label{fig:conflicting-sources}
\end{minipage}
\end{figure}

A DAG can have multiple directed clique trees (DCTs), as shown in \rref{fig:dct-example} (c) and (d). In figures, we annotate edges with the intersection between cliques. \rref{fig:dct-example} (c) represents the directed clique tree corresponding to the standard clique tree in \rref{fig:dct-example} (b). In \rref{fig:edges-example} we show in detail the orientations for two of the directed clique edges following \rref{def:dct}, the red edges are outcoming from the clique intersection, while the blue edge is incoming in the intersection.
\rref{def:dct} also implies each edge that is shared between two different clique trees has a unique orientation (since it is based on the underlying DAG), so we can define the directed clique graph (DCG) $\Gamma_D$ of a moral DAG $D$ as the graph union of all directed clique trees of $D$. We show an example of a DCG in \rref{fig:dct-example}(e).
%
As can be seen in the examples in \rref{fig:dct-example}, DCTs can contain directed and bidirected edges, and, as we prove in Appendix \ref{app:proof-collider-implies-inclusion}, no undirected edgees. We define the bidirected components of a DCT as:
\begin{defn}\label{def:bidircomp}
The \emph{bidirected components} of $T_D$, $\cB(T_D)$, are the connected components of $T_D$ after removing directed edges.
\end{defn}

Another structure that can happen in a DCT is when two arrows meet at the same clique.
To avoid confusing associations with colliders in DAGs, we call these structures in DCTs \textit{arrow-meets}.
Arrow-meets will prove to be challenging for our algorithms, so we introduce \emph{intersection incomparability} and prove that in case it holds there can be no arrow-meets:
\begin{defn}\label{def:intersection-incomparable}
A pair of edges $C_1 -_{T_G} C_2$ and $C_2 -_{T_G} C_3$ are \emph{intersection comparable} if $C_1 \cap C_2  \subseteq C_2 \cap C_3$ or $C_1 \cap C_2  \supseteq C_2 \cap C_3$. Otherwise they are \emph{intersection incomparable}.
\end{defn}

For example, in \rref{fig:dct-example} (e), the edges 
$\{2, 5, 6\} {\leftrightarrow} \{2,3,4\}$ and 
$\{2,3,4\} {\leftarrow} \{1,2,3\}$ 
are intersection comparable, since $\{2\} \subset \{1,2\}$, while 
$\{2,5,6\} {\leftrightarrow} \{2,3,4\}$ 
and $\{2,5,6\} {\rightarrow} \{5, 7\}$
are intersection incomparable, since $\{2\} \not \subseteq \{5\}$ and $\{5\} \not \subseteq \{2\}$.

\begin{restatable}[]{prop}{arrowMeetsProp}\label{prop:collider-implies-inclusion}
Suppose $C_1 \rightarrowstar_{T_D} C_2$ and $C_2 \leftarrowstar_{T_D} C_3$ in $T_D$. Then these edges are intersection comparable.
Equivalently in the contrapositve, if $C_1 \rightarrowstar_{T_D} C_2$ and $C_2 \staredge_{T_D} C_4$ are intersection incomparable, 
we can immediately deduce that $C_2 \rightarrow_{T_D} C_4$.
\end{restatable}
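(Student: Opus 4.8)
The plan is a short proof by contradiction using acyclicity of $D$ together with a direct application of \rref{def:dct}. Write $S_{12} = C_1 \cap C_2$ and $S_{32} = C_3 \cap C_2$ for the two separators that meet at $C_2$; note that both are subsets of $C_2$. Saying that the edges $C_1 \rightarrowstar_{T_D} C_2$ and $C_2 \leftarrowstar_{T_D} C_3$ are intersection incomparable means precisely that neither of $S_{12}$, $S_{32}$ contains the other, so I can fix witnesses $a \in S_{12} \setminus S_{32}$ and $b \in S_{32} \setminus S_{12}$, which are necessarily distinct.

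The next step is to place $a$ and $b$ relative to $C_1$ and $C_3$. Since $a \in C_2$ and $a \notin S_{32} = C_3 \cap C_2$, we must have $a \notin C_3$, hence $a \in C_2 \setminus C_3$; symmetrically $b \in C_2 \setminus C_1$. Now I invoke \rref{def:dct} twice. From $C_1 \rightarrowstar_{T_D} C_2$, every $v_{12} \in C_1 \cap C_2$ and $v_2 \in C_2 \setminus C_1$ satisfy $v_{12} \rightarrow_D v_2$; taking $v_{12} = a$ and $v_2 = b$ gives $a \rightarrow_D b$. From $C_3 \rightarrowstar_{T_D} C_2$, every $v_{32} \in C_3 \cap C_2$ and $v_2 \in C_2 \setminus C_3$ satisfy $v_{32} \rightarrow_D v_2$; taking $v_{32} = b$ and $v_2 = a$ gives $b \rightarrow_D a$. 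Together, $a \rightarrow_D b$ and $b \rightarrow_D a$ form a directed $2$-cycle, contradicting that $D$ is a DAG; hence $S_{12}$ and $S_{32}$ are comparable. For the contrapositive formulation, suppose $C_1 \rightarrowstar_{T_D} C_2$ and the edge between $C_2$ and $C_4$ are intersection incomparable: if that edge had an arrowhead at $C_2$, then together with $C_1 \rightarrowstar_{T_D} C_2$ it would be an arrow-meet at $C_2$ between intersection-incomparable edges, which is impossible by what was just shown, so the edge has a tail at $C_2$; since directed clique trees contain no undirected edges (established separately in this appendix), the edge must be $C_2 \rightarrow_{T_D} C_4$.

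I do not expect a genuine obstacle here. The only care needed is (i) the set-membership bookkeeping that forces $a \in C_2 \setminus C_3$ and $b \in C_2 \setminus C_1$ — getting which clique plays which role wrong would break the application of \rref{def:dct} — and (ii) for the last line of the contrapositive, reusing the separate fact that DCTs have no undirected edges rather than re-deriving it.
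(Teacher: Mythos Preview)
Your proof is correct and very close to the paper's. Both arguments pick the same witnesses $a \in (C_1 \cap C_2)\setminus C_3$ and $b \in (C_2 \cap C_3)\setminus C_1$ and apply \rref{def:dct} at the $C_1$--$C_2$ edge to get $a \rightarrow_D b$. The only difference is in how the second edge is handled: you assume $C_3 \rightarrowstar_{T_D} C_2$ and apply \rref{def:dct} once more to obtain $b \rightarrow_D a$, yielding a $2$-cycle and contradicting acyclicity directly; the paper instead reads $a \rightarrow_D b$ as exhibiting an edge from $C_2 \setminus C_3$ into $C_2 \cap C_3$, so the $C_2$--$C_3$ edge has a tail at $C_2$, and then invokes \rref{prop:radiating-intersection} (no undirected edges) to conclude $C_2 \rightarrow_{T_D} C_3$. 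Your version is marginally more self-contained for the first statement, since it does not need the no-undirected-edges proposition there; both routes need that proposition for the contrapositive restatement, exactly as you note.
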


Bidirected components do not have a clear ordering, so we contract them into single nodes in a contracted DCT (CDCT), and prove we can always construct a tree-like CDCT for any moral DAG:
\begin{defn}\label{def:contracted_dct}
The \emph{contracted directed clique tree (CDCT)} $\tT_D$ of a DCT $T_D$ is a graph on the vertex set $B_1, B_2 \dots B_K \in \cB(T_D)$ with $B_1 \rightarrow_{\tT_D} B_2$ if $C_1 \rightarrow_{T_D} C_2$ for any clique $C_1 \in B_1$ and $C_2 \in B_2$.
\end{defn}

\begin{restatable}[]{lemma}{uniqueSourceLemma}\label{lemma:construction-proof}
For any moral DAG $D$, one can always construct a CDCT with no arrow-meets.
\end{restatable}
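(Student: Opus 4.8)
First I would observe that the CDCT $\tT_D$ of \emph{any} clique tree $T_G$ is automatically a tree: by \rref{def:bidircomp} the bidirected components are the connected pieces of $T_D$ obtained by deleting directed edges, hence they are vertex-disjoint connected subtrees of the tree $T_G$, and contracting vertex-disjoint connected subtrees of a tree yields a tree; using that $T_D$ is a chain graph (no directed cycles — which also rules out two bidirected components being joined by directed edges in both directions, so the contraction in \rref{def:contracted_dct} is well defined) finishes this. Since a finite directed tree in which no vertex has in-degree $\ge 2$ is precisely an out-tree rooted at a unique source, the lemma is equivalent to the statement: there is a clique tree $T_G$ and a bidirected component $B^\star$ such that every edge of $\tT_D$ is oriented away from $B^\star$.

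\textbf{Step 2: anchor the root using the topology of $D$.} I would induct on $|\cC(G)|$, the base case being immediate. For the inductive step, let $s$ be a source vertex of $D$ (i.e. $\pa_D(s)=\emptyset$). Because $s$ has no parents, $s$ points to every neighbour of $s$, so in any maximal clique $C \ni s$ the vertex $s$ points to all of $C\setminus\{s\}$; consequently, along any DCT edge $C \staredge C'$ with $s\in C\cap C'$, the vertex $s$ never blocks an arrowhead at $C'$ — intuitively, the cliques containing $s$ want to sit at the top. I would pick an extremal maximal clique $C^\star\ni s$ (e.g. a leaf of the clique graph restricted to cliques containing $s$), peel $C^\star$ together with its private vertices off to get a smaller connected chordal skeleton $G'$ with $|\cC(G')|=|\cC(G)|-1$, apply the inductive hypothesis to obtain an arborescent CDCT of the induced sub-DAG on $G'$, and reattach $C^\star$ along a max-weight clique-graph edge. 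By \rref{def:dct} the attaching edge is either bidirected — so $C^\star$ is swallowed into an existing bidirected component — or of the form $(\text{parent})\rightarrowstar C^\star$ — so $C^\star$ becomes a new leaf below the existing root; in both cases no vertex of in-degree $2$ is created, \emph{provided} the reattachment does not flip the orientation of any previously placed edge.

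\textbf{Step 3: the obstacle.} The crux, and the step I expect to be hardest, is exactly that proviso: ruling out that reattaching $C^\star$ (or, in an alternative non-inductive route, adding the next clique in a max-weight spanning-tree construction run Prim-style from a clique containing a source and with ties broken by a topological order on cliques) forces a \emph{second} incoming arrow at an already-oriented clique. Here I would invoke \rref{prop:collider-implies-inclusion}: any competing pair of arrowheads meeting at a clique $C$ must be intersection comparable, say $C_1\cap C\subsetneq C\cap C_3$. I would then argue that whenever this occurs one can reroute the clique tree so that $C_1$ attaches to $C$ through the \emph{larger} intersection $C\cap C_3$; since the clique trees are exactly the max-weight spanning trees of the weighted clique graph $W_G$ and $|C\cap C_3|\ge|C_1\cap C|$, the rerouted tree is still a valid clique tree, and the offending directed edge is replaced by one whose arrowhead points the ``right'' way, or by a bidirected edge that is subsequently contracted away. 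Making this precise — showing the rerouting terminates, that it does not reintroduce arrow-meets elsewhere, and that it is compatible with the running-intersection property — is the main technical content; the source-anchored choice in Step 2 is what guarantees the rerouting process has a consistent place to ``point'', so that it converges to a single root $B^\star$ rather than to a forest of local out-trees.
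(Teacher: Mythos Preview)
Your Step~2 contains a genuine gap, and the ``source-anchoring'' it sets up for Step~3 does not do the work you want. First, the orientation claim is internally inconsistent: you correctly argue that cliques containing the source $s$ ``want to sit at the top,'' yet then assert the reattached $C^\star$ becomes a leaf \emph{below} the existing root. Already with two cliques this fails: skeleton the triangle $\{1,2,3\}$ plus edge $3\text{--}4$, DAG $1\to 2\to 3\to 4$, $1\to 3$; then $s=1$ lies only in $C^\star=\{1,2,3\}$, and the DCT edge is $C^\star\to\{3,4\}$, so $C^\star$ is the root. More seriously, the peeling itself can fail to reduce $|\cC|$: the source $s$ need not lie in any leaf of the \emph{full} clique tree, and a leaf of the $s$-subtree may have no private vertices at all. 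Take a clique-tree path $A\text{--}B\text{--}C\text{--}D\text{--}E$ with $A=\{1,2\}$, $B=\{2,3,s\}$, $C=\{3,4,s\}$, $D=\{4,5,s\}$, $E=\{5,6\}$ and any moral orientation with source $s$; the $s$-subtree leaves are $B$ and $D$, and both have empty private-vertex sets, so your induction never moves. (Minor aside on Step~1: $T_D$ is not a chain graph---it has bidirected edges---but the conclusion that $\tT_D$ is a tree holds for the simpler reason that any two disjoint subtrees of the tree $T_G$ are joined by at most one edge.)

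Your Step~3 is the correct idea, and it is essentially what the paper does---but the paper's execution sidesteps both the termination worry and the ``rerouting creates new arrow-meets'' worry by doing the rerouting \emph{during} construction rather than after. Concretely, the paper runs Kruskal on the weighted clique graph $W_G$: if the candidate max-weight edge $C_1\rightarrowstar C_2$ would land a second incoming arrow on $C_2$'s bidirected component, follow that component upstream to a currently parentless clique $C_K$ via $C_2\leftarrowstar C_3\leftarrowstar\cdots\leftarrow C_K$. Because each edge on this chain was chosen \emph{earlier} in Kruskal it has weight $\ge |C_1\cap C_2|$, so \rref{prop:collider-implies-inclusion} plus the running-intersection property give $C_1\cap C_2=C_1\cap C_3=\cdots=C_1\cap C_K$; hence $C_1\text{--}C_K$ is an equal-weight alternative that attaches at a parentless clique and creates no conflict. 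This ``slide to the top of the chain'' replaces your source-anchoring entirely, and since it is just Kruskal with a tie-break, termination and the max-weight-spanning-tree (hence clique-tree) property come for free---precisely the points you flag as unresolved.
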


In particular, one can adapt Kruskal's algorithm for finding a max-weight spanning tree to construct a DCT from the weighted clique graph and then contract it, as shown in detail in \rref{alg:dct-construction} in Appendix \ref{app:construction-proof}. In \rref{fig:conflicting-sources} we show an example of a CDCT with arrow-meets (represented by the black edge and the edge labelled ``a'') and its equivalent no arrow-meets version (represented by the black edge and the edge ``b'') . Since we can always construct a CDCT with no arrow-meets, we assume w.l.o.g. that the CDCT is a tree.
%
The CDCT allows us to define a decomposition of a moral DAG into independently orientable components. We call these components \emph{residuals}, since they
extend the notion of residuals in rooted, undirected clique trees \citep{vandenberghe2015chordal}. Formally:
%

\begin{figure}[!t]
\begin{minipage}[b]{0.57\textwidth}
    \centering
    \includegraphics[width=0.95\textwidth]{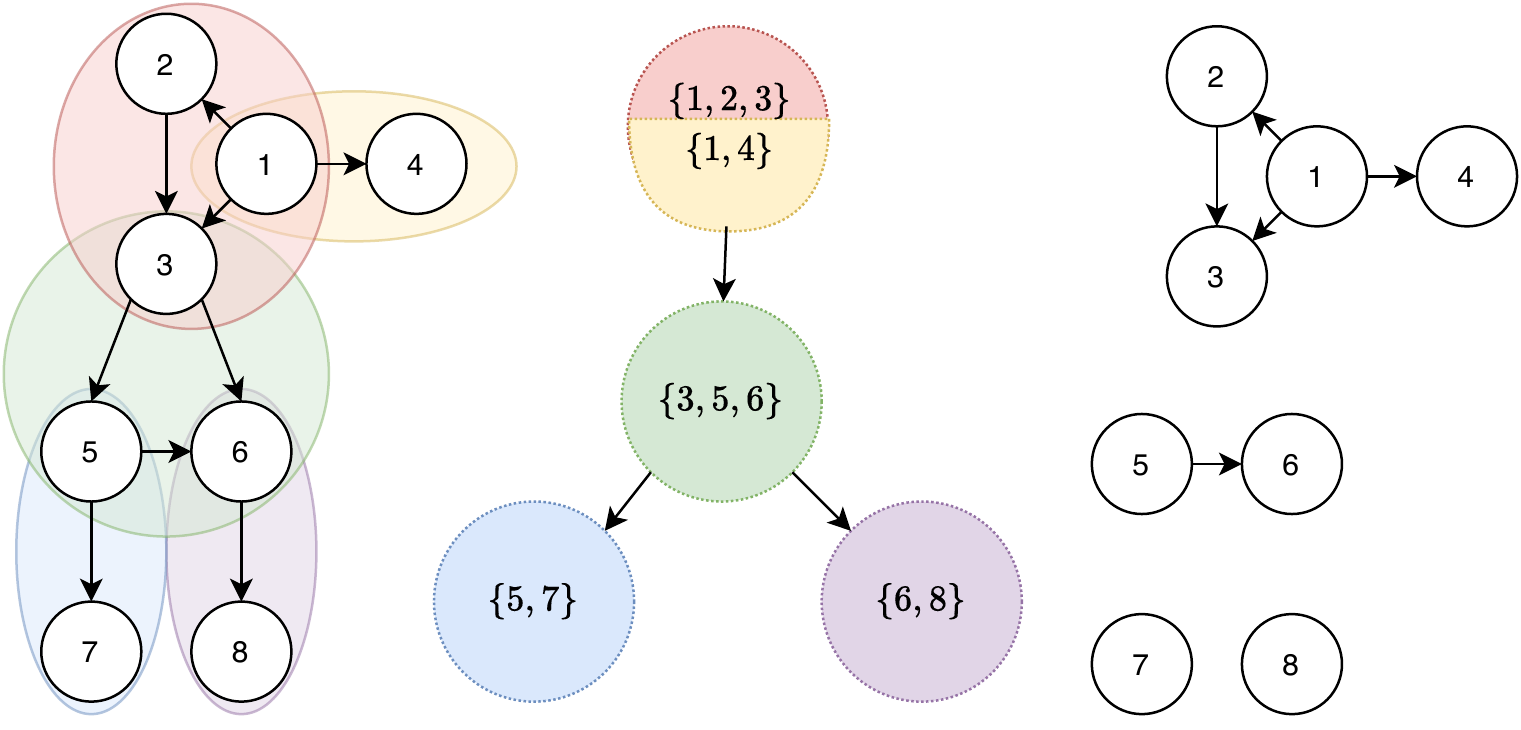}
    \caption{A DAG, its CDCT and its residuals. }
    \label{fig:residuals-example}
\end{minipage}
\begin{minipage}[b]{0.42\textwidth}
    \centering
    \includegraphics[width=0.9\textwidth]{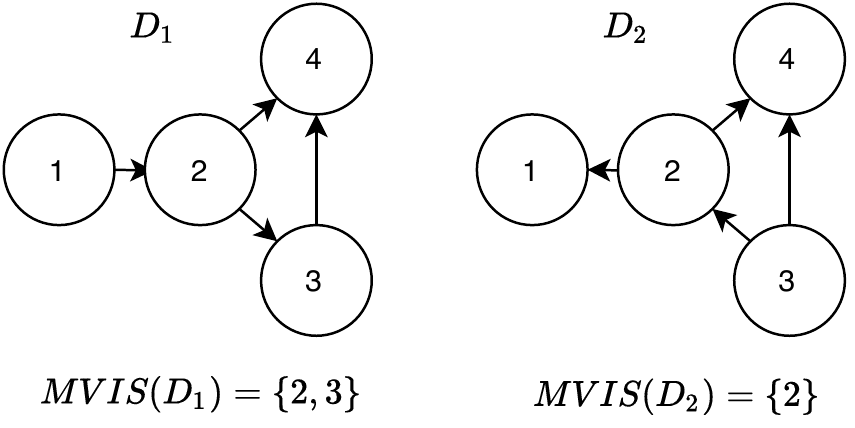}
    \caption{DAGs in the same MEC with $m(D_1) \neq m(D_2)$. }
    \label{fig:different_mvis}
\end{minipage}
\end{figure}

\begin{defn}\label{def:residual}
For a tree-like CDCT $\tT_D$ of a moral DAG $D$, the \emph{residual} of its node $B$
is defined as $\Res_{\tT_D}(B) = D|_{B \setminus P}$, where $P$ is its parent in $\tT_D$ (or if there is none, $P = \emptyset$) and $D|_{B \setminus P}$ is the induced subgraph of $D$ over the subset of $V(D)$ that are assigned to $B$ but not to $P$.
We denote the set of all residuals of $\tT_D$ by $\Residuals(\tT_D)$.
\end{defn}
Intuitively this describes the subgraphs in which we cut all edges that are captured in the CDCT, as shown in \rref{fig:residuals-example}. 
We now generalize our results from a moral DAG to a general DAG.
Surprisingly, we show that orienting all of the residuals for all chain components in the essential graph is both necessary and sufficient to \emph{completely orient any DAG}. 
We start by introducing a \emph{VIS}:
\begin{defn}\label{def:vis}
Given a general DAG $D$, a \emph{verifying intervention set ({VIS})} is a set of single-node interventions $\cI$ that fully orients the DAG starting from an essential graph, i.e. $\cE_\cI(D) = D$. A minimal VIS (MVIS) is a VIS of minimal size. We denote the size of the minimal VIS for $D$ as $m(D)$.
\end{defn}

For each DAG there are many possible VISes. A trivial VIS for any DAG is just the set of all of its nodes. In general, we are more interested in MVISes, which are also not necessarily unique for a DAG. For example, the DAG in \rref{fig:residuals-example} has four MVISes: $\{1,3,5\}$, $\{1,3,6\}$, $\{2,4,5\}$, and $\{2, 4, 6\}$.

We now show that finding a VIS for any DAG $D$ can be decomposed twice: first we can create a separate task of finding a VIS for each of the chain components $G$ of its essential graph $\cE(D)$, and then for each $G$ we can create a tree-like CDCT and find independently a VIS for each of its residuals:

\begin{restatable}[]{thm}{visDecompositionTheorem}\label{thm:vis-characterization}
    A single-node intervention set is a VIS for any general DAG $D$ iff it contains VISes for each residual $R \in \Residuals(\tT_G)$ for all chain components $G \in \CC(\cE(D))$ of its essential graph $\cE(D)$. 
\end{restatable}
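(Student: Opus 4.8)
The plan is to prove the theorem in two stages, corresponding to the two decompositions mentioned in the statement. The first stage reduces from a general DAG to its moral components using \rref{lemma:hauser}, and the second stage is the real content: characterizing VISes of a moral DAG in terms of the residuals of a tree-like CDCT.

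\emph{Stage 1 (chain components).} By \rref{lemma:hauser}, the essential graph $\cE(D)$ is a chain graph whose chain components $G \in \CC(\cE(D))$ are chordal, and orientations in one chain component never affect another. Since the directed edges of $\cE(D)$ are already oriented, an intervention set $\cI$ is a VIS for $D$ iff, restricted to each chain component $G$, it fully orients $G$ using only Meek rules and the edges incident to intervened nodes — but here I must be careful: an intervention on a node $v \in G$ can in principle propagate orientations \emph{out} of $G$ and back in via Meek rules. I would argue that this does not happen: because the directed edges between chain components are fixed across all DAGs in the MEC, no new orientation can be learned on them, and the Meek-rule closure decomposes along chain components (this is exactly the content of the ``orientations in one chain component do not affect other components'' half of \rref{lemma:hauser}, applied to $\cI$-essential graphs). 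So $\cI$ is a VIS for $D$ iff $\cI \cap V(G)$ is a VIS for $D|_G$ for each $G$, and we may assume $D$ is a moral DAG.

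\emph{Stage 2 (residuals of a moral DAG).} Fix a moral DAG $D$ with chordal skeleton $G$ and a tree-like CDCT $\tT_D$ (which exists by \rref{lemma:construction-proof}), with residuals $\Residuals(\tT_D)$ partitioning $V(D)$. The key structural fact I would establish first is: for each bidirected component (CDCT node) $B$ with parent $P$, once we know the orientations of all edges in $D$ incident to $B \cap P$ (the separator), \emph{all} edges of $D$ between $B \cap P$ and $B \setminus P$ are already oriented — this is essentially \rref{def:dct} together with \rref{prop:collider-implies-inclusion} guaranteeing no arrow-meets, so the edges ``pointing into'' $B$ through the separator are consistently oriented and the residual $\Res_{\tT_D}(B) = D|_{B\setminus P}$ is the only ``undetermined'' piece associated with $B$. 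For \textbf{sufficiency}, I would induct down the tree $\tT_D$ from the root: given VISes for each residual, intervening on the VIS of the root residual orients that residual; then for each child $B$, the separator edges into $B$ are oriented (because the separator $B\cap P$ lies in the already-oriented parent region, and \rref{def:dct}/no-arrow-meets forces their direction once the parent side is known, with Meek rule propagation filling in the rest), and then the child's residual VIS orients the child's residual; continue. Every edge of $D$ is either inside some residual or is a separator edge handled by this propagation, so the union of residual VISes is a VIS. For \textbf{necessity}, I would show that each residual must be ``self-verified'': take any edge $u - v$ inside a residual $\Res_{\tT_D}(B)$; I claim no set of interventions avoiding a VIS-for-this-residual can orient it. Concretely, I would exhibit two DAGs $D_1, D_2 \in [D]$ agreeing with $D$ everywhere except reversing some orientations \emph{within} that residual, such that neither is distinguished by the given interventions — using the fact that residuals are induced subgraphs glued only along separators whose incident orientations are forced, so local reorientations inside a residual that respect chordality and acyclicity stay in the MEC and are invisible to interventions outside a VIS for that residual.

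\emph{Main obstacle.} The hard part will be the necessity direction, specifically showing that the residuals are \emph{independently} orientable — i.e., that an intervention inside one residual cannot, through a cascade of Meek rules, substitute for an intervention required in a sibling or descendant residual. This requires a careful invariant: after any sequence of interventions, the set of oriented edges is exactly (edges incident to separators on the oriented side) $\cup$ (edges inside residuals whose internal VIS-requirement has been met) $\cup$ (Meek closure thereof), and crucially the Meek closure does not ``leak'' orientation into an unintervened residual's interior. Establishing that the Meek-rule closure respects the residual partition — presumably leaning on chordality of $G$, the induced-subtree property of clique trees, and the no-arrow-meets property from \rref{prop:collider-implies-inclusion} — is where the technical weight lies; I would expect to need a lemma stating that in a chordal graph, the separator $B \cap P$ ``shields'' the residual interior from orientation information arriving through the rest of the graph.
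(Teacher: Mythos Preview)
Your proposal is correct and follows the same two-stage skeleton as the paper: reduce to moral DAGs via \rref{lemma:hauser}, then handle sufficiency by induction down $\tT_D$ from the root. The one substantive difference is in how the ``shielding'' lemma for necessity is realized. You propose to construct, for each residual not covered by a VIS, two DAGs in $[D]$ differing only inside that residual and indistinguishable under $\cI$. The paper instead introduces a concrete object---the \emph{residual essential graph} $\cE_\res(D)$, which has the skeleton of $D$ with exactly the inter-residual edges directed---and proves directly that $\cE_\res(D)$ is closed under the Meek rules (\rref{lemma:residual-essential-graph-complete}). This immediately makes each residual a chain component of $\cE_\res(D)$, so by the chain-component independence of \rref{lemma:hauser} an intervention in one residual can never orient an edge in another; necessity then follows by a monotonicity argument (even granting the extra orientations in $\cE_\res(D)$, you still need a VIS for each residual). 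Your two-DAG construction would ultimately need the same Meek-closure fact to verify that the reoriented DAGs stay in $[D]$ and remain indistinguishable, so the paper's route is a cleaner packaging of exactly the lemma you anticipated; it also yields \rref{lemma:dct-implies-res} (the inter-residual orientations are determined by $T_D$ alone) as a byproduct, which is what makes the sufficiency induction go through without further case analysis.
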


An MVIS of $D$ will then contain only the MVISes of each residual of each chain component. An algorithm using this decomposition to compute an MVIS is given in Appendix \ref{app:brute-force-mvis}. 
In general, the size of an MVIS of $D$ cannot be calculated from just its essential graph, as shown by the two graphs in \rref{fig:different_mvis}. 
Instead, we propose a \emph{universal lower bound} that holds for all DAGs in the same MEC:



\begin{restatable}[]{thm}{lowerboundThm}\label{thm:clique-lower-bound}
Let $D$ be any DAG. Then $m(D) \geq \sum_{G \in \CC(\cE(D))} \cb{\omega(G)}$, where $\omega(G)$ is the size of the largest clique in each of the chain components $G$ of the essential graph $\cE(D)$.
\end{restatable}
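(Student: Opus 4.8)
The plan is to strip off the chain-component structure, reduce to a moral DAG, and then induct on the number of vertices, charging a maximum clique against the residuals into which a directed clique tree splits it. First I would invoke \rref{thm:vis-characterization}: a single-node set is a VIS of $D$ iff it contains a VIS of every residual of every chain component of $\cE(D)$. Since (by \rref{lemma:hauser}) the chain components partition $V(D)$ and, inside each component, the residuals of a tree-like CDCT partition its vertices, an MVIS of $D$ is a disjoint union of MVISes of these residuals, so $m(D) = \sum_{G \in \CC(\cE(D))} m(D|_G)$, where each $D|_G$ is a moral DAG ($v$-structure--free, since a $v$-structure among vertices of $G$ would orient an edge of $G$ in $\cE(D)$). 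As $\sum_G \cb{\omega(G)}$ is exactly the asserted right-hand side, it suffices to prove $m(D) \ge \cb{\omega}$ for a moral DAG $D$ with $\omega := \omega(\skel(D))$, which I do by induction on $|V(D)|$.

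For the inductive step, pick a maximum clique $K$ of $\skel(D)$ — w.l.o.g.\ maximal, hence a node of every clique tree — with $|K| = \omega$; all edges of $K$ are undirected in $\cE(D)$ because $D$ is moral. Fix a tree-like CDCT $\tilde T_D$ (\rref{lemma:construction-proof}), rooted so its edges point away from the root. By the induced-subtree (running-intersection) property, every $v \in K$ is introduced in a unique residual, and all these residuals lie along a single root-initiated directed path $B_1 \to \cdots \to B_L$ of $\tilde T_D$. Writing $K_\ell := K \cap \Res(B_\ell)$ and $k_\ell := |K_\ell|$, we have $\sum_{\ell=1}^{L} k_\ell = \omega$, each $K_\ell$ is a clique of the moral DAG $\Res(B_\ell)$, and — whenever $\tilde T_D$ has more than one node — each $\Res(B_\ell)$ is a strict induced subgraph of $D$ and so is covered by the inductive hypothesis.

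The core estimate to establish is: $m(\Res(B_\ell)) \ge \lceil k_\ell/2 \rceil$ for every $\ell < L$ (while $m(\Res(B_L)) \ge \cb{k_L}$ follows from the inductive hypothesis since $\omega(\Res(B_L)) \ge k_L$). Given this, an elementary rounding inequality yields $m(D) = \sum_R m(R) \ge \sum_{\ell<L}\lceil k_\ell/2\rceil + \cb{k_L} \ge \cb{\omega}$, using $\sum_\ell k_\ell = \omega$. To obtain the core estimate from the inductive hypothesis it is enough to prove $\omega(\Res(B_\ell)) \ge k_\ell + 1$ for $\ell < L$, i.e.\ that a residual which is not last along the path contains a clique strictly larger than its share of $K$: since $B_\ell$ has a child $B_{\ell+1}$ on the path and is a genuine bidirected component, the connecting directed edge $C' \to C''$ ($C'\in B_\ell$, $C''\in B_{\ell+1}$) is forced by \rref{def:dct} to carry a ``tail at $C'$'', which injects into $\Res(B_\ell)$ an extra vertex adjacent to the whole separator, and $K_\ell$ sits inside that separator by running intersection. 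The base case is $\tilde T_D$ a single node (all maximal cliques of $G$ forming one bidirected component, in particular $D$ a clique): there one argues directly by a ``sorting'' argument — intervening on $t$ vertices of a maximum clique reveals only the induced linear order of those $t$ vertices plus which ``gap'' each remaining clique-vertex occupies, so $t < \cb{\omega}$ leaves a gap with $\ge 2$ vertices, and reversing the (covered) edge between two consecutive vertices of that gap gives a distinct DAG with the same $\cI$-essential graph, contradicting the VIS property (this reversal stays in the MEC and the $\cI$-MEC precisely because all of $K$ is undirected in $\cE(D)$).

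I expect the main obstacle to be the non-terminal clique-extension step $\omega(\Res(B_\ell)) \ge k_\ell + 1$: making precise that a bidirected component separated from its child necessarily ``pays'' for an extra clique vertex inside its own residual is exactly what prevents the per-residual floors from degrading to the (possibly much smaller) sum $\sum_\ell \cb{k_\ell}$, and carrying it out cleanly needs a careful reading of \rref{def:dct} together with standard facts about clique-tree separators and the normal-form guarantees of \rref{prop:collider-implies-inclusion} and \rref{lemma:construction-proof} (no arrow-meets, tree-like CDCT). A secondary subtlety is verifying, in the base case, that the covered-edge reversal introduces no new $v$-structure, which again relies on the whole maximum clique being undirected in $\cE(D)$.
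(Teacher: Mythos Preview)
Your reduction to moral DAGs and the idea of walking the CDCT path from the root to the bidirected component $B_L$ containing a maximum clique are both in the spirit of the paper's argument. However, the per-residual ``core estimate'' $m(\Res(B_\ell)) \ge \lceil k_\ell/2 \rceil$ for $\ell < L$ is \emph{false}, and so is the clique-extension step $\omega(\Res(B_\ell)) \ge k_\ell + 1$ that you correctly flag as the crux.

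Here is a small counterexample. Take the moral DAG on $\{b,a,1,2,3,4\}$ given by $b\to a\to 1$ together with a complete DAG on $\{1,2,3,4\}$ with $1$ as source. The maximal cliques are $\{b,a\}$, $\{a,1\}$, $\{1,2,3,4\}$; the DCT is the directed path $\{b,a\}\to\{a,1\}\to\{1,2,3,4\}$ and the CDCT coincides with it, so $B_0=\{\{b,a\}\}$, $B_1=\{\{a,1\}\}$, $B_2=\{\{1,2,3,4\}\}=B_L$. Then $K=\{1,2,3,4\}$, $K_1=\{1\}$, $k_1=1$, but $\Res(B_1)=\{1\}$, so $\omega(\Res(B_1))=1<k_1+1$ and $m(\Res(B_1))=0<\lceil k_1/2\rceil$. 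Your proposed mechanism fails exactly where you worried it would: the tail at $C'$ on the edge $C'=\{a,1\}\to C''=\{1,2,3,4\}$ is witnessed by $v'=a$, which lies in the \emph{parent} component $B_0$, not in $\Res(B_1)$. In general the ``extra vertex'' $v'\in C'\setminus C''$ guaranteed by \rref{def:dct} need not be new to $B_\ell$; it can have been introduced arbitrarily far upstream.

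The paper avoids this by \emph{not} trying to bound each $m(\Res(B_\ell))$ against its own share $k_\ell$. Instead it maintains a cumulative invariant along the path: with $C_i^*$ a largest clique of $B_i$, it shows $\sum_{i\le z} m(D[R_i]) \ge \max_{i\le z}\cb{|C_i^*|}$ by induction on $z$. The key step charges the overlap $S_z := C_z^*\cap B_{z-1}$ to the \emph{previous} components: the clique $C_{\mathrm{adj}}\in B_{z-1}$ adjacent to $B_z$ in the clique tree contains $S_z$ and, being a distinct maximal clique, strictly contains it, so $|C_{z-1}^*|\ge |C_{\mathrm{adj}}|\ge |S_z|+1$; then $\sum_{i\le z-1} m(D[R_i])\ge \cb{|S_z|+1}$ by the induction hypothesis, while $m(D[R_z])\ge \cb{|C_z^*\cap R_z|}$ comes from a separate base proposition for single bidirected components. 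The rounding identity $\cb{a+1}+\cb{b}\ge \cb{a+b}$ closes the step. In the counterexample this gives $1+0+1\ge 2$ even though the middle residual contributes $0$.

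A secondary gap: your base case treats ``$\tilde T_D$ is a single node'' as if it were ``$D$ is a clique''. A single bidirected component can contain many maximal cliques. The sorting/covered-edge argument only controls interventions \emph{inside} the maximum clique $K$; you still need to show that interventions on vertices outside $K$ do not help orient edges inside $K$. That is nontrivial and is precisely what the paper's single-bidirected-component proposition establishes (using that, in one bidirected component, every clique points into the residual of the largest one, so no outside vertex is a parent of any vertex of $K$).
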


We reiterate how this bound is different from previous work. 
For a fixed MEC $[D]$ with essential graph $\cE$, it is easy to construct $D^* \in [D]$ such that $m(D^*) \geq \sum_{G \in \CC(\cE(D))} \cb{\omega(G)}$ by picking the largest clique in each chain component to be the 
upstream-most clique. The bound in \rref{thm:clique-lower-bound} gives a much stronger result: \textit{any} choice of DAG in the MEC requires this many single-node interventions.

\section{A two-phase intervention policy based on DCTs}\label{section:policy}
While in the previous section we started from a known DAG $D$ to construct a CDCT and then proved an universal lower bound on $m(D)$, in this section we focus on intervention design to learn the orientations of an unknown DAG starting from its observational essential graph. 
\rref{thm:vis-characterization} proves that to orient a DAG $D$, we only need to orient the residuals for each of its essential graph chain components. 
The definition of residuals requires the knowledge of a tree-like CDCT for each component, which can be easily derived from the directed clique graph (DCG) (e.g. through \rref{alg:dct-construction} in Appendix \ref{app:construction-proof}).
So, we propose a two phase policy, in which the first phase uses interventions to identify the DCG of each chain components, while the second phase uses interventions to orient each of the residuals, as described in \rref{alg:dct-policy}. 
We now focus on describing the first phase of the algorithm and start by introducing two types of abstract, higher-level interventions.

\begin{defn}
    A \emph{clique-intervention} on a clique $C$ is a series of single-node interventions that suffices to learn the orientation of all edges in $\Gamma_D$ that are incident on $C$. An \emph{edge-intervention} on an edge $C_1 -_{T_G} C_2$ is a series of single-node interventions that suffices to learn the orientation of $C_1 -_{T_D} C_2$.
\end{defn}

A trivial clique-intervention is intervening on all of $C$, and a trivial edge-intervention is intervening on all of $C_1 \cap C_2$. 
The clique- and edge- interventions we use in practice are outlined in \rref{app:clique-edge-interventions}.

\begin{minipage} [t]{0.4\textwidth}
\begin{algorithm}[H]
\caption{\textsc{DCT Policy}}
\label{alg:dct-policy}
\begin{algorithmic}[1]
    \STATE \textbf{Input:} essential graph $\cE(D)$
    \FOR {component $G$ in $\CC(\cE(D))$}
        \STATE create clique graph $\Gamma_G$
        \STATE $\Gamma_D$ = \textsc{FindDCG}($\Gamma_G$)
        \STATE convert $\Gamma_D$  to a CDCT $\tT_D$
        \FOR {clique $C$ in $\tT_D$}
            \STATE Let $R = \Res_{\tT_D}(C)$
            \STATE Intervene on nodes in $V(R)$ until $R$ is fully oriented
    \ENDFOR
    \ENDFOR
    \STATE \textbf{return}  completely oriented $D$
\end{algorithmic}
\end{algorithm}
\end{minipage}
\begin{minipage} [t]{0.59\textwidth}
\begin{algorithm}[H]
\caption{\textsc{FindDCG}}
\label{alg:phase1}
\begin{algorithmic}[1]
    \STATE \textbf{Input:} clique graph $\Gamma_G$
    \STATE let $\Gamma_D = \Gamma_G$
    \WHILE{$\Gamma_D$ has undirected edges}
        \STATE let $T$ be a maximum-weight spanning tree of the undirected component of $\Gamma_D$
        \STATE let $C$ be a central node of $T$ 
        \STATE perform a clique-intervention on $C$
        \STATE let $P_\up(C) = $ \textsc{IdentifyUpstream}$(C)$
        \STATE let $S = V(B_T^{C: P_\up(C)})$
        \WHILE{$\Gamma_D$ has unoriented incident to $\cC \setminus S$}
            \STATE propagate edges in $\Gamma_D$
            \STATE perform an edge-intervention on an edge $C_1 -_{\Gamma_G} C_2$ with $C_1 \in \cC \setminus S$
        \ENDWHILE
    \ENDWHILE
    \STATE \textbf{return} $\Gamma_D$
\end{algorithmic}
\end{algorithm}
\end{minipage}

The first phase of our algorithm, described in \rref{alg:phase1}, is inspired by the Central Node algorithm  \citep{greenewald2019trees}. This algorithm operates over a tree, so we will have to use a spanning tree:
\begin{defn}\citep{greenewald2019trees}
Given a tree $T$ and a node $v \in V(T)$, we divide $T$ into \textit{branches} w.r.t. $v$. For a node $w$ adjacent to $v$, the branch $B_T^{(v:w)}$ is the connected component of $T - \{v\}$ that contains $w$. 
A \emph{central node} $c$ is a node for which $\forall w$ adjacent to $c: |B_T^{(c:w)}| \leq |\frac{V(T)}{2}|$.
\end{defn}

While our algorithm works for general graphs, it will help our intuition to first assume that $\Gamma_G$ is intersection-incomparable. In this case, there are no arrow-meets in $\Gamma_D$ by \rref{prop:collider-implies-inclusion}, nor in any of the directed clique trees. Thus, after each clique-intervention on a central node $C$, there will be only one parent clique upstream and the algorithm will orient at least half of the remaining unoriented edges by repeated application of \rref{prop:collider-implies-inclusion}.
For the intersection-comparable case, two steps can go wrong. First, after a clique-intervention on $C$, we may find that $C$ has multiple parents in $\Gamma_D$ (i.e. $C$ is at an arrows-meet). We can prove that even in this case, there is always a single ``upstream'' branch, identified via the \texttt{IdentifyUpstream} procedure, described in \rref{app:identify-upstream}, which performs edge-interventions on a subset of the parents.
A second step which may go wrong is in the propagation of orientations along the downstream branches, which halts when encountering intersection-incomparable edges. In this case, we simply kickstart further propagation by performing an edge-intervention.




The size of the problem is cut in half after each clique-intervention, so that we use at most $\sum_{G\in \CC(\cE(D))} \lceil \log_2(|\cC(G)|) \rceil$ clique-interventions, where $\cC(G)$ is the set of maximal cliques for $G$.  Furthermore, if $\Gamma_G$ is intersection-incomparable we use no edge-interventions (see \rref{lemma:phase1} in \rref{app:dct-policy-competitive-ratio}).
The second phase of the algorithm then orients the residuals and uses at most $\sum_{G \in \CC(\cE(D))} \sum_{C \in \cC(G)} |\Res_{\tT_G}(C)| - 1$ single-node interventions (see \rref{lemma:phase2} in \rref{app:dct-policy-competitive-ratio}).

\begin{restatable}{thm}{thmPolicyCompetitive}\label{thm:dct-policy-competitive-ratio}
    Assuming $\Gamma_G$ is intersection-incomparable, Algorithm \ref{alg:dct-policy} uses at most 
    $(3 \lceil \log_2 \cC_\Max \rceil + 2) m(D)$
    single-node interventions, where $\cC_\Max = \max_{G \in \CC(\cE(D))} |\cC(G)|$.
\end{restatable}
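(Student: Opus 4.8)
The plan is to count the single-node interventions of \rref{alg:dct-policy} phase by phase and convert each count into a multiple of $m(D)$ using the two decompositions of $m(D)$ already available: on one hand $m(D)=\sum_{G\in\CC(\cE(D))}\sum_{C}m(\Res_{\tT_G}(C))$, which follows from \rref{thm:vis-characterization} since the residuals of all chain components lie on pairwise disjoint vertex sets and distinct chain components are oriented independently (\rref{lemma:hauser}); on the other hand $m(D)\ge\sum_{G\in\CC(\cE(D))}\cb{\omega(G)}$ by \rref{thm:clique-lower-bound}. Both phases, and $m(D)$, split over chain components, so it suffices to reason inside one chain component $G$.

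\emph{Phase one.} I would invoke \rref{lemma:phase1}: when $\Gamma_G$ is intersection-incomparable, \rref{alg:phase1} performs no edge-interventions and at most $\lceil\log_2|\cC(G)|\rceil$ clique-interventions. Each clique-intervention is performed on a maximal clique of $G$, hence on at most $\omega(G)$ vertices, and the trivial clique-intervention (intervene on every vertex of the clique) already realizes it with at most $\omega(G)$ single-node interventions (the routine of \rref{app:clique-edge-interventions} refines this). Using $\omega(G)\le 3\cb{\omega(G)}$ for $\omega(G)\ge 2$ (a component with $\omega(G)\le 1$ is an isolated vertex and needs no intervention), phase one spends at most $3\lceil\log_2|\cC(G)|\rceil\cb{\omega(G)}$ interventions in $G$; summing over components and applying \rref{thm:clique-lower-bound} bounds phase one by $3\lceil\log_2\cC_\Max\rceil\,m(D)$.

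\emph{Phase two.} By \rref{lemma:phase2} the interventions spent orienting residuals number at most $\sum_{C}(|\Res_{\tT_G}(C)|-1)$ in each component, and I would compare this with $\sum_C m(\Res_{\tT_G}(C))$ residual by residual. For this I would first describe the shape of residuals when $\Gamma_G$ is intersection-incomparable: by \rref{prop:collider-implies-inclusion} there are no arrow-meets, so every bidirected component of a DCT consists of at most two cliques, and consequently each residual is either (i) a sub-clique of a maximal clique of $G$, or (ii) an induced subgraph consisting of a nonempty separator clique together with two disjoint, mutually non-adjacent cliques whose every vertex is a child of the whole separator (nonemptiness of the separator is exactly what intersection-incomparability with the parent separator buys). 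In case (i) the residual is a clique $K_\ell$ with $m(\Res)=\cb\ell$, while the orientation costs at most $\ell-1\le 2\cb\ell$; case (ii) is handled by the standard gap argument applied to the linear orders that single-node interventions induce inside the two maximal cliques of the residual and on their separator. Combining the per-residual estimates with $m(D)=\sum_G\sum_C m(\Res_{\tT_G}(C))$ bounds phase two by $2\,m(D)$, and adding the two phases gives the claimed $(3\lceil\log_2\cC_\Max\rceil+2)\,m(D)$.

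The step I expect to be the main obstacle is making the two phase bounds \emph{compose}: the worst-case estimate $|\Res|-1$ for orienting a residual can exceed $2\,m(\Res)$ — a residual with a large separator clique is cheap to orient, since a single well-chosen intervention propagates through Meek's rules, so $m(\Res)$ can be far below $|\Res|$ — so one cannot merely add the two loose bounds. The point is that such a residual appears only when the corresponding clique-interventions in phase one use far fewer than $\omega(G)$ single-node interventions (an expensive clique-intervention on a large clique simultaneously orients many incident edges and shrinks the residual work), so the slack in the phase-one budget absorbs the surplus in phase two. Carrying out this amortization — charging each clique-intervention and the orientation of each residual to a common clique of $G$, hence to a common term of the lower bounds — is the technical heart; everything else is bookkeeping with \rref{thm:vis-characterization} and \rref{thm:clique-lower-bound}.
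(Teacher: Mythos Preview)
Your two-phase decomposition and your phase-one bound are exactly the paper's argument: bound each clique-intervention by $\omega(G)$ single-node interventions, use $\omega(G)\le 3\lfloor\omega(G)/2\rfloor$ together with \rref{thm:clique-lower-bound}, and multiply by the $\lceil\log_2|\cC(G)|\rceil$ clique-interventions from \rref{lemma:phase1}.

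For phase two, the paper's proof is precisely the short per-residual route you sketch \emph{before} your ``main obstacle'' paragraph: it combines \rref{lemma:phase2}'s bound $\sum_R(|R|-1)$, the VIS decomposition $m(D')=\sum_R m(R)$ from \rref{lemma:vis-characterization}, and the elementary inequality $a-1\le 2\lfloor a/2\rfloor$, and concludes phase two costs at most $2m(D')$. The paper does not distinguish your cases (i) and (ii), does not analyse the structure of two-clique residuals, and performs \emph{no} cross-phase amortization.

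Your ``main obstacle'' paragraph therefore anticipates a difficulty the paper's own proof does not engage with. The concern is genuine: under intersection-incomparability only the root bidirected component can contain two cliques, and for such a root residual the inequality $|R|-1\le 2m(R)$ can fail (two triangles sharing an edge, with that edge upstream, give $|R|=4$ and $m(R)=1$, so $|R|-1=3>2m(R)$). Your amortization idea is thus an attempt to close a gap the paper's terse proof leaves open, not a reproduction of it. If your goal is to match the paper, drop the amortization and simply invoke the per-residual inequality as the paper does; if your goal is a complete argument, the amortization (or a direct bound on the single two-clique root residual against the slack in the phase-one clique-intervention on one of its cliques) is the piece that actually needs to be written out.
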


In the extreme case in which the essential graph is a tree, a single intervention on the root node can orient the tree, so $m(D) = 1$, and $|\cC| = |V(D)| - 1$, so \rref{thm:dct-policy-competitive-ratio} says that Algorithm \ref{alg:dct-policy} uses $O(\log(p))$ interventions, which is the scaling of the Bayes-optimal policy for the uniform prior as discussed in \citet{greenewald2019trees}.

\textbf{Remark on intersection-incomparability.}
Intersection-incomparable chordal graphs were introduced as ``uniquely representable chordal graphs” in \cite{kumar2002clique}. This class was shown to include familiar classes of graphs such as proper interval graphs. While the assumption of intersection-incomparability is necessary for our analysis of the DCT policy, the policy still performs well on intersection-comparable graphs as demonstrated in \rref{section:experiments}. This suggests that the restriction may be an artifact of our analysis, and the result of \rref{thm:dct-policy-competitive-ratio} may hold more generally.

\section{Experimental Results}\label{section:experiments}

\begin{figure*}[t!]
    \centering
    \begin{subfigure}[b]{0.4\textwidth}
         \centering
         \includegraphics[width=\textwidth]{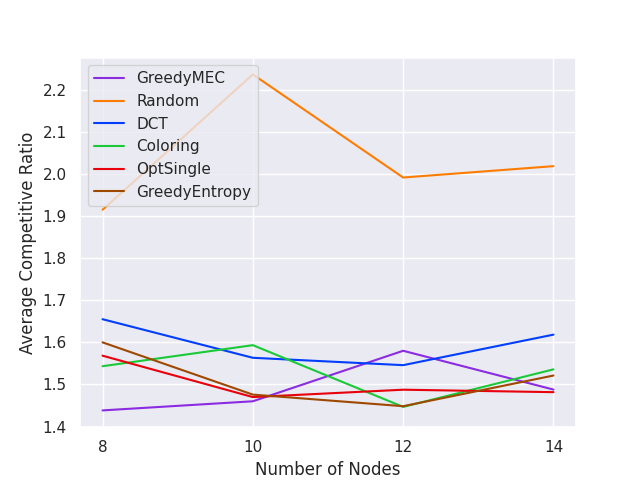}
         \caption{Average ic-Ratio (small graphs)}
         \label{fig:avg-regret-small}
     \end{subfigure}
     ~
     \begin{subfigure}[b]{0.4\textwidth}
         \centering
         \includegraphics[width=\textwidth]{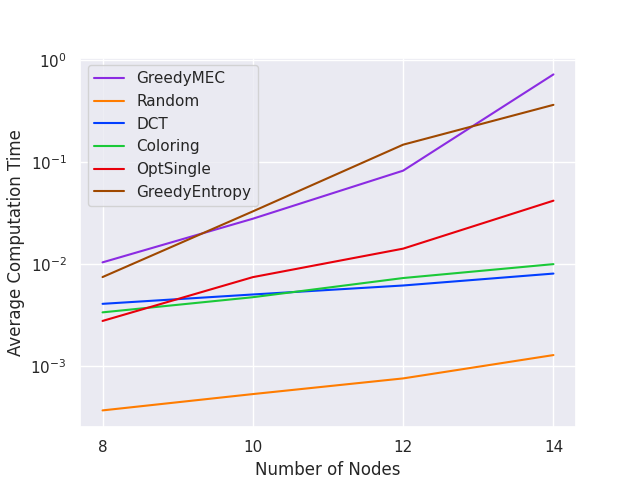}
         \caption{Average Time (small graphs)}
         \label{fig:time-small}
     \end{subfigure}
    
     \begin{subfigure}[b]{0.4\textwidth}
         \centering
         \includegraphics[width=\textwidth]{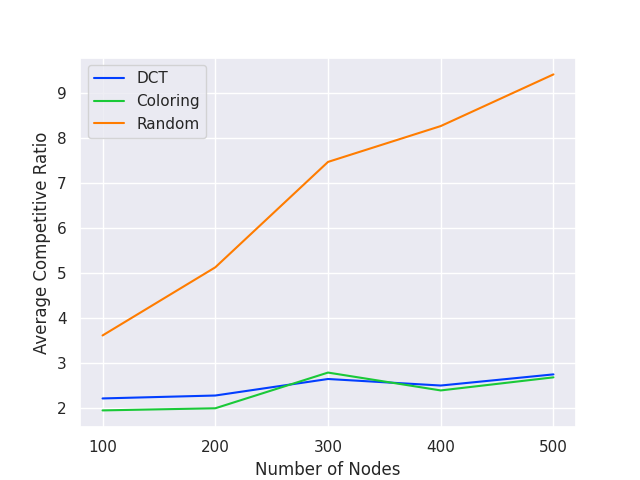}
         \caption{Average ic-Ratio (larger graphs)}
         \label{fig:avg-regret-large}
     \end{subfigure}
     ~
     \begin{subfigure}[b]{0.4\textwidth}
         \centering
         \includegraphics[width=\textwidth]{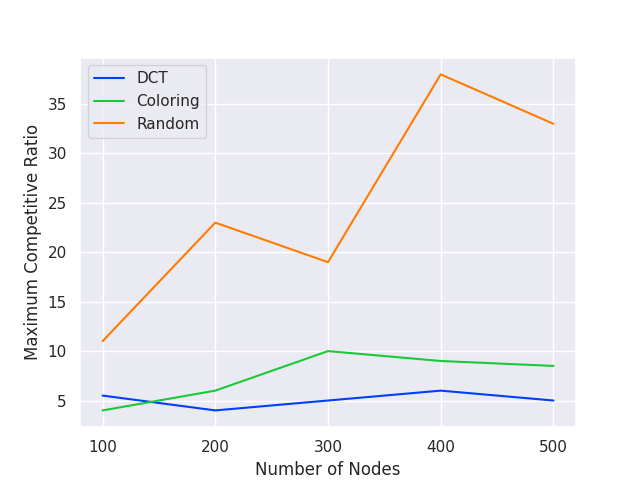}
         \caption{Maximum ic-Ratio (larger graphs)}
         \label{fig:maxregret-large}
     \end{subfigure}
     \caption{Comparison of intervention policies over 100 synthetic DAGs.}
\end{figure*}

We evaluate our policy on synthetic graphs of varying size.
To evaluate the performance of a policy on a specific DAG $D$, relative to $m(D)$, the size of its smallest VIS (MVIS), we adapt the notion of \textit{competitive ratio} from online algorithms \citep{borodin1992optimal,daniely2019competitive}. We use $\iota_D(\pi)$ to denote the expected size of the VIS found by policy $\pi$ for the DAG $D$, and define our evaluation metric as:

\begin{defn}
    The \emph{instance-wise competitive ratio} (\emph{ic-ratio}) of an intervention policy $\pi$ on $D$ is $\compRatio(\pi, D) = {\iota_{D'}(\pi) \over m(D') }$. The \emph{competitive ratio} on an MEC $[D]$ is $\compRatio(\pi) = max_{D' \in [D]}{\iota_{D'}(\pi) \over m(D') }$.
\end{defn}

The instance-wise competitive ratio of a policy on a DAG $D$ simply measures the number of interventions used by the policy \textit{relative} to the number of interventions used by the best policy \textit{for that DAG}, i.e., the policy which guesses that $D$ is the true DAG and uses exactly a MVIS of $D$ to verify this guess.
Thus, a \textit{lower} ic-ratio is better, and an ic-ratio of 1 is the best possible.
In order to compute the ic-ratio on $D$, we must compute $m(D)$, the size of a MVIS for $D$.
In our experiments, we use our DCT characterization of VIS's from \rref{thm:vis-characterization} to decompose the DAG into its residuals, each of whose MVIS's can be computed efficiently. We describe this procedure in \rref{app:brute-force-mvis}.

\textbf{Smaller graphs.} For our evaluation on smaller graphs, we generate random connected moral DAGs using the following procedure, which is a modification of Erd{\"o}s-R{\'e}nyi sampling that guarantees that the graph is connected. 
We first generate a random ordering $\sigma$ over vertices. 
Then, for the $n$-th node in the order, we set its indegree to be $X_n = \max(1, \textrm{Bin}(n-1, \rho))$, and sample $X_n$ parents uniformly from the nodes earlier in the ordering. 
Finally, we chordalize the graph by running the elimination algorithm \citep{koller2009probabilistic} with elimination ordering equal to the reverse of $\sigma$.

We compare the \texttt{OptSingle} policy \citep{hauser2014two}, the Minmax and Entropy strategy of \citet{he2008active}, called \texttt{MinmaxMEC} and \texttt{MinmaxEntropy}, respectively, and the coloring-based strategy of \citet{shanmugam2015learning}, called \texttt{Coloring}. 
We also introduce a baseline that picks randomly among non-dominated\footnote{A node is \emph{dominated} if all incident edges are directed, or if it has only a single incident edge to a neighbor with more than one incident undirected edges} nodes in the $\cI$-essential graph, called the \textit{non-dominated random} (\texttt{ND-Random}) strategy. As the name suggests, dominated nodes are easily proven to be non-optimal interventions, so \texttt{ND-Random} is a more fair baseline than simply picking randomly amongst nodes.

In \rref{fig:avg-regret-small} and \rref{fig:time-small}, we show the average ic-ratio and the average run-time for each of the algorithms. In terms of average ic-ratio, all algorithms aside from \texttt{ND-Random} perform comparably, using on average 1.4-1.7x more interventions than the smallest MVIS. However, the computation time grows quite quickly for \texttt{GreedyMEC}, \texttt{GreedyEntropy}, and \texttt{OptSingle}. This is because, when scoring a node as a potential intervention target, each of these algorithms iterates over all possible parent sets of the node. Moreover, the \texttt{GreedyMEC} and \texttt{GreedyEntropy} policies then compute the sizes of the resulting interventional MECs, which can grow superexponentially in the number of nodes \citep{gillispie2013enumerating}. In \rref{app:additional-experiments}, we show that in the same setting, \texttt{OptSingle} takes >10 seconds per graph for just 25 nodes, whereas \texttt{Coloring}, \texttt{DCT}, and \texttt{Random} remain under .1 seconds per graph.

\textbf{Larger graphs.} For our evaluation on large tree-like graphs, we create random moral DAGs of $n=100, \dots, 300$ nodes using the following procedure. We generate a complete directed 4-ary tree on $n$ nodes. Then, we sample an integer $R \sim U(2, 5)$ and add $R$ edges to the tree. Finally, we find a topological order of the graph by DFS and triangulate the graph using that order. This ensures that the graph retains a nearly tree-like structure, making $m(D)$ small compared to the overall number of nodes.
In \rref{fig:avg-regret-large} and \rref{fig:maxregret-large}, we show the average and maximum competitive ratio (computation time is given in \rref{app:additional-experiments}). For the average graph, our \texttt{DCT} policy and the \texttt{Coloring} policy use only 2-3 times as many interventions as the theoretical lower bound. Moreover, the worst competitive ratio experienced by the \texttt{DCT} algorithm is significantly smaller than the worst ratio experienced by the \texttt{Coloring} policy, which suggests that our policy is more adaptive to the underlying difficulty of the identification problem.

\section{Related Work}\label{section:related-work}
Intervention policies fall under two distinct, but related goals. 
The first is: given a fixed number of interventions, learn as much as possible about the underlying DAG. 
This goal is explored in \citet{ghassami2017optimal,ghassami2018budgeted} and \citet{hauser2014two}. 
The second goal, which is the one considered in this paper, is \textit{minimum-cost identification}: completely learn the underlying DAG using the least number of interventions.
We review previous work on policies operating under this objective.
As before, we use $\iota_D(\pi)$ to denote the expected size of the VIS found by policy $\pi$. 

We define $\Pi_K$ as the set of policies using interventions with at most $K$ target variables, i.e., $|I_m| \leq K$ for $I_m \in \cI$. 
We use $\Pi_\infty$ to represent policies allowing for interventions of unbounded size.
A policy $\pi$ is \emph{K-node minimax optimal} for an MEC $[D]$ if 
$\pi \in \arg\min_{\pi' \in \Pi_K} \max_{D' \in [D]} \iota_{D'}(\pi')$.
Informally, this is the policy $\pi$ that in the worst-case scenario (the DAG in the MEC that requires the most interventions under $\pi$) ends up requiring the least interventions. 
A policy is \emph{K-node Bayes-optimal} for an MEC $[D]$ and a prior $\bbP_{\rvD}$ supported only on the MEC $[D]$ if
$\pi \in \arg\min_{\pi' \in \Pi_K} \bbE_{\bbP_{\rvD}}[\iota_\rvD(\pi')]$

In the special cases of $K = 1$ and $K = \infty$, we replace $K$-node by \textit{single-node} and \textit{unbounded}, respectively. 
Much recent work explores intervention policies under a variety of objectives and constraints.
\citet{eberhardt2007causation} introduced passive, minimax-optimal intervention policies for single-node, $K$-node, and unbounded interventions in both the causally sufficient and causally insufficient case, when the MEC is not known. 
They also give a passive, unbounded intervention policy when the MEC \textit{is} known, and conjectures a minimax lower bound of $\lceil \frac{\omega(\cE(D))}{2} \rceil$ on $\iota_D(\pi)$ for such policies.
\citet{hauser2014two} prove this bound by developing a passive, unbounded minimax-optimal policy. 
\citet{shanmugam2015learning} develop a $K$-node minimax lower bound of $\frac{\omega(\cE(D))}{K} \log_{\frac{\omega(\cE(D))}{K}} \omega(\cE(D))$ based on separating systems. 
\citet{kocaoglu2017cost} develop a passive, unbounded minimax-optimal policy when interventions have distinct costs (where $\iota_D(\pi)$ is replaced by the total cost of all interventions.)
\citet{greenewald2019trees} develop an adaptive $K$-node intervention policy for noisy interventions which is within a small constant factor of the Bayes-optimal intervention policy, but the policy is limited to the case in which the chain components of the essential graph are trees.
It is important to note that all of these previous works give \textit{minimax} optimal policies, i.e. they focus on minimizing the interventions used in the \textit{worst} case over the MEC. In contrast, our result in \rref{thm:dct-policy-competitive-ratio} is \textit{competitive}, holding for \textit{every} DAG in the MEC, and shows that the largest clique is still a fundamental impediment to structure learning. However, the current result holds only in the single-node case, whereas previous work allows for larger interventions.

Finally, we note an interesting conceptual connection to \citet{ghassami2019counting}, which uses undirected clique trees as a tool for counting and sampling from MECs, suggesting that clique trees and their variants, such as DCTs, may be broadly useful for a variety of DAG-related tasks.

\section{Discussion}\label{section:discussion}
We presented a decomposition of a moral DAG into residuals, each of which must be oriented independently of one another. 
We use this decomposition to prove that for any DAG $D$ in a MEC with essential graph $\cE$, at least $\sum_{G \in \CC(\cE)} \cb{\omega(G)}$ interventions are necessary to orient $D$, where $\CC(\cE)$ denotes the chain components of $\cE$ and $\omega(G)$ denotes the clique number of $G$.
We introduced a novel two-phase intervention policy, which first uses a variant of the Central-Node algorithm to obtain orientations for the directed clique graph $\Gamma_D$, then orients within each residual. 
We showed that under certain conditions on the chain components of $\cE$, this intervention policy uses at most $(3 \log_2 \cC_\Max + 2)$ times as many interventions as the optimal intervention set. 
%
Finally, we showed on synthetic graphs that our intervention policy is more scalable than most existing policies, with comparable performance to the coloring-based policy of \cite{shanmugam2015learning} in terms of average ic-ratio and better performance in terms of worst-case ic-ratio. 

Preliminary results (\rref{app:additional-experiments}) suggest that the \DCT ~policy is more computationally efficient than the coloring-based policy on large, dense graphs, but is slightly worse in terms of performance. Further analysis of these results and possible improvements are left to future work.
Our results, especially the residual decomposition of the VIS, provide a foundation for further on intervention design in more general settings.

\flushcolsend
\clearpage

\newpage

\section*{Funding transparency statement}
Chandler Squires was supported by an NSF Graduate Research Fellowship and an MIT Presidential Fellowship and part of the work was performed during an internship at IBM Research. The work was supported by the MIT-IBM Watson AI Lab,

\section*{Broader impact statement}
Causality is an important concern in medicine, biology, econometrics and science in general \citep{Pearl:2009:CMR:1642718, Spirtes2000,PetJanSch17}. A causal understanding of the world is required to correctly
predict the effect of actions or external factors on a system, but also to develop fair algorithms. It is well-known that learning causal relations from observational data alone is not possible in general (except in special cases or under very strong assumptions); in these cases experimental (``interventional'') data is necessary to resolve ambiguities.

In many real-world applications, interventions may be time-consuming or expensive, e.g. randomized controlled trials to develop a new drug or gene knockout experiments. 
These settings crucially rely on \emph{experiment design}, or more precisely \emph{intervention design}, i.e. finding a cost-optimal set of interventions that can fully identify a causal model. 
The ultimate goal of intervention design is accelerating scientific discovery by decreasing its costs, both in terms of actual costs of performing the experiments and in terms of automation of new discoveries. 

Our work focuses on intervention design for learning causal DAGs, which have been notably employed as models in system biology, e.g. for gene regulatory networks \citep{friedman2000using} or for protein signalling networks \citep{sachs2005causal}. 
Protein signalling networks represent the way cells communicate with each other, and having reliable models of cell signalling is crucial to develop new treatments for many diseases, including cancer. Understanding how genes influence each other has also important healthcare applications, but is also crucial in other fields, e.g. agriculture or the food industry. Since even the genome of a simple organism as the common yeast contains 6275 genes, interventions like gene knockouts have to be carefully planned. 
Moreover, experimental design algorithms may prove to be a useful tool for driving down the time and cost of investigating the impact of cell type, drug exposure, and other factors on gene expression. These benefits suggest that there is a potential for experimental design algorithms such as ours to be a commonplace component of the future biological workflow. 

In particular, our work establishes a number of new theoretical tools and results that 1) may drive development of new experimental design algorithms, 2) allow practitioners to estimate, prior to beginning experimentation, how costly their task may be, 3) offer an intervention policy that is able to run on much larger graphs than most of the related work, and provides more efficient intervention schedules than the rest. 

Importantly, our work and in general intervention design algorithms have some limitations. In particular, as we have mentioned in the main paper, all these algorithms have relatively strong assumptions (e.g. no latent confounders or selection bias, infinite observational data, noiseless interventions, or in some case limitations on the graph structure \citep{greenewald2019trees}). If these assumptions are not satisfied in the data, or the practitioner does not realize their importance, the outcome of these algorithms could be misinterpreted or over-interpreted, leading to wasteful experiments or overconfident causal conclusions. Wrong causal conclusions may lead to potentially severe unintended side effects or unintended perpetuation of bias in algorithms.

Even in case of correct causal conclusions, the actualized impact of experimental design depends on the experiments in which it is used. Potential positive uses cases include decreasing the cost of drug development, in turn leading to better and cheaper medicine for consumers.

\newpage

\bibliography{bib.bib}

\begin{thebibliography}{26}
\providecommand{\natexlab}[1]{#1}
\providecommand{\url}[1]{\texttt{#1}}
\expandafter\ifx\csname urlstyle\endcsname\relax
  \providecommand{\doi}[1]{doi: #1}\else
  \providecommand{\doi}{doi: \begingroup \urlstyle{rm}\Url}\fi

\bibitem[Borodin et~al.(1992)Borodin, Linial, and Saks]{borodin1992optimal}
Borodin, A., Linial, N., and Saks, M.~E.
\newblock An optimal on-line algorithm for metrical task system.
\newblock \emph{Journal of the ACM (JACM)}, 39\penalty0 (4):\penalty0 745--763,
  1992.

\bibitem[Daniely \& Mansour(2019)Daniely and Mansour]{daniely2019competitive}
Daniely, A. and Mansour, Y.
\newblock Competitive ratio versus regret minimization: achieving the best of
  both worlds.
\newblock \emph{arXiv preprint arXiv:1904.03602}, 2019.

\bibitem[Eberhardt(2007)]{eberhardt2007causation}
Eberhardt, F.
\newblock Causation and intervention.
\newblock \emph{Unpublished doctoral dissertation, Carnegie Mellon University},
  pp.\ ~93, 2007.

\bibitem[Eberhardt et~al.(2006)Eberhardt, Glymour, and
  Scheines]{eberhardt2006n}
Eberhardt, F., Glymour, C., and Scheines, R.
\newblock N-1 experiments suffice to determine the causal relations among n
  variables.
\newblock In \emph{Innovations in machine learning}, pp.\  97--112. Springer,
  2006.

\bibitem[Friedman et~al.(2000)Friedman, Linial, Nachman, and
  Pe'er]{friedman2000using}
Friedman, N., Linial, M., Nachman, I., and Pe'er, D.
\newblock Using bayesian networks to analyze expression data.
\newblock \emph{Journal of computational biology}, 7\penalty0 (3-4):\penalty0
  601--620, 2000.

\bibitem[Galinier et~al.(1995)Galinier, Habib, and Paul]{galinier1995chordal}
Galinier, P., Habib, M., and Paul, C.
\newblock Chordal graphs and their clique graphs.
\newblock In \emph{International Workshop on Graph-Theoretic Concepts in
  Computer Science}, pp.\  358--371. Springer, 1995.

\bibitem[Ghassami et~al.(2017)Ghassami, Salehkaleybar, and
  Kiyavash]{ghassami2017optimal}
Ghassami, A., Salehkaleybar, S., and Kiyavash, N.
\newblock Optimal experiment design for causal discovery from fixed number of
  experiments.
\newblock \emph{arXiv preprint arXiv:1702.08567}, 2017.

\bibitem[Ghassami et~al.(2018)Ghassami, Salehkaleybar, Kiyavash, and
  Bareinboim]{ghassami2018budgeted}
Ghassami, A., Salehkaleybar, S., Kiyavash, N., and Bareinboim, E.
\newblock Budgeted experiment design for causal structure learning.
\newblock In \emph{International Conference on Machine Learning}, pp.\
  1724--1733. PMLR, 2018.

\bibitem[Ghassami et~al.(2019)Ghassami, Salehkaleybar, Kiyavash, and
  Zhang]{ghassami2019counting}
Ghassami, A., Salehkaleybar, S., Kiyavash, N., and Zhang, K.
\newblock Counting and sampling from markov equivalent dags using clique trees.
\newblock In \emph{Proceedings of the AAAI Conference on Artificial
  Intelligence}, volume~33, pp.\  3664--3671, 2019.

\bibitem[Gillispie \& Perlman(2013)Gillispie and
  Perlman]{gillispie2013enumerating}
Gillispie, S.~B. and Perlman, M.~D.
\newblock Enumerating markov equivalence classes of acyclic digraph models.
\newblock \emph{arXiv preprint arXiv:1301.2272}, 2013.

\bibitem[Greenewald et~al.(2019)Greenewald, Katz, Shanmugam, Magliacane,
  Kocaoglu, Adsera, and Bresler]{greenewald2019trees}
Greenewald, K., Katz, D., Shanmugam, K., Magliacane, S., Kocaoglu, M., Adsera,
  E.~B., and Bresler, G.
\newblock Sample efficient active learning of causal trees.
\newblock In \emph{Advances in Neural Information Processing Systems}, 2019.

\bibitem[Hauser \& B{\"u}hlmann(2014)Hauser and B{\"u}hlmann]{hauser2014two}
Hauser, A. and B{\"u}hlmann, P.
\newblock Two optimal strategies for active learning of causal models from
  interventional data.
\newblock \emph{International Journal of Approximate Reasoning}, 55\penalty0
  (4):\penalty0 926--939, 2014.

\bibitem[He \& Geng(2008)He and Geng]{he2008active}
He, Y.-B. and Geng, Z.
\newblock Active learning of causal networks with intervention experiments and
  optimal designs.
\newblock \emph{Journal of Machine Learning Research}, 9\penalty0
  (Nov):\penalty0 2523--2547, 2008.

\bibitem[Hyttinen et~al.(2013)Hyttinen, Eberhardt, and
  Hoyer]{hyttinen2013experiment}
Hyttinen, A., Eberhardt, F., and Hoyer, P.~O.
\newblock Experiment selection for causal discovery.
\newblock \emph{The Journal of Machine Learning Research}, 14\penalty0
  (1):\penalty0 3041--3071, 2013.

\bibitem[Kocaoglu et~al.(2017)Kocaoglu, Dimakis, and
  Vishwanath]{kocaoglu2017cost}
Kocaoglu, M., Dimakis, A., and Vishwanath, S.
\newblock Cost-optimal learning of causal graphs.
\newblock In \emph{Proceedings of the 34th International Conference on Machine
  Learning-Volume 70}, pp.\  1875--1884. JMLR. org, 2017.

\bibitem[Koller \& Friedman(2009)Koller and Friedman]{koller2009probabilistic}
Koller, D. and Friedman, N.
\newblock \emph{Probabilistic graphical models: principles and techniques}.
\newblock MIT press, 2009.

\bibitem[Kumar \& Madhavan(2002)Kumar and Madhavan]{kumar2002clique}
Kumar, P.~S. and Madhavan, C.~V.
\newblock Clique tree generalization and new subclasses of chordal graphs.
\newblock \emph{Discrete Applied Mathematics}, 117\penalty0 (1-3):\penalty0
  109--131, 2002.

\bibitem[Lindgren et~al.(2018)Lindgren, Kocaoglu, Dimakis, and
  Vishwanath]{lindgren2018experimental}
Lindgren, E., Kocaoglu, M., Dimakis, A.~G., and Vishwanath, S.
\newblock Experimental design for cost-aware learning of causal graphs.
\newblock In \emph{Advances in Neural Information Processing Systems}, pp.\
  5279--5289, 2018.

\bibitem[Maathuis et~al.(2018)Maathuis, Drton, Lauritzen, and
  Wainwright]{maathuis2018handbook}
Maathuis, M., Drton, M., Lauritzen, S., and Wainwright, M.
\newblock \emph{Handbook of graphical models}.
\newblock CRC Press, 2018.

\bibitem[Meek(1995)]{meek}
Meek, C.
\newblock Causal inference and causal explanation with background knowledge.
\newblock In \emph{Proceedings of the Eleventh Conference on Uncertainty in
  Artificial Intelligence}, pp.\  403–410, San Francisco, CA, USA, 1995.
  Morgan Kaufmann Publishers Inc.
\newblock ISBN 1558603859.

\bibitem[Pearl(2009)]{Pearl:2009:CMR:1642718}
Pearl, J.
\newblock \emph{Causality: Models, Reasoning and Inference}.
\newblock Cambridge University Press, New York, NY, USA, 2nd edition, 2009.
\newblock ISBN 052189560X, 9780521895606.

\bibitem[Peters et~al.(2017)Peters, Janzing, and Sch{\"o}lkopf]{PetJanSch17}
Peters, J., Janzing, D., and Sch{\"o}lkopf, B.
\newblock \emph{Elements of Causal Inference - Foundations and Learning
  Algorithms}.
\newblock Adaptive Computation and Machine Learning Series. The MIT Press,
  Cambridge, MA, USA, 2017.

\bibitem[Sachs et~al.(2005)Sachs, Perez, Pe'er, Lauffenburger, and
  Nolan]{sachs2005causal}
Sachs, K., Perez, O., Pe'er, D., Lauffenburger, D.~A., and Nolan, G.~P.
\newblock Causal protein-signaling networks derived from multiparameter
  single-cell data.
\newblock \emph{Science}, 308\penalty0 (5721):\penalty0 523--529, 2005.

\bibitem[Shanmugam et~al.(2015)Shanmugam, Kocaoglu, Dimakis, and
  Vishwanath]{shanmugam2015learning}
Shanmugam, K., Kocaoglu, M., Dimakis, A.~G., and Vishwanath, S.
\newblock Learning causal graphs with small interventions.
\newblock In \emph{Advances in Neural Information Processing Systems}, pp.\
  3195--3203, 2015.

\bibitem[Spirtes et~al.(2000)Spirtes, Glymour, and Scheines]{Spirtes2000}
Spirtes, P., Glymour, C., and Scheines, R.
\newblock \emph{Causation, Prediction, and Search}.
\newblock MIT press, 2nd edition, 2000.

\bibitem[Vandenberghe et~al.(2015)Vandenberghe, Andersen,
  et~al.]{vandenberghe2015chordal}
Vandenberghe, L., Andersen, M.~S., et~al.
\newblock Chordal graphs and semidefinite optimization.
\newblock \emph{Foundations and Trends{\textregistered} in Optimization},
  1\penalty0 (4):\penalty0 241--433, 2015.

\end{thebibliography}
\flushcolsend

\clearpage
\section*{Supplementary material for: Active Structure Learning of Causal DAGs via Directed Clique Trees}
\appendix

\section{Meek Rules}\label{app:meek}
In this section, we recall the \textit{Meek rules} \citep{meek} for propagating orientations in DAGs. Of the standard four Meek rules, two of them only apply when the DAG contains v-structures. Since all DAGs that we need to consider do not have v-structures, we include only the first two rules here.

\begin{prop}[Meek Rules under no v-structures]
\quad 
\begin{enumerate}
    \item \textbf{No colliders:} If $a \rightarrow_G b -_G c$ and $a$ is not adjacent to $c$, then $b \rightarrow_G c$. \label{meekrule:no-collider}
    \item \textbf{Acyclicity:} If $a \rightarrow_G b \rightarrow_G c$ and $a$ is adjacent to $c$, then $a \rightarrow_G c$. \label{meekrule:no-cycle}
\end{enumerate}
\label{prop:meekRules}
\end{prop}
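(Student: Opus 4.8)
The plan is to prove both rules by \emph{soundness against the true DAG}. I would fix the underlying (unknown) DAG $D$ that generated the data, which by assumption is acyclic and contains no v-structures, and treat $G$ as a partially oriented graph whose skeleton coincides with that of $D$ and whose directed edges all agree with $D$ (i.e. $a \rightarrow_G b$ implies $a \rightarrow_D b$). This is the standing invariant maintained by any correct orientation procedure, and it is exactly what lets me lift the local hypotheses, stated about $G$, into statements about $D$ so that the two global assumptions can be applied. Under this framing each rule reduces to showing that the \emph{opposite} orientation of the target edge in $D$ forces a forbidden configuration, so the target edge must be oriented as claimed.

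For Rule~1 (No colliders) I would argue by contradiction using the no-v-structure assumption. The hypothesis $a \rightarrow_G b$ gives $a \rightarrow_D b$, and $b -_G c$ means $\{b,c\}$ is an edge of $D$ oriented one of the two ways. Suppose the target orientation fails, i.e. $c \rightarrow_D b$. Then $a \rightarrow_D b \leftarrow_D c$ is a collider at $b$, and since $a$ is not adjacent to $c$ it is \emph{unshielded}, hence a v-structure in $D$ --- contradicting the assumption that $D$ has none. Therefore $b \rightarrow_D c$, which justifies orienting $b \rightarrow_G c$.

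For Rule~2 (Acyclicity) I would again argue by contradiction, now invoking acyclicity instead. The hypotheses give $a \rightarrow_D b \rightarrow_D c$, and the adjacency of $a$ and $c$ guarantees the $a$--$c$ edge exists in $D$, oriented one of the two ways. If it were $c \rightarrow_D a$, then $a \rightarrow_D b \rightarrow_D c \rightarrow_D a$ would be a directed cycle, contradicting that $D$ is a DAG. Hence $a \rightarrow_D c$, justifying $a \rightarrow_G c$. Note that it is precisely the existence of the $a$--$c$ edge that makes ruling out $c \rightarrow_D a$ leave exactly the desired orientation.

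The argument is essentially immediate once the soundness framing is in place; the only point requiring care --- and the closest thing to an obstacle --- is stating the invariant precisely (skeletons coincide, and the directed edges of $G$ form a subset of those of $D$), so that the purely local configurations can be promoted to claims about $D$ and the two global assumptions discharged. I would also remark explicitly that these two rules suffice here exactly because the no-v-structure assumption removes the need for Meek's remaining two rules, which only trigger in the presence of pre-existing colliders and therefore never apply to the DAGs considered in this work.
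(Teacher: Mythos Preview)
Your argument is correct and is the standard soundness proof for these two Meek rules. However, the paper does not actually prove this proposition: it is stated in Appendix~A as a recalled fact, with a citation to Meek's original work, and no proof is given. So there is nothing in the paper to compare your approach against; you have simply supplied the (short, standard) justification that the paper omits.

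One small remark on your final paragraph: the proposition itself is a \emph{soundness} statement (each rule yields a valid orientation), and your contradiction arguments establish exactly that. Your closing comment --- that the remaining two Meek rules are unnecessary under the no-v-structure assumption --- is a \emph{completeness} claim, which is logically separate. The paper makes the same assertion (``Of the standard four Meek rules, two of them only apply when the DAG contains v-structures'') and likewise does not prove it. Your heuristic that those rules ``only trigger in the presence of pre-existing colliders'' is morally right for Rule~3 (whose premise contains an explicit collider $b \rightarrow d \leftarrow c$), but Rule~4's premise does not literally contain a collider; the precise statement is that the configurations required by Rules~3 and~4 cannot arise in a PDAG obtained from a moral DAG after closing under Rules~1 and~2. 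If you want to keep that remark, it would be cleaner to phrase it that way or simply cite Meek for the completeness direction, as the paper does.
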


\section{The running intersection property}\label{app:running-intersection}

A useful and well-known property of clique trees, used throughout proofs in the remainder of the appendix, is the following:
\begin{prop*}[Running intersection property]\label{prop:running-intersection}
    Let $\gamma = \langle C_1, \ldots, C_K \rangle$ be the path between $C_1$ and $C_K$ in the clique tree $T_G$. Then $C_1 \cap C_K \subseteq C_k$ for all $C_k \in \gamma$. 
\end{prop*}

We refer the interested reader to \citet{maathuis2018handbook}.


\section{Proof of Proposition \ref{prop:collider-implies-inclusion}}\label{app:proof-collider-implies-inclusion}
This proposition describes the connection between arrow-meets and intersection comparability.
In order to prove this proposition, we begin by establishing the following propositions:
\begin{prop}
\label{prop:no-edge-adjacent}
Suppose $C_1$ and $C_2$ are adjacent in $T_G$. Then for all $v_1 \in C_1 \setminus C_2$, $v_2 \in C_2 \setminus C_1$, $v_1$ and $v_2$ are not adjacent in $G$.
\end{prop}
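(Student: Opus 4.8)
The plan is to argue by contradiction, using only the induced subtree property built into the definition of a clique tree. Suppose, for contradiction, that some $v_1 \in C_1 \setminus C_2$ and $v_2 \in C_2 \setminus C_1$ are adjacent in $G$. Then $\{v_1, v_2\}$ is a clique of $G$, so it is contained in some maximal clique $C \in \cC(G)$. Observe that $C \neq C_1$ (since $v_2 \in C \setminus C_1$) and $C \neq C_2$ (since $v_1 \in C \setminus C_2$), so $C$ is a genuinely distinct vertex of the clique tree.

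Next I would invoke the induced subtree property twice. Let $T_{v_1}$ denote the subtree of $T_G$ induced on the set of cliques containing $v_1$, and $T_{v_2}$ the subtree induced on the cliques containing $v_2$. Then $C_1, C \in T_{v_1}$ and $C_2, C \in T_{v_2}$, while $C_2 \notin T_{v_1}$ (because $v_1 \notin C_2$) and $C_1 \notin T_{v_2}$ (because $v_2 \notin C_1$).

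The crux is to localize these subtrees relative to the tree edge $C_1 -_{T_G} C_2$. Deleting this edge from $T_G$ splits it into two connected components, $A \ni C_1$ and $B \ni C_2$. Since $T_{v_1}$ is connected, contains $C_1 \in A$, but does not contain $C_2$, no path inside $T_{v_1}$ can cross the edge $C_1 -_{T_G} C_2$ (that would force it through $C_2$); hence $T_{v_1} \subseteq A$. Symmetrically $T_{v_2} \subseteq B$. But $C$ lies in both $T_{v_1}$ and $T_{v_2}$, forcing $C \in A \cap B = \emptyset$ --- a contradiction.

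I expect the only real subtlety to be this localization argument, i.e. making precise that a connected subtree containing one endpoint of an edge but not the other lies entirely on that endpoint's side. If one prefers, the same conclusion follows directly from the running intersection property (Appendix \ref{app:running-intersection}): along the path from $C_1$ to $C$, either $C_2$ lies on it, in which case $v_1 \in C_1 \cap C \subseteq C_2$, contradicting $v_1 \notin C_2$; or $C_2$ does not, in which case $C_1$ lies on the path from $C_2$ to $C$ (since $C_1$ and $C_2$ are adjacent, one of the two is the median of $\{C_1, C_2, C\}$ in the tree), so $v_2 \in C_2 \cap C \subseteq C_1$, contradicting $v_2 \notin C_1$. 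Either route is short; the induced-subtree version is the most transparent.
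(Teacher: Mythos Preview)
Your proof is correct and follows essentially the same approach as the paper: assume $v_1$ and $v_2$ are adjacent, extend to a maximal clique containing both, and use the induced subtree property to contradict the adjacency of $C_1$ and $C_2$ in $T_G$. The only cosmetic difference is that the paper extends the larger clique $(C_1 \cap C_2) \cup \{v_1, v_2\}$ rather than just $\{v_1, v_2\}$, and then states the contradiction more tersely (``$C_3$ must lie between $C_1$ and $C_2$''); your explicit localization argument via the two components of $T_G - \{C_1 -_{T_G} C_2\}$ spells out exactly what that phrase means.
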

\begin{proof}
We prove the contrapositive. Suppose $v_1 \in C_1 \setminus C_2$ and $v_2 \in C_2 \setminus C_1$ are adjacent. Then $C_3' = (C_1 \cap C_2) \cup \{v_1, v_2\}$ is a clique and belongs to some maximal clique $C_3$. For the induced subtree property to hold, $C_3$ must lie between $C_1$ and $C_2$, i.e., $C_1$ and $C_2$ are not adjacent. 
\end{proof}

\begin{restatable}[]{prop}{radiatingIntersectionProp}\label{prop:radiating-intersection}
Let $D$ be a moral DAG, there are no undirected edges in any of its directed clique trees $T_D$, and therefore neither in its directed clique graph $\Gamma_D$.
\end{restatable}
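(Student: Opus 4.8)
The plan is a proof by contradiction. Suppose some directed clique tree $T_D$ of the moral DAG $D$ contains an undirected edge $C_1 -_{T_D} C_2$. By Definition~\ref{def:dct} this means the mark at $C_2$ (from the ordered pair $(C_1,C_2)$) and the mark at $C_1$ (from $(C_2,C_1)$) are both tails. First I would note that then $S := C_1 \cap C_2 \neq \emptyset$, since if $S = \emptyset$ the arrowhead condition of Definition~\ref{def:dct} is vacuously satisfied in both orderings and the edge would be bidirected, not undirected. I would also fix two facts used repeatedly: $D$ has no v-structures (it is a moral DAG), and the induced subgraph of $D$ on any clique is a transitive tournament, so $D|_S$ has a unique sink $s^\ast$ (characterized by $s \to_D s^\ast$ for all $s \in S \setminus \{s^\ast\}$).

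The key step is to show that each tail forces an arrow into this \emph{same} sink $s^\ast$. From the tail at $C_2$, the negated arrowhead condition gives $u \in C_2 \setminus C_1$ and $s \in S$ with $u \to_D s$ — a genuine arrow, because $u$ and $s$ lie in the common clique $C_2$ and so are adjacent. If $s = s^\ast$ we are done; otherwise $u, s^\ast, s$ all lie in the clique $C_2$, on which $D$ is transitive, so $s^\ast \to_D u$ would imply $s^\ast \to_D s$, contradicting that $s^\ast$ is the sink of $D|_S$; hence $u \to_D s^\ast$. This yields some $u_2 \in C_2 \setminus C_1$ with $u_2 \to_D s^\ast$, and by the symmetric argument on the tail at $C_1$ some $u_1 \in C_1 \setminus C_2$ with $u_1 \to_D s^\ast$.

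To conclude I would apply Proposition~\ref{prop:no-edge-adjacent}: $u_1 \in C_1 \setminus C_2$ and $u_2 \in C_2 \setminus C_1$ are non-adjacent in $\skel(D)$, so $u_1 \to_D s^\ast \leftarrow_D u_2$ is a v-structure, contradicting morality of $D$. Therefore $T_D$ has no undirected edges. Since the directed clique graph $\Gamma_D$ is the union of all DCTs of $D$ and each of its edges inherits its orientation from $D$ (as noted after Definition~\ref{def:dct}), $\Gamma_D$ has no undirected edges either.

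The one point I expect to need care is the recurring inference from ``no arrowhead at an endpoint'' to ``an actual arrow pointing into $S$'': this is valid only because the two vertices involved share a maximal clique and are thus adjacent, and it is exactly Proposition~\ref{prop:no-edge-adjacent} that isolates the single pair $(u_1,u_2)$ that need \emph{not} be adjacent — which is precisely the pair we need non-adjacent to manufacture the forbidden v-structure. The rest is routine transitive-tournament bookkeeping inside one clique.
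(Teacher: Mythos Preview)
Your argument is correct. It differs from the paper's in the final contradiction: the paper takes the two witnesses $v_1\to v_{12}$ and $v_2\to v_{12}'$ given by the two tails, first rules out $v_{12}=v_{12}'$ (which would already be a v-structure via \rref{prop:no-edge-adjacent}), then uses morality twice more to force $v_{12}\to v_2$ and $v_{12}'\to v_1$, and closes with the $4$-cycle $v_1\to v_{12}\to v_2\to v_{12}'\to v_1$, invoking acyclicity. You instead route both external arrows to a single canonical target, the sink $s^\ast$ of the transitive tournament $D|_{S}$, and finish with a v-structure $u_1\to s^\ast\leftarrow u_2$ directly. Your version avoids the implicit case split on whether the two witnesses in $S$ coincide and needs no cycle-chasing; the price is the small extra observation that a tail witness can always be upgraded to one pointing at $s^\ast$ by transitivity inside the clique. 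Both proofs rely on \rref{prop:no-edge-adjacent} in the same essential way.
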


\begin{proof}
(By contradiction). Suppose $v_1 \to_D v_{12}$ for $v_1 \in C_1 \setminus C_2$ and $v_{12} \in C_1 \cap C_2$. Suppose $v_2 \to_D v_{12}'$ for $v_2 \in C_2 \setminus C_1$, and $v_{12}' \in C_1 \cap C_2$. By the assumption that $D$ does not have v-structures and by \rref{prop:no-edge-adjacent}, $v_{12} \neq v_{12}'$.
Similarly, since $v_{12} \rightarrow_D v_2$  (otherwise there would be a v-structure with $v_1 \rightarrow_D v_{12}$) and $v_{12}' \rightarrow_D v_1$ (otherwise there would be a collider with $v_2 \rightarrow_D v_{12}'$). 
However, this induces a cycle $v_1 \rightarrow_D v_{12} \rightarrow_D v_2 \rightarrow_D v_{12}' \rightarrow_D v_1$.
\end{proof}

Now we can finally prove the final proposition:

\arrowMeetsProp*
\begin{proof}
We prove the contrapositive. If $C_1 \cap C_2 \not\subseteq C_2 \cap C_3$ and $C_1 \cap C_2 \not\supseteq C_2 \cap C_3$, then there exist nodes $v_{12} \in (C_1 \cap C_2) \setminus C_3$ and $v_{23} \in (C_2 \cap C_3) \setminus C_1$.
Since $v_{12}$ and $v_{23}$ are both in the same clique $C_2$ they are adjacent in the underlying DAG $D$, i.e. $v_{12} -_{D} v_{23}$.
Moreover since $C_1 \rightarrowstar_{T_D} C_2$ by 
the definition of a directed clique graph,
this edge is oriented as $v_{12} \rightarrow_D v_{23}$. Then by \rref{prop:radiating-intersection}, $C_2 \rightarrow_{T_D} C_3$.
\end{proof}

\section{Proof of \rref{lemma:construction-proof}}\label{app:construction-proof}

\uniqueSourceLemma*
\begin{proof}
To construct a CDCT with no arrow-meets, our approach is to first construct the DCT in a special way, so that after contraction, there are no arrow-meets. In particular, we need a DCT such that each bidirected component has at most one incoming edge. A DCT in which this does not hold is said to have \emph{conflicting sources}, formally:

\begin{defn}\label{def:noconflictsource}
A directed clique tree $T_D$ has two \emph{conflicting sources} $C_0$ and $C_{K+1}$, if $C_0 \rightarrow_{T_D} C_1$ and $C_{K} \leftarrow_{T_D} C_{K+1}$, and $C_1$ and $C_{K}$ are part of the same bidirected component $B \in \cB(T_D)$, i.e. $C_1, C_K \in B$, possibly with $C_1 = C_K$.
\end{defn}
An example of a clique tree with conflicting sources is given in \rref{fig:conflicting-sources2}.
The first DCT has conflicting sources $\{1,2\}$ and $\{2,3,4\}$, while the second DCT does not have conflicting sources.

\begin{figure}
\centering
    \includegraphics[width=\textwidth]{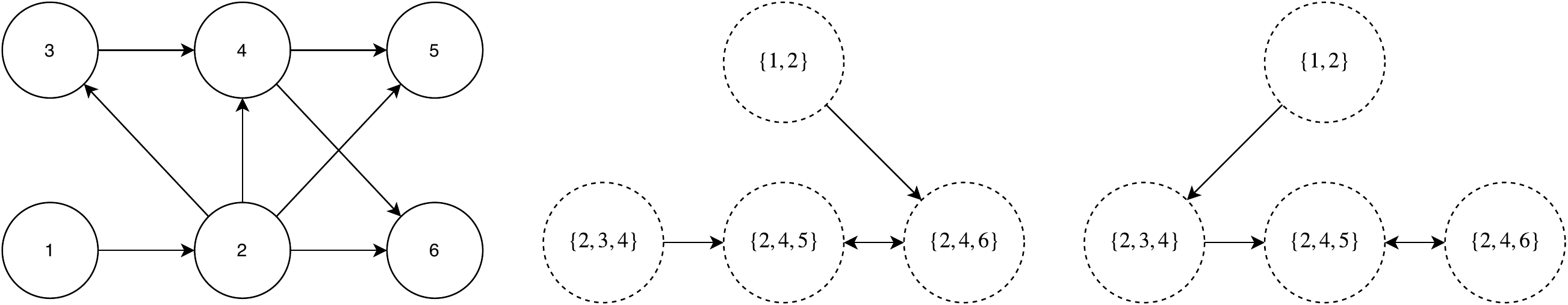}
    \caption{A DAG, its DCT with a conflicting source, and its DCG without a conflicting source.}
    \label{fig:conflicting-sources2}
\end{figure}

We will now show that \rref{alg:dct-construction} constructs a DCT with no conflicting sources. This is sufficient to prove \rref{lemma:construction-proof}, since after contraction, the resulting CDCT will have no arrow-meets.

First, \rref{alg:dct-construction} constructs a weighted clique graph $W_G$, which is a complete graph over vertices $\cC(G)$, with the edge $C_1 -_{W_G} C_2$ having weight $|C_1 \cap C_2|$. We will show that at each iteration $i$, there are no conflicting sources in $T_D$. This is clearly true for $i = 0$ since $T_D$ has no edges to begin.

At a given iteration $i$, suppose that the candidate edge $e = C_1 \rightarrowstar C_2$ is a maximum-weight edge that does not create a cycle, i.e. $e \in E$, but that it will induce conflicting sources. That is, the current $T_D$ already contains $C_2 \leftarrowstar C_3 \leftarrowstar \ldots \leftarrowstar C_{K-1} \leftarrow C_K$, where we choose $C_K$ that has no parents. Note that we can do this by following any directed/bidirected edges upstream (away from $C_2$), which must terminate since $T_D$ is a tree and thus does not have cycles.

By \rref{prop:collider-implies-inclusion}, $C_1 \cap C_2 \lesseqgtr C_2 \cap C_3$. In this case, $C_1 \cap C_2 \subseteq C_2 \cap C_3$, since $C_2 \leftarrowstar C_3$ was already picked as an edge and thus cannot have less weight (in other words, it cannot have a smaller intersection) than $C_1 \rightarrowstar C_2$. Furthermore, since $C_1 - C_2 - C_3$ is a valid subgraph of the clique tree, we must have $C_1 \cap C_3 \subseteq C_2$ by the running intersection property of clique trees (see \rref{app:running-intersection}). Combined with $C_1 \cap C_2 \subseteq C_2 \cap C_3$, we have $C_1 \cap C_3 = C_1 \cap C_2$. 
This means that $C_1 - C_3$ is also a valid edge in the weighted clique graph and it has the same weight ($C_1 \cap C_3$) as the $C_1 - C_2$ edge ($C_1 \cap C_2$). Moreover since $C_1 \rightarrowstar C_2$ then this edge will also preserve the same orientations $C_1 \rightarrowstar C_3$.
Thus, $C_1 \rightarrowstar C_3$ is another candidate maximum-weight edge that does not create a cycle. 
We may continue this argument, replacing $C_2$ by $C_k$, to show that $C_1 \rightarrowstar C_{K}$ is a maximum weight edge that does not create a cycle. Since $C_K$ has no parents, there are still no conflicting sources after adding $C_1 \rightarrowstar C_K$. Since we always pick a maximum-weight edge that does not create a cycle, this algorithm creates a maximum-weight spanning tree of $W_G$ \citep{koller2009probabilistic}, which is guaranteed to be a clique tree of $G$ \citet{koller2009probabilistic}.
\end{proof}

\begin{algorithm}[t]
\begin{algorithmic}[1]
    \STATE \textbf{Input:} DAG $D$
    \STATE let $W_G$ be the weighted clique graph of $G = \skel(D)$
    \STATE let $T_D$ be the empty graph over $V(W_G)$ 
    \FOR {$i = 1,\ldots,|V(W_G)|-1$}
        \STATE let $E$ be the set of maximum-weight edges of $W_G$ that do not create a cycle when added to $T_D$
        \STATE select $e \in E$ s.t. there are no conflicting sources\label{line:pick_edge}
        \STATE add $e$ to $T_D$
    \ENDFOR
    \STATE Contract the bidirected components of $T_D$ and create the CDCT $\tT_D$
    \STATE \textbf{Return} $\tT_D$
\end{algorithmic}
\caption{\textsc{Construct\_DCT}}
\label{alg:dct-construction}
\end{algorithm}


\section{Proof of Theorem \ref{thm:vis-characterization}}\label{app:proof-vis-characterization}

We restate the theorem here:
\visDecompositionTheorem*

In order to prove the following theorem we start by introducing a few useful concepts and results.

\begin{figure*}[t]
    \centering
    \includegraphics[width=.7\textwidth]{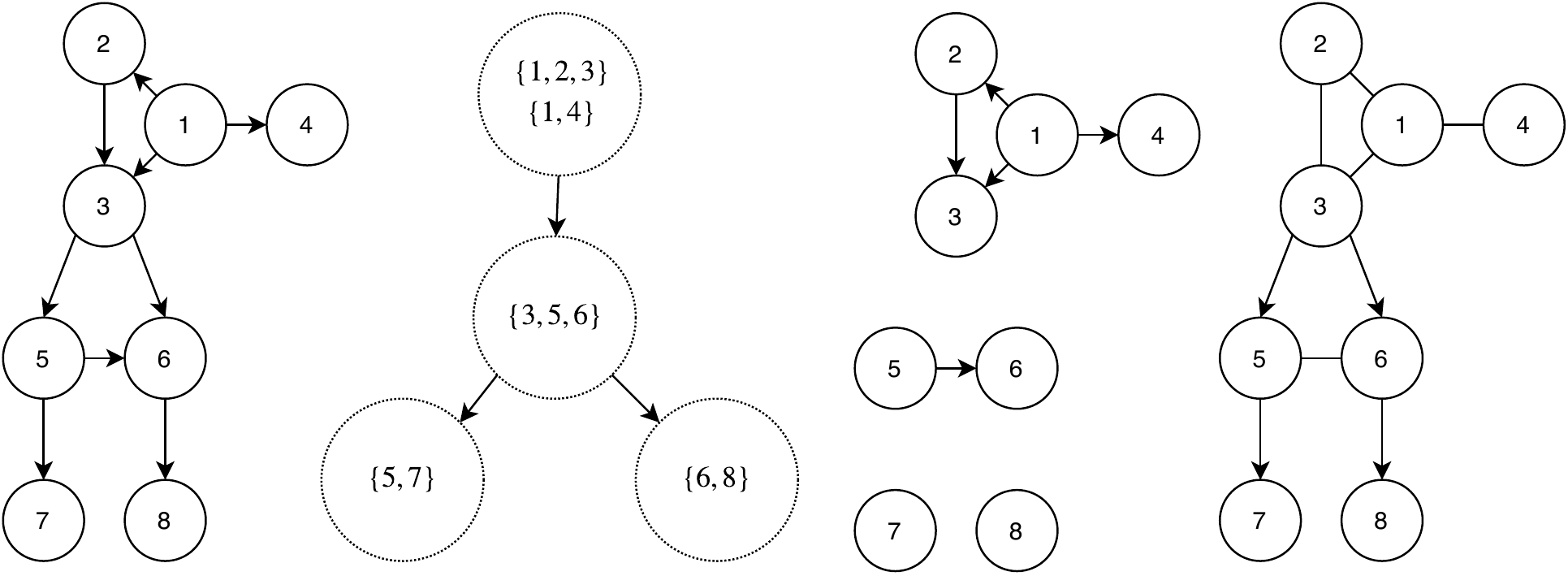}
    \caption{A DAG, its contracted directed clique tree, its residuals, and its residual essential graph.}
    \label{fig:residuals-example2}
\end{figure*}

\subsection{Residual essential graphs} 
The residuals decompose the DAG into parts which must be separately oriented. Intuitively, after adding orientations \textit{between} all pairs of residuals, the inside of one residual is cut off from the insides of other residuals. The following definition and lemmas formalize this intuition.

\begin{defn}\label{def:residual-essential-graph}
    The \emph{residual essential graph} $\cE_\res(D)$ of $D$ has the same skeleton as $D$, with $v_1 \rightarrow_{\cE_\res(D)} v_2$ iff $v_1 \rightarrow_D v_2$ and $v_1$ and $v_2$ are in different residuals of $\tT_D$.
\end{defn}

The following lemma establishes that after finding the orientations of edges in the DCT, the only remaining unoriented edges are in the residuals.

\begin{restatable}{lemma}{dctToReg}\label{lemma:dct-implies-res}
The oriented edges of $\cE_\res(D)$ can be inferred directly from the oriented edges of $T_D$.
\end{restatable}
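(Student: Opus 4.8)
The plan is to exhibit an explicit procedure that reconstructs $\cE_\res(D)$ from $T_D$ alone, so that the only nontrivial point is that the orientation of each cross-residual skeleton edge is \emph{forced} by (and recoverable from) the orientations in $T_D$. First I would record the combinatorial facts about residuals. By the induced subtree property, for each $v \in V(D)$ the set $\mathcal{S}_v$ of cliques containing $v$ is a subtree of the clique tree; contracting bidirected components, its image in the CDCT $\tT_D$ is connected, so once we root $\tT_D$ at its (unique, by \rref{lemma:construction-proof}) source component, $\mathcal{S}_v$ has a well-defined topmost component $B_v$. Using the running intersection property one checks that $v$ belongs to $\Res_{\tT_D}(B)$ exactly when $B_v = B$; hence the residuals partition $V(D)$ and the partition is computable from $T_D$. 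Likewise $\skel(D)$ is recovered from $T_D$ (it is the union, over cliques, of the complete graphs on their vertex sets), every edge $\{v,w\}$ of $\skel(D)$ lies in some maximal clique $C$ whose component $B_C$ is a common weak descendant of $B_v$ and $B_w$, and since $\tT_D$ is a tree, $B_v$ and $B_w$ are comparable.

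Next is the heart of the argument. Fix an edge $\{v,w\}$ with $B_v \neq B_w$, say $B_v$ a strict ancestor of $B_w$; I claim $v \to_D w$, so that the edge is oriented, and oriented from the endpoint whose home component is the ancestor. I would prove the auxiliary statement that $v \to_D w$ \emph{implies} $B_v$ is a weak ancestor of $B_w$; applying this to the hypothetical reverse edge $w \to_D v$ would force $B_w$ to be a weak ancestor of $B_v$, contradicting strict ancestry, so $v \to_D w$. To prove the auxiliary statement, suppose $v \to_D w$ but, for contradiction, $B_w$ is a strict ancestor of $B_v$. Then no clique of $B_w$ contains $v$ (it would lie strictly above $B_v$, contradicting that $B_v$ is topmost for $v$), while $w$ lies in its topmost clique $R_w \in B_w$; and $v,w$ share the clique $C$, a descendant of $B_v$. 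Walk along the unique clique-tree path from $C$ up to $R_w$: every clique on it contains $w$, and the path must leave $\mathcal{S}_v$, so let $R'$ be the last clique on it containing $v$ and $Q$ the next one, so $Q$ is the parent of $R'$ and $v \notin Q$. The crucial structural fact is that every parent-to-child clique-tree edge carries an arrowhead at the child in $T_D$: inter-component such edges are directed downstream (this is how $\tT_D$'s orientation is defined) and intra-component ones are bidirected, and by \rref{prop:radiating-intersection} there are no undirected edges. Hence $Q \staredge_{T_D} R'$ has an arrowhead at $R'$, so \rref{def:dct} applied with $w \in Q \cap R'$ and $v \in R' \setminus Q$ yields $w \to_D v$, a contradiction.

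Combining these, the reconstruction is: contract $T_D$ to $\tT_D$, root it at its source, compute each $B_v$, form the residuals as $\{v : B_v = B\}$, and orient every skeleton edge whose endpoints have distinct home components from the ancestor side to the descendant side. The first paragraph shows this reproduces the skeleton and the oriented/unoriented pattern of $\cE_\res(D)$, and the second shows the orientations agree with $D$. I expect the main obstacle to be the path-tracing step in the auxiliary claim: pinning down precisely that the walk from $C$ toward $R_w$ stays inside $\mathcal{S}_w$ while exiting $\mathcal{S}_v$, so that the witnessing edge $Q \staredge R'$ genuinely has $w$ on both sides and $v$ only on the child side. This is where the induced subtree and running intersection properties, together with the no-arrow-meets normalization of the CDCT, must all be invoked carefully.
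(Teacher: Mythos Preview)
Your proposal is correct and proceeds along the same lines as the paper's proof: both recover the residual partition from the CDCT and then show that a cross-residual edge $\{v,w\}$ is oriented from the endpoint whose home component is the CDCT-ancestor, justifying this via a clique-tree path together with \rref{def:dct}. The paper packages the same content by introducing an equivalent description $\cE'_\res(D)$ (with $i \to j$ iff $j \in \Res_{\tT_D}(B)$ and $i$ lies in the parent of $B$) and arguing it agrees with $\cE_\res(D)$; your contrapositive framing---reading off $w \to_D v$ directly from the arrowhead at $R'$---applies \rref{def:dct} in its forward direction and is arguably tidier than the paper's converse step ``$v_1 \to_D v_2$ implies $C_{1\setminus 2} \to_{T_D} C_{12}$'', which is asserted without further argument.
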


\begin{proof}
In order to prove this theorem, we first introduce an alternative characterization of the residual essential graph defined only in terms of the orientations in the contracted DCT and prove its equivalence to \rref{def:residual-essential-graph}. Let $\cE'_\res(D)$ have the same skeleton as $D$, with $i \rightarrow_{\cE'_\res(D)} \!\!j$ if and only if $j \in \Res_{\tT_D}(B)$ and $i \in P$, for some $B \in \cB(T_D)$ and its unique parent $P$.

Suppose $v_1 \rightarrow_D v_2$ for $v_1 \in R_1$ and $v_2 \in R_2$, with $R_1, R_2 \in \cR(\tT_D)$ and $R_1 \neq R_2$.
Let $R_1 = \Res_{\tT_D}(B_1)$ and $R_2 = \Res_{\tT_D}(B_2)$ for $B_1, B_2 \in \cB(\tT_D)$.
There must be at least one clique $C_1 \in B_1$ that contains $v_1$, and likewise one clique $C_2 \in B_2$ that contains $v_2$.
Since $v_1$ and $v_2$ are adjacent, by the induced subtree property there must be some maximal clique on the path between $C_1$ and $C_2$ which contains $v_1$ and $v_2$.
Let $C_{12}$ be the clique on this path containing $v_1$ and $v_2$ that is closest to $C_1$.
Then, the next closest clique to $C_1$ must not contain $v_2$, so we will call this clique $C_{1 \setminus 2}$.
Since $v_1 \rightarrow_D v_2$, we know that $C_{1 \setminus 2} \rightarrow_{T_D} C_{12}$, hence $C_{1 \setminus 2}$ and $C_{12}$ are in different bidirected components, and thus $v_1 \rightarrow_{\cE_\res(D)} v_2$.
\end{proof}

\begin{restatable}{lemma}{regComplete}\label{lemma:residual-essential-graph-complete}
    The $\cE_\res(D)$ is complete under Meek's rules \citep{meek}.
\end{restatable}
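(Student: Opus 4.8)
The plan is to verify directly that neither of the applicable Meek rules (\rref{prop:meekRules}; $D$, and hence $\cE_\res(D)$, has no v-structures, so only rules~1 and~2 matter) can orient a new edge in $\cE_\res(D)$. Throughout I use the characterization of the oriented edges from \rref{lemma:dct-implies-res}: $v_1 \rightarrow_{\cE_\res(D)} v_2$ holds exactly when $v_1 \rightarrow_D v_2$ and $v_1,v_2$ lie in different residuals — equivalently, $v_2 \in \Res_{\tT_D}(B)$ and $v_1$ appears in a clique of the parent of $B$ in $\tT_D$. A preliminary observation, used for rule~2: the components of $\tT_D$ in whose cliques a fixed vertex $v$ appears form a subtree of $\tT_D$ (the cliques of $T_D$ containing $v$ form a subtree by the induced subtree property, and contraction preserves connectivity), so $v$ lies in a unique topmost such component $B_v$, and $v \in \Res_{\tT_D}(B_v)$.

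\textbf{Meek rule~2.} Suppose $a \rightarrow_{\cE_\res(D)} b \rightarrow_{\cE_\res(D)} c$ with $a$ and $c$ adjacent in $D$. The oriented edges give $a \rightarrow_D b \rightarrow_D c$, so acyclicity of $D$ forces $a \rightarrow_D c$; it remains to check $a$ and $c$ are in different residuals. From $a \rightarrow_{\cE_\res(D)} b$, the vertex $a$ appears in a clique of the parent of $B_b$, so $B_a$ is an ancestor of that parent, hence a strict ancestor of $B_b$; symmetrically $B_b$ is a strict ancestor of $B_c$. So $B_a \neq B_c$, the edge $a \rightarrow_D c$ is already oriented in $\cE_\res(D)$, and rule~2 does nothing.

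\textbf{Meek rule~1.} Suppose $a \rightarrow_{\cE_\res(D)} b$, $b -_{\cE_\res(D)} c$, and $a$ is not adjacent to $c$ in $D$; I aim for a contradiction. The oriented edge gives $a \rightarrow_D b$ with $b \in \Res_{\tT_D}(B)$ and $a$ in a clique of $P$, the parent of $B$; since $b - c$ is undirected, $b$ and $c$ lie in the same residual, so $c \in \Res_{\tT_D}(B)$ and $b,c$ are adjacent in $D$; and no v-structures together with $a \rightarrow_D b$, $a \not\sim c$ force $b \rightarrow_D c$. Let $C_P \rightarrow_{T_D} C_B$ be the unique edge of $T_D$ between a clique of $P$ and a clique of $B$, with separator $S = C_P \cap C_B$; deleting it splits $T_D$ into a $P$-side and a $B$-side whose common vertices are exactly $S$. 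Since $b \in B \setminus P$, $b \notin S$, so every clique containing $b$ is on the $B$-side; as $a$ is adjacent to $b$ a common clique of $a,b$ is on the $B$-side, while $a$ also lies in a clique of $P$ on the $P$-side, so $a \in S \subseteq C_B$. Likewise $c \notin S$ and all of $c$'s cliques lie on the $B$-side, so the clique-tree path between a $B$-clique of $b$ and one of $c$ stays in $B$ and — since $b \sim c$ — contains a maximal clique $C_{bc} \in B$ with $b,c \in C_{bc}$. It then suffices to show $S \subseteq C_{bc}$: for then $a \in C_{bc}$, so $a$ is adjacent to $c$, a contradiction. I would prove $S \subseteq C_{bc}$ by induction along the path $C_B = E_0, E_1, \dots, E_m = C_{bc}$ inside $B$; each $E_i \staredge E_{i+1}$ is bidirected, so by \rref{def:dct} the set $E_i \cap E_{i+1}$ sends $D$-arrows into both $E_i \setminus E_{i+1}$ and $E_{i+1} \setminus E_i$, and combining this with the running intersection property (\rref{app:running-intersection}) — together with the base fact that each $s \in S$ points in $D$ to all of $C_B \setminus S$ because $C_P \rightarrow_{T_D} C_B$, so $S$-vertices are never pointed into from within the cliques of $B$ — should show that no vertex of $S$ is dropped as one moves from $C_B$ to $C_{bc}$.

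\textbf{Main obstacle.} Rule~2 is routine; essentially all the difficulty is in rule~1, and in particular in the claim $S \subseteq C_{bc}$, equivalently that the ``source'' separator vertex $a$ is adjacent in $D$ to every $D$-neighbour of $b$ inside $\Res_{\tT_D}(B)$. The induction above is delicate: one must rule out that some $s \in S$ is shed at a bidirected edge $E_i \leftrightarrow E_{i+1}$, which would require a vertex of $E_i \cap E_{i+1}$ to point into $s$ in $D$; doing this cleanly needs careful joint use of \rref{def:dct} and running intersection, and this is exactly where the bidirectedness — not merely directedness — of $B$'s internal edges is used. I would also separately verify the two clique-tree facts invoked above: uniqueness of the crossing edge $C_P \rightarrow_{T_D} C_B$ (two crossing edges between the connected subtrees $P$ and $B$ would make a cycle in $T_D$), and that the common clique of $b$ and $c$ may be taken inside $B$ (the subtree of $b$-cliques and the subtree of $c$-cliques intersect, and the path between two such cliques passes through their intersection).
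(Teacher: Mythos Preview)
Your handling of Meek rule~2 is correct and, in fact, cleaner than the paper's: you reduce it to the one-line observation that the ``topmost'' components $B_a, B_b, B_c$ are strictly nested along the CDCT, whereas the paper chases several paths in $\tT_D$ and invokes the induced-subtree property repeatedly to reach the same conclusion.

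For Meek rule~1 you take a genuinely different route. The paper does \emph{not} assume $b$ and $c$ share a residual and then try to exhibit an $a$--$c$ adjacency; instead it argues directly that $j \rightarrow_{\cE_\res(D)} k$. The key observation there is purely at the CDCT level: since $\{i,j,k\}$ is not a clique, no maximal clique contains all three, so the $\{i,j\}$-cliques and the $\{j,k\}$-cliques are disjoint subsets of $\cC(G)$. One then picks the closest pair of bidirected components $B_{ij}, B_{jk}$ containing such cliques, argues $B_{ij}$ is upstream of $B_{jk}$, and observes that $P := \pa_{\tT_D}(B_{jk})$ contains $j$ (induced-subtree property on $j$) but not $k$ (minimality of the choice of $B_{jk}$), giving $j \rightarrow_{\cE_\res(D)} k$ immediately from the characterization in \rref{lemma:dct-implies-res}.

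Your contradiction approach instead hinges on the structural claim $S \subseteq C_{bc}$ (so that $a \in C_{bc}$), and here there is a real gap. Your assertion that ``$S$-vertices are never pointed into from within the cliques of $B$'' is not correct as stated: vertices of $S$ may point to one another, and the tail at $C_P$ only guarantees a parent $p \in C_P \setminus C_B$ for \emph{some} $s' \in S$, not for the particular $a$ you care about. Concretely, if $a \in E_i \setminus E_{i+1}$ then the bidirected edge $E_i \leftrightarrow E_{i+1}$ forces every $w \in E_i \cap E_{i+1}$ to satisfy $w \rightarrow_D a$; your no-v-structure-with-$p$ idea then yields only $E_i \cap E_{i+1} \subseteq S$, not a contradiction, and pushing further requires tracking which elements of $S$ actually have parents outside $C_B$ --- this does not fall out of the hypotheses you have set up. (Indeed, one can build DCTs with $C_P \rightarrow C_B \leftrightarrow E_1$ and $S \not\subseteq E_1$; your claim is only saved because in those examples the rule-1 configuration happens not to arise, which is exactly what a correct proof must explain.) The lemma you want may well be true, but it needs substantially more than the sketch you give, and the paper's CDCT-level argument sidesteps the issue entirely by never descending into the internal clique structure of a bidirected component.
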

\begin{proof}
Since Meek rules are sound and complete rules for orienting PDAGs \citep{meek}, and in our setting only two of the Meek rules apply (see \rref{prop:meekRules} in \rref{app:meek}), it suffices to show that neither applies for residual essential graphs.

First, suppose $i \rightarrow_{\cE_\res(D)} j$ and $j \rightarrow_{\cE_\res(D)} k$.
We must show that if $i$ and $k$ are adjacent, then $i \rightarrow_{\cE_\res(D)} k$, i.e. the acyclicity Meek rule does not need to be invoked.

We use the alternative characterization of $\cE_\res(D)$ from the proof of \rref{lemma:dct-implies-res}, which establishes that $i \rightarrow_\cE j$ iff. $j \in \Res_{\cT_D}(B)$ and $i \in P$ for some $B \in \cB(T_D)$ and its unique parent $P$.

Since $j \rightarrow_{\cE_\res(D)} k$, there must exist some component $B_{jk} \in \cB(T_D)$ containing $j$ and $k$ whose parent component $B_{j \setminus k}$ contains $j$ but not $k$, i.e. $B_{j \setminus k} \rightarrow_{\tT_D} B_{jk}$.
Likewise, there must be a component $B_{ij}$ containing $i$ and $j$ whose parent component $B_{i \setminus j}$ contains $i$ but not $j$, i.e. $B_{i \setminus j} \rightarrow_{\tT_D} B_{ij}$.
Moreover, since there is a clique on $\{i, j, k\}$, there must be at least one component $B_{ijk}$ containing $i$, $j$ and $k$.

We will prove that $B_{jk}$ and $B_{j \setminus k}$ both contain $i$, which implies $i \rightarrow_{\tT_D} k$.

Let $\gamma$ be the path in $\tT_D$ between $B_{i \setminus j}$ and $B_{jk}$.
This path must contain the edge $B_{j \setminus k} \rightarrow B_{jk}$, since $B_{i \setminus j}$ is upstream of $B_{jk}$, and $\cT_D$ is a tree.
By the induced subtree property on $k$, no component on the path other than $B_{jk}$ can contain $k$.
Now consider the path between $B_{ijk}$ and $B_{i \setminus j}$.
By the induced subtree property on $k$, this path must pass through $B_{jk}$.
Finally, by the induced subtree property on $i$, $B_{jk}$ and $B_{j \setminus k}$ must both contain $i$.

Now, we prove that also the first Meek rule is not invoked. Suppose $i \rightarrow_{\cE_\res(D)} j$, and $j$ is adjacent to $k$. We must show that if $i$ is not adjacent to $k$, then $j \rightarrow_{\cE_\res(D)} k$.

Since $\{i, j, k\}$ do not form a clique, there must be distinct components containing $i \rightarrow j$ and $j \rightarrow k$. Let $B_{ij}$ and $B_{jk}$ denote the closest such components in $\tT_D$, which are uniquely defined since $\tT_D$ is a tree. Since $i$ is upstream of $k$, $B_{ij}$ must be upstream of $B_{jk}$. Let $P := \pa_{\tT_D}(B_{jk})$, we know $j \in P$ since it is on the path between $B_{ij}$ and $B_{jk}$ (it is possible that $P = B_{ij}$). Since we picked $B_{jk}$ to be the closest component to $B_{ij}$ containing $\{j, k\}$, we must have $k \not\in P$, so indeed $j \rightarrow_G k$.
\end{proof}

For an example of the residual essential graph, see \rref{fig:residuals-example2}. \rref{lemma:residual-essential-graph-complete} implies that the residuals must be oriented separately, since the orientations in one do not impact the orientations in others.

\subsection{Proof for a moral DAG}
We then prove the result for a moral DAG $D$:

\begin{restatable}[VIS Decomposition]{lemma}{visDecompositionLemma}\label{lemma:vis-characterization}
    An intervention set is a VIS for a moral DAG $D$ iff it contains VISes for each residual of $\tT_D$. 
    This implies that finding a VIS for $D$ can be decomposed in several smaller tasks, in which we find a VIS for each of the residuals in $\Residuals(\tT_D)$.
\end{restatable}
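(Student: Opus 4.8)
The plan is to reduce everything to two clean facts about interventional Markov equivalence classes, using the residual essential graph $\cE_\res(D)$ (\rref{def:residual-essential-graph}) as the bridge. First I would record that the residual vertex sets $\{V(R) : R \in \Residuals(\tT_D)\}$ partition $V(D)$ — this is the familiar residual decomposition of a rooted clique tree, carried through the contraction of bidirected components — so that every edge of $D$ is either \emph{internal} to a single residual or a \emph{cross} edge between two residuals, and the cross edges are precisely the edges oriented in $\cE_\res(D)$. Second, since $\cI$ consists of single-node interventions, I would use the standard characterization that $D'' \in [D]_\cI$ if and only if $D''$ has the same skeleton as $D$, has no v-structures, and agrees with $D$ on every edge incident to an intervention target in $\cI$ (equivalently: $\cE_\cI(D)$ is obtained from $\cE(D)$ by orienting edges incident to targets and closing under the two applicable Meek rules, \rref{prop:meekRules}).

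The backbone of both directions is then a restriction argument. Fix a residual $R = \Res_{\tT_D}(B)$ and write $\cI_R := \cI \cap V(R)$. For any $D'' \in [D]_\cI$, the induced subgraph $D''|_{V(R)}$ has the same skeleton as $R = D|_{V(R)}$, has no v-structures (non-adjacency and colliders are inherited by induced subgraphs, and $D$ has none, being a moral DAG), and agrees with $R$ on every edge incident to a target in $\cI_R$; hence $D''|_{V(R)} \in [R]_{\cI_R}$. For necessity, suppose $\cI$ is a VIS for $D$ but some $\cI_R$ is not a VIS for $R$; take $R' \in [R]_{\cI_R}$ with $R' \neq R$ and let $D'$ be $D$ with the internal orientations of $R$ replaced by those of $R'$. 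One checks $D'$ agrees with $D$ on every edge incident to a target in $\cI$ — immediate for targets outside $V(R)$, and for targets in $\cI_R$ because $R' \in [R]_{\cI_R}$ forces $R'$ to agree with $R$ near them — so that, provided $D'$ is a valid v-structure-free DAG, $D' \in [D]_\cI$ with $D' \neq D$, contradicting $\cE_\cI(D) = D$. For sufficiency, suppose $\cI$ contains a VIS $\cI_R$ for each $R$ but is not a VIS for $D$; pick $D' \in [D]_\cI$ with $D' \neq D$. The restriction argument gives $D'|_{V(R)} \in [R]_{\cI_R} = \{R\}$ for every $R$, so $D'$ agrees with $D$ on all internal edges and can differ from $D$ only on cross edges.

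The main obstacle is exactly what is left after this backbone: for sufficiency, ruling out a $D' \in [D]_\cI$ that agrees with $D$ internally but reverses some cross edge; for necessity, checking that the glued $D'$ is acyclic and creates no new collider. Both hinge on the combinatorics of the CDCT. By the alternative characterization of $\cE_\res(D)$ established in the proof of \rref{lemma:dct-implies-res}, every cross edge points ``downward'' in $\tT_D$ — from a vertex in some component $P$ to a vertex of $\Res_{\tT_D}(B)$ for its child $B$ — so reversing a cross edge cannot close a directed cycle that re-enters a residual, and, via the running intersection property together with \rref{prop:collider-implies-inclusion} (which controls arrow-meets), cannot create a collider; moreover \rref{lemma:residual-essential-graph-complete} says $\cE_\res(D)$ is Meek-complete, which is the ingredient that pins down the cross edges once one shows that a $D'$ differing from $D$ only on cross edges must in fact differ on an edge oriented in $\cE_\res(D)$. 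I would carry this out by choosing a topologically-minimal disagreement on a cross edge (for sufficiency) and bounding where a cycle or collider in the glued graph could possibly live (for necessity); I expect this downward-orientation / running-intersection bookkeeping to be the delicate step, whereas the restriction argument above is essentially formal.
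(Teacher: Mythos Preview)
Your route is genuinely different from the paper's. The paper never constructs or excludes alternative DAGs; instead it works entirely with Meek-rule propagation relative to the residual essential graph $\cE_\res(D)$. For necessity it argues that, starting from $\cE_\res(D)$, intervening on a node in one residual can only trigger Meek rules whose conclusions stay inside that residual (a short case check on the two applicable rules), so each residual must be independently oriented. For sufficiency it runs an explicit induction down the CDCT: given that $B_{i-1}$ is fully oriented, it exhibits a vertex $c_{i-1} \in C_{i-1}\setminus C_i$ with $c_{i-1}\to c'$ for some $c' \in C_{i-1}\cap C_i$, then uses \rref{prop:no-edge-adjacent} and the two Meek rules to push orientations across the $C_{i-1}\to C_i$ interface. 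Your restriction argument is a clean and correct backbone, and your identification of the remaining work is accurate, but two points deserve care. First, your appeal to \rref{lemma:residual-essential-graph-complete} for the sufficiency direction is the wrong tool: Meek-completeness of $\cE_\res(D)$ says that cross-edge orientations do not force any \emph{internal} orientations, which is exactly what you need for \emph{necessity}, not sufficiency. To rule out a $D'$ that agrees with $D$ on all internal edges but flips a cross edge $i\to j$, you must instead show that the internal orientations \emph{force} the cross edges --- and doing this at a topologically minimal disagreement reproduces precisely the paper's inductive interface argument (find $c_{i-1}$ with $c_{i-1}\to i$ already fixed and $c_{i-1}\not\sim j$, giving a v-structure in $D'$). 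Second, for necessity your gluing step must rule out a boundary v-structure $a\to c\leftarrow b$ with $a\in P$, $c,b\in\Res_{\tT_D}(B)$, and $a\not\sim b$; this does go through, but it needs the arrow-meet comparability of \rref{prop:collider-implies-inclusion} applied along the bidirected path inside $B$ to force $a$ into the clique containing $b,c$ (or to force $c\in P$, a contradiction) --- it is not immediate from the ``downward orientation'' picture alone. In short, your plan is valid and arguably more conceptual, but the two delicate steps you flag end up requiring essentially the same clique-tree bookkeeping that the paper carries out directly; neither approach sidesteps that work.
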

\begin{proof}

~\\
\textbf{VISes of residuals are necessary.} We first prove that any VIS $\cI$ of $D$ must contain VISes for each residual of $D$. Consider the residual essential graph $\cE_\res(D)$ of $D$. We show that if we intervene on a node $c_1$ in the residual $R_1 = \Res_{\tT_D}(B_1)$ of some $B_1 \in \cB(\tT_D)$, then the only new orientations are between nodes in $R_1$, or in other words, each residual needs to be oriented independently.

By \rref{def:residual-essential-graph}, all edges between nodes in different residuals are already oriented in $\cE_\res(D)$.
A new orientation between nodes in $R_1$ will not have any impact for the nodes in the other residuals, which we can show by proving that Meek rules described in \rref{prop:meekRules} would not apply outside of the residual.
In particular, Meek Rule \ref{meekrule:no-collider} does not apply at all, since $b$ and $c$ must be in the same residual since the edge is undirected, but then $a$ is adjacent to $c$ since it's a clique.
Likewise, $a -_{\cE_\res(d)} c$, then $a$ and $b$ are in the same residual, so Meek Rule \ref{meekrule:no-cycle} only orients edges with both endpoints in the same residual.

\textbf{VISes of residuals are sufficient.} Now, we show that if $\cI$ contains VISes for each residual of $D$, then it is a VIS for $D$, i.e. that orienting the residuals will orient the whole graph by applying recursively Meek rules.
We will accomplish this by inductively showing that all edges in each bidirected component are oriented.
Let $\gamma = \langle B_1, \ldots, B_n \rangle$ be a path from the root of $\tT_D$ to a leaf of $\tT_D$.
As our base case, all edges in $B_1$ are oriented, since $B_1 = \Res_{\tT_D}(B_1)$.
Now, as our induction hypothesis, suppose that all edges in $B_{i-1}$ are oriented. 

The edges between nodes in $B_i$ are partitioned into three categories: edges with both endpoints also in $B_{i-1}$, edges with both endpoints in $\Res_{\tT_D}(B_i)$, and edges with one endpoint in $B_{i-1}$ and one endpoint in $\Res_{\tT_D}(B_i)$. 
The first category of edges are directed by the induction hypothesis, and the second category of edges are directed by the assumption that $\cI$ contains VISes for each residual.
It remains to show that all edges in the third category are oriented.
Each of these edges has one endpoint in some $C_{i-1} \in B_{i-1}$ and one endpoint in some $C_i$ in $B_i$, so we can fix some $C_{i-1}$ and $C_i$ and argue that all edges from $C_{i-1} \cap C_i$ to $C_i \setminus C_{i-1}$ are oriented.

Since $C_{i-1} \rightarrow_{R_D} C_i$, there exists some $c_{i-1} \in C_{i-1} \setminus C_i$ and $c' \in C_i \cap C_{i-1}$ such that $c_{i-1} \rightarrow_D c'$.
By \rref{prop:no-edge-adjacent}, $c_{i-1}$ is not adjacent to any $c_i \in C_i \setminus C_{i-1}$, so Meek Rule \ref{meekrule:no-collider} ensures that $c' \rightarrow_D c_i$ is oriented.
For any other node $c'' \in C_{i-1} \cap C_i$, either $c' \rightarrow_D c''$, in which case Meek Rule \ref{meekrule:no-cycle} ensures that $c_{i-1} \rightarrow_D c''$ and the same argument applies, or $c'' \rightarrow_D c'$, in which case Meek Rule \ref{meekrule:no-cycle} ensures that $c'' \rightarrow_D c_i$.
\end{proof}

\subsection{Proof for a general DAG}
We can now easily prove the theorem for any DAG $D$:
\visDecompositionTheorem*
\begin{proof}
    By the previous result (\rref{lemma:vis-characterization}) and \rref{lemma:hauser} from \citep{hauser2014two}.
\end{proof}

\section{Algorithm for finding an MVIS}\label{app:brute-force-mvis}
An algorithm using the decomposition into residuals to compute a minimal verifying intervention set (MVIS) is described in Algorithms \ref{alg:dct-verification} and \ref{alg:brute-force-mvis}.
Compared to running \rref{alg:brute-force-mvis} on any moral DAG, using \rref{alg:dct-verification} ensures that we only have to enumerate over subsets of the nodes in each residual, which in general require far fewer interventions.
Moreover, the residual of any component containing a single clique is itself a clique, which have easily characterized MVISes, and \rref{alg:brute-force-mvis} efficiently computes.

\begin{algorithm}[t]
\caption{\textsc{Find\_MVIS\_DCT}}
\label{alg:dct-verification}
\begin{algorithmic}[1]
    \STATE \textbf{Input:} Moral DAG $D$
    \STATE let $\tT_D$ be the contracted directed clique tree of $D$
    \STATE let $S = \emptyset$
    \FOR {component $B$ of $T_D$}
        \STATE let $R = \Res_{\tT_D}(B)$
        \STATE let $S' = $ \textsc{Find\_MVIS\_Enumeration}$(G[R])$
        \STATE let $S = S \cup S'$
    \ENDFOR
    \STATE \textbf{Return} $S$
\end{algorithmic}
\end{algorithm}

\begin{algorithm}[t]
\caption{\textsc{Find\_MVIS\_Enumeration}}
\label{alg:brute-force-mvis}
\begin{algorithmic}[1]
    \STATE \textbf{Input:} DAG $D$
    \IF{$D$ is a clique}
        \STATE Let $\pi$ be a topological ordering of $D$
        \STATE Let $S$ include even-indexed element of $\pi$
        \STATE \textbf{Return $S$}
    \ENDIF
    \FOR {$s=1, \ldots, |V(D)|$}
        \FOR {$S \subseteq V(D)$ with $|S| = s$}
            \IF {$S$ fully orients $D$}
                \STATE \textbf{Return} $S$
            \ENDIF
        \ENDFOR
    \ENDFOR
\end{algorithmic}
\end{algorithm}


\section{Proof of Theorem \ref{thm:clique-lower-bound}}\label{app:clique-lower-bound}
First, we prove the following proposition:

\begin{prop}\label{prop:bidirected-component-clique-bound}
Let $D$ be a moral DAG, $\cE = \cE(D)$ and let $\tT_D$ contain a single bidirected component. Then $m(D) \geq \cb{\omega(\cE)}$.
\end{prop}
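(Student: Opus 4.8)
The plan is to reduce the bound to a combinatorial statement about \emph{covered edges} and then harvest enough covered edges from a maximum clique. Recall that $u \to_D v$ is a covered edge if $\pa_D(u) = \pa_D(v) \setminus \{u\}$; reversing such an edge produces a DAG $D'$ with the same skeleton and v-structures, hence $D' \in [D]$, and if no intervention in $\cI$ targets $u$ or $v$ then the $I_m$-Markov factorization of $f^{I_m}$ relative to $D'$ coincides with the one relative to $D$ for every $m$, so $D'$ is also $\cI$-Markov equivalent to $D$. Consequently, for $\cE_\cI(D) = D$ it is necessary that the set of intervention targets $\bigcup \cI$ contains an endpoint of every covered edge of $D$; in particular, if $D$ possesses $k$ pairwise vertex-disjoint covered edges then every VIS has size at least $k$, i.e.\ $m(D) \ge k$. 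So the entire task becomes: \textbf{exhibit $\cb{\omega(\cE)}$ pairwise vertex-disjoint covered edges of $D$.}

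To do this, fix a maximum clique $C^* \subseteq V(D)$, so that $|C^*| = \omega := \omega(\cE) = \omega(\skel(D))$, and let $v_1 \to_D v_2 \to_D \cdots \to_D v_\omega$ be the unique topological order of $D|_{C^*}$ (an induced sub-DAG supported on a clique, hence a transitive tournament). The crux — and the only place the single-bidirected-component hypothesis enters — is the structural claim that \textbf{no vertex of $C^*$ has a $D$-parent outside $C^*$}, which then forces $\pa_D(v_i) = \{v_1, \dots, v_{i-1}\}$ for every $i$. I would prove this by contradiction: suppose $x \to_D v_i$ with $x \notin C^*$. Since $x$ and $v_i$ are adjacent, they lie in a common maximal clique $C''$; let $\langle C^* = P_0, P_1, \dots, P_\ell = C'' \rangle$ be the path between them in the DCT $T_D$. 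By the running intersection property, $v_i \in C^* \cap C'' \subseteq P_j$ for all $j$. Choose $j \ge 1$ minimal with $x \in P_j$; then $x \notin P_{j-1}$, so $v_i \in P_{j-1} \cap P_j$ while $x \in P_j \setminus P_{j-1}$. Since $T_D$ is a single bidirected component, all of its edges are bidirected, in particular $P_{j-1} \staredge_{T_D} P_j$; a bidirected edge carries an arrowhead at $P_j$, which by \rref{def:dct} means every vertex of the separator $P_{j-1} \cap P_j$ is a $D$-parent of every vertex of $P_j \setminus P_{j-1}$. Applied to $v_i$ and $x$ this gives $v_i \to_D x$, contradicting the acyclicity of $D$. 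This proves the claim, and combined with the fact that $D|_{C^*}$ is the transitive tournament ordered $v_1, \dots, v_\omega$ it yields $\pa_D(v_i) = \{v_1, \dots, v_{i-1}\}$.

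Now each $v_i \to_D v_{i+1}$ with $1 \le i \le \omega - 1$ is a covered edge, since $\pa_D(v_i) = \{v_1, \dots, v_{i-1}\} = \{v_1, \dots, v_i\} \setminus \{v_i\} = \pa_D(v_{i+1}) \setminus \{v_i\}$. The subset $\{v_1 v_2, v_3 v_4, \dots\}$ of these edges consists of $\cb{\omega}$ pairwise vertex-disjoint covered edges, so the reduction in the first paragraph yields $m(D) \ge \cb{\omega} = \cb{\omega(\cE)}$.

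The main obstacle is the structural claim of the second paragraph; the rest is bookkeeping with the running intersection property and the definition of covered edges. If one wishes to avoid quoting covered-edge reversals as a black box, the argument can be made self-contained: given any intervention list with target set $W$ of size $< \cb{\omega}$, the pigeonhole principle furnishes an index $i$ with $v_i, v_{i+1} \notin W$, and one then verifies directly that reversing the single edge $v_i \to_D v_{i+1}$ produces a DAG with the same skeleton, v-structures, and $\cI$-essential graph as $D$, so $\cI$ cannot be a VIS.
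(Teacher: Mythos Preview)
Your proof is correct and proceeds via the same key structural fact the paper uses --- that under the single-bidirected-component hypothesis the maximum clique $C^*$ has no $D$-parents outside it --- but the two proofs diverge in how they extract the lower bound from that fact. The paper argues informally that ``Meek rules only propagate downward,'' so interventions outside $C^*$ cannot orient edges inside it, and then invokes the folklore bound that orienting a clique of size $\omega$ requires $\cb{\omega}$ single-node interventions on consecutive pairs. Your route through covered edges is more explicit and arguably cleaner: once $\pa_D(v_i) = \{v_1,\dots,v_{i-1}\}$ is established, every $v_i \to_D v_{i+1}$ is covered, and the vertex-disjoint matching $\{v_1v_2, v_3v_4, \dots\}$ immediately witnesses the bound via the standard fact that any VIS must touch every covered edge. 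This sidesteps the somewhat hand-wavy appeal to Meek-rule directionality and connects the argument to the well-known covered-edge characterization of MEC neighbors (Chickering), at the cost of importing that machinery. Your running-intersection argument for the structural claim is also slightly more careful than the paper's, which makes a small jump in asserting $v_{12} \to_D v_{2\setminus 1}$ for \emph{all} $v_{2\setminus 1} \in C_2 \setminus C_1$ from a single arrowhead.
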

\begin{proof}
    Let $C_1 \in \arg\max_{C \in \cC(\cE)} |C|$. By the running intersection property (see \rref{app:running-intersection}), for any clique $C_2$, $C_1 \cap C_2 \subseteq C_2 \cap C_\adj$ for $C_\adj$ adjacent to $C_2$ in $T_D$. Since $C_\adj \leftrightarrow_{T_D} C_2$, we have $v_{12} \rightarrow_D v_{2\setminus 1}$ for all $v_{12} \in C_1 \cap C_2$ and $v_{2\setminus 1} \in C_2 \setminus C_1$, i.e. there is no node in $D$ outside of $C_1$ that points into $C_1$. Thus, since the Meek rules only propagate downward, intervening on any nodes outside of $C_1$ does not orient any edges within $C_1$. 
    Finally, since $C_1$ is a clique, each consecutive pair of nodes in the topological order of $C_1$ must have at least one of the nodes intervened in order to establish the orientation of the edge between them.
    This requires at least $\left \lfloor\frac{|C_1|}{2}\right \rfloor$ interventions, achieved by intervening on the even-numbered nodes in the topological ordering.
\end{proof}

Now we can prove the following result for a moral DAG $D$:

\begin{restatable}{lemma}{lowerboundLemma}\label{lemma:lowerbound}

Let $D$ be a moral DAG and let $G = \skel(D)$. Then $m(D) \geq \cb{\omega(G)}$, where $\omega(G)$ is the size of the largest clique in $G$.
    
\end{restatable}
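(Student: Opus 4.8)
The plan is to induct on the number of bidirected components $|\cB(\tT_D)|$ of a tree-like CDCT of $D$, using \rref{prop:bidirected-component-clique-bound} for the base case and \rref{lemma:vis-characterization} (the VIS decomposition) for the inductive step. In the base case $|\cB(\tT_D)| = 1$, the DCT $T_D$ has only bidirected edges, so \rref{prop:bidirected-component-clique-bound} applies and gives $m(D) \ge \cb{\omega(\cE(D))}$; since $D$ is moral, $\skel(\cE(D)) = \skel(D) = G$, so $\omega(\cE(D)) = \omega(G)$ and we are done.

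For the inductive step, assume $|\cB(\tT_D)| \ge 2$. Fix a maximum clique $C^\star$ of $G$; being maximal, $C^\star$ is a vertex of the clique tree and hence lies in a unique bidirected component $B^\star$. Since $\tT_D$ is a tree with at least two nodes it has at least two leaves, so I can pick a leaf $L \ne B^\star$, and I set $D' := D|_{V(D)\setminus \Res_{\tT_D}(L)}$, i.e. I delete the residual of the leaf $L$. The argument then rests on three facts: (i) $C^\star \cap \Res_{\tT_D}(L) = \emptyset$, so $C^\star$ survives in $\skel(D')$ and thus $\omega(\skel(D')) = \omega(G)$; (ii) $D'$ is again a moral DAG whose tree-like CDCT is $\tT_D$ with the leaf $L$ deleted, so $|\cB(\tT_{D'})| = |\cB(\tT_D)| - 1$ and the residuals of $\tT_{D'}$ are exactly those of $\tT_D$ other than $\Res_{\tT_D}(L)$; and (iii) $m(D) \ge m(D')$. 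Given these, the induction hypothesis yields $m(D') \ge \cb{\omega(G)}$, hence $m(D) \ge \cb{\omega(G)}$.

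I would prove Fact (i) from the running intersection property transported to the CDCT: since $L$ is a leaf, $\pa_{\tT_D}(L)$ lies on every path ending at $L$, so $B^\star \cap L \subseteq \pa_{\tT_D}(L)$ as vertex sets, whence $C^\star \cap \Res_{\tT_D}(L) \subseteq (B^\star \cap L)\setminus \pa_{\tT_D}(L) = \emptyset$. Fact (ii) is structural bookkeeping: every vertex of $\Res_{\tT_D}(L)$ appears only in cliques of the component $L$, and deleting these vertices collapses each clique of $L$ into a subset of the separator on the edge between $L$ and $\pa_{\tT_D}(L)$, hence into an existing maximal clique; so no new maximal clique is created, all other maximal cliques, their clique-tree adjacencies and their orientations are unchanged, component $L$ disappears, and moralness is preserved since $\skel(D')$ is an induced subgraph of the chordal graph $G$ (hence chordal), $D'$ inherits acyclicity and the absence of v-structures from $D$, and the remaining clique tree (a pendant-subtree deletion) is connected so $D'$ is connected. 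Fact (iii) is immediate from \rref{lemma:vis-characterization}: a VIS of $D$ contains a VIS for each residual of $\tT_D$, in particular for each residual of $\tT_{D'}$; restricting it to $V(D')$ keeps those sub-VISes, so by \rref{lemma:vis-characterization} applied to $D'$ it is a VIS of $D'$, giving $m(D) \ge m(D')$.

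The main obstacle is Fact (ii) --- verifying that excising a leaf's residual removes exactly one CDCT node and leaves the rest of the structure (maximal cliques, adjacencies, edge orientations, bidirected components) and moralness intact. This is the only place where the proof genuinely manipulates the clique-tree/DCT machinery; everything else is a short consequence of the two cited results. One could instead try to sum $m(\Res_{\tT_D}(B))$ along the root-to-$B^\star$ path and bound it below by $\cb{\omega(G)}$ directly, but tracking how $C^\star$ is partitioned across those residuals, and how the floors interact, is messier, so the leaf-deletion induction seems the cleanest route.
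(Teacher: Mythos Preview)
Your argument is correct, but it follows a different route than the paper's. The paper does exactly what you describe in your final paragraph as the ``messier'' alternative: it fixes the path $\gamma = \langle B_1,\ldots,B_Z\rangle$ in $\tT_D$ from the root to the component $B_Z$ containing a maximum clique, and proves by induction on $z$ that $\sum_{i=1}^z m(D[R_i]) \ge \max_{i\le z}\cb{|C_i^*|}$, where $C_i^*$ is a largest clique in $B_i$. The key step is to observe that the separator $S_z = C_z^* \cap B_{z-1}$ sits inside some clique of $B_{z-1}$ of size at least $|S_z|+1$, so the induction hypothesis gives $\sum_{i<z} m(D[R_i]) \ge \cb{|S_z|+1}$, and then \rref{prop:bidirected-component-clique-bound} plus the elementary inequality $\cb{a+1}+\cb{b}\ge\cb{a+b}$ (with $a=|S_z|$, $b=|C_z^*\cap R_z|$) finish the step.

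Your leaf-deletion induction trades that floor arithmetic for the structural verification in Fact~(ii). That verification is sound --- the running intersection property transported to the CDCT does the work, and your observation that every clique of $L$ collapses into a subset of the unique separator $C_P\cap C_L$ (hence into $C_P$) correctly shows no new maximal cliques appear. The paper's approach is shorter once \rref{prop:bidirected-component-clique-bound} and the floor inequality are in hand; yours is arguably more robust since it never has to track how $C^\star$ splits across residuals, at the cost of re-establishing that $D'$ is moral and that $\tT_D\setminus\{L\}$ really is a CDCT for $D'$. Both approaches ultimately rest on \rref{prop:bidirected-component-clique-bound} and \rref{lemma:vis-characterization}.
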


Consider a path $\gamma$ from the source of $\tT_D$ to the bidirected component containing the largest clique, i.e., $\gamma = \langle B_1, \ldots, B_Z \rangle$.
For each component, pick $C_i^* \in \arg\max_{C \in B_i} |C|$.
Also, let $R_i = \Res_{\tT_D}(B_i)$.
We will prove by induction that $\sum_{i=1}^z m(D[R_i]) \geq \max_{i=1}^z \cb{|C_i^*|}$ for any $z = 1, \ldots, Z$.
As a base case, it is true for $z = 1$, since $R_1 = B_1$ and by \rref{prop:bidirected-component-clique-bound}.

Suppose the lower bound holds for $z-1$. If $C_z^*$ is not the unique maximizer of $\left\lfloor\frac{|C_z^*|}{2}\right\rfloor$ over $i = 1, \ldots, z$, the lower bound already holds.
Thus, we consider only the case where $B_z$ is the unique maximizer.

Let $S_z = C_z^* \cap B_{z-1}$. By the running intersection property (see \rref{app:running-intersection}), $S_z$ is contained in the clique $C_\adj$ in $B_{z-1}$ which is adjacent to $C_z^*$ in $T_D$. Since $C_\adj$ is distinct from $C_z^*$, $|C_\adj^*| \geq |S_z| + 1$, and by the induction hypothesis we have that 
\begin{align*}
\sum_{i=1}^{z-1} m(D[R_i]) 
&\geq \max_{i=1,\ldots,z-1} \cb{|C_i^*|}
\\
&\geq \cb{ |C^*_{z-1}|}
\\
&\geq \cb{ |C_\adj| }
\\
&\geq \cb{|S_z|+1}
\end{align*}

Finally, applying \rref{prop:bidirected-component-clique-bound},
\begin{align*}
    \cb{|S_z+1|} + m(D[R_z]) 
    &\geq \cb{|S_z|+1} + \cb{|C_z^* \cap R_z|}
    \\
    &\geq \cb{|C_z^*|}
\end{align*}

where the last equality holds since $|S_z| + |C_z^* \cap R_z| = |C_z^*|$ and by the property of the floor function that $\cb{a+1} + \cb{b} \geq \cb{a+b}$, which can be easily checked.

Finally we can prove the theorem:

\lowerboundThm*
\begin{proof}
    By \rref{lemma:lowerbound} and Lemma 1 in \citet{hauser2014two}.
\end{proof}

\section{Clique and Edge Interventions}\label{app:clique-edge-interventions}
We present the procedures that we use for clique- and edge-interventions in \rref{alg:clique-intervention} and \rref{alg:edge-intervention}, respectively.

\begin{algorithm}[t]
\caption{\textsc{CliqueIntervention}}
\label{alg:clique-intervention}
\begin{algorithmic}[1]
    \STATE \textbf{Input:} Clique $C$
    \WHILE {$C -_{\Gamma_D} C'$ unoriented for some $C'$}
        \IF {$\exists v$ non-dominated in $C$}
            \STATE Pick $v \in C$ at random among non-dominated nodes.
        \ELSE
            \STATE Pick $v \in C$ at random.
        \ENDIF
        \STATE Intervene on $v$.
    \ENDWHILE
    \STATE \textbf{Output:} $P_\up(C)$
\end{algorithmic}
\end{algorithm}

\begin{algorithm}[t]
\caption{\textsc{EdgeIntervention}}
\label{alg:edge-intervention}
\begin{algorithmic}[1]
    \STATE \textbf{Input:} Adjacent cliques $C$, $C'$
    \WHILE{$C -_{\Gamma_D} C'$ unoriented}
        \STATE Pick $v \in C \cap C'$ at random.
        \STATE Intervene on $v$.
    \ENDWHILE
    \STATE \textbf{Output:} $P_\up(C)$
\end{algorithmic}
\end{algorithm}

\section{Identify-Upstream Algorithm}\label{app:identify-upstream}
Given the clique graph, a simple algorithm to identify the upstream branch consists of performing an edge-intervention on each pair of parents of $C$ to discover which is the most upstream.
However, if the number of parents of $C$ is large, this may consist of many interventions.
The following lemma establishes that the only parents which are candidates for being the most upstream are those whose intersection with $C$ is the smallest:

\begin{algorithm}[t]
\caption{\textsc{IdentifyUpstream}}
\label{alg:identify-upstream}
\begin{algorithmic}[1]
    \STATE \textbf{Input:} Clique $C$
    \FOR {$P_1, P_2 \in \cP_{\Gamma_D}(C)$}
        \STATE perform an edge-intervention on $P_1 -_{\Gamma_D} P_2$
    \ENDFOR
    \STATE \textbf{Output:} $P_\up(C)$
\end{algorithmic}
\end{algorithm}

\begin{restatable}{prop}{minimalParentProp}\label{prop:minimal-parent-upstream}
Let $P_\up(C) \in \pa_{\Gamma_D}(C)$ be the parent of $C$ which is upstream of all other parents. Then $P_\up(C) \in \cP_{\Gamma_D}(C)$, where $\cP_{\Gamma_D}(C)$ is the set of parents of $C$ in $\Gamma_D$ with the smallest intersection size, i.e., $P \in \cP_{\Gamma_D}(C)$ if and only if $P \rightarrow_{\Gamma_D} C$ and $|P \cap C| \leq |P' \cap C|$ for all $P' \in \pa_{\Gamma_D}(C)$.
\end{restatable}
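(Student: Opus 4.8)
The plan is to reduce the statement to a single combinatorial fact about how the parent cliques of $C$ can intersect $C$ --- namely that these intersections are totally ordered by inclusion --- and then to observe that the most-upstream parent must realize the minimum of this chain.

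First I would prove the key lemma: for any two parents $P_1, P_2 \in \pa_{\Gamma_D}(C)$, the sets $P_1 \cap C$ and $P_2 \cap C$ are nested, i.e. $P_1 \cap C \subseteq P_2 \cap C$ or $P_2 \cap C \subseteq P_1 \cap C$. Suppose not; then there exist $v \in (P_1 \cap C) \setminus (P_2 \cap C)$ and $u \in (P_2 \cap C) \setminus (P_1 \cap C)$, and since both lie in the clique $C$ they are adjacent in $D$ and $u \neq v$. Because $v \in C \setminus P_2$ and the edge $P_2 \rightarrow_{\Gamma_D} C$ has an arrowhead at $C$, \rref{def:dct} (taking $u$ as the intersection vertex and $v$ as the vertex in $C \setminus P_2$) forces $u \rightarrow_D v$; symmetrically, $u \in C \setminus P_1$ together with $P_1 \rightarrow_{\Gamma_D} C$ forces $v \rightarrow_D u$. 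This is a $2$-cycle, contradicting acyclicity of $D$. Hence $\{P \cap C : P \in \pa_{\Gamma_D}(C)\}$ is a chain under inclusion with a unique $\subseteq$-minimum $S_{\min}$, and $\cP_{\Gamma_D}(C)$ is exactly $\{P \in \pa_{\Gamma_D}(C) : P \cap C = S_{\min}\}$ (a parent attains the smallest intersection \emph{size} iff its intersection with $C$ is this minimal set).

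It then remains to show $P_\up(C) \cap C = S_{\min}$, which I would do by contradiction: if $P_\up(C) \cap C$ is not $\subseteq$-minimal, the chain property gives a parent $P'$ with $P' \cap C \subsetneq P_\up(C) \cap C$, and I would derive that $P'$ is strictly more upstream than $P_\up(C)$, contradicting the defining property of $P_\up(C)$ as the parent upstream of all others. Concretely, picking $v \in (P_\up(C) \cap C) \setminus (P' \cap C) \subseteq C \setminus P'$, the edge $P' \rightarrow_{\Gamma_D} C$ forces $w \rightarrow_D v$ for every $w \in P' \cap C$; combining this with the running intersection property (\rref{app:running-intersection}) along the clique path between $P'$ and $P_\up(C)$ --- which passes through $C$ --- one can trace a directed route from $P'$ towards $C$ and rule out any directed route from $P_\up(C)$ into the branch of $P'$, so $P_\up(C)$ cannot be upstream of $P'$. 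The main obstacle is precisely this last step: the chain lemma is short, but connecting ``smallest intersection with $C$'' to ``most upstream branch'' requires pinning down the formalization of \emph{upstream} and carefully using the $\Gamma_D$-orientations (revealed by the clique-intervention on $C$) plus running intersection to show a parent with strictly larger intersection cannot dominate one with strictly smaller intersection; this is exactly the fact that justifies restricting \textsc{IdentifyUpstream} to $\cP_{\Gamma_D}(C)$.
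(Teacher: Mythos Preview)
Your chain lemma is correct and is exactly \rref{prop:collider-implies-inclusion} specialised to the clique graph (the orientation rule in \rref{def:dct} depends only on $D$, not on the particular tree, so the argument carries over verbatim). So the first half of your plan coincides with the paper.

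The gap is in the second half. Your attempt to connect ``smallest intersection'' to ``most upstream'' by tracing a clique-tree path from $P'$ to $P_\up(C)$ that ``passes through $C$'' is where the argument breaks down: there is no reason such a path must go through $C$ (in fact the paper shows $P'$ and $P_\up(C)$ are \emph{adjacent} in $\Gamma_G$), and the running-intersection reasoning you sketch does not by itself produce an orientation between $P'$ and $P_\up(C)$ in $\Gamma_D$, which is what ``upstream'' ultimately refers to here. You recognise this is the obstacle, but the proposal does not resolve it.

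The paper's route avoids path-tracing entirely. It first shows that \emph{all} parents of $C$ are pairwise adjacent in $\Gamma_D$, by combining the chain lemma with a lemma of \citet{galinier1995chordal}: if $C_1 -_{T_G} C_2 -_{T_G} C_3$ with $C_1 \cap C_2 \subseteq C_2 \cap C_3$, then $C_1 -_{\Gamma_G} C_3$. A corollary (via running intersection) gives that in any such triangle $C_1 \cap C_3 = C_1 \cap C_2$. Applied with $C_2 = C$, $C_1 = P \in \cP_{\Gamma_D}(C)$, $C_3 = P' \notin \cP_{\Gamma_D}(C)$, this yields $P \cap P' = P \cap C$; since the arrowhead $P \rightarrow_{\Gamma_D} C$ is already known and the edges $P\mbox{--}C$ and $P\mbox{--}P'$ share the same intersection label, the same arrowhead appears at $P'$, i.e.\ $P \rightarrow_{\Gamma_D} P'$. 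Thus every parent with non-minimal intersection receives an arrowhead from a minimal-intersection parent and cannot be $P_\up(C)$. The ingredient you are missing is precisely this direct edge between parents together with the label-matching that transfers the orientation; once you have it, no branch or path argument is needed.
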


\begin{proof}

We begin by citing a useful result on the relationship between clique trees and clique graphs when the clique contains an intersection-comparable edge:
\begin{lemma}[\citet{galinier1995chordal}]\label{lemma:inclusion-implies-bypass}
    If $C_1 -_{T_G} C_2 -_{T_G} C_3$ and $C_1 \cap C_2 \subseteq C_2 \cap C_3$, then $C_1 -_{\Gamma_G} C_3$.
\end{lemma}

\begin{cor}\label{cor:collider-implies-same-label}
    If $C_1 -_{T_G} C_2 -_{T_G} C_3$ and $C_1 \cap C_2 \subseteq C_2 \cap C_3$, then $C_1 \cap C_3 = C_1 \cap C_2$.
\end{cor}

\begin{proof}
    By the running intersection property of clique trees (see \rref{app:running-intersection}), $C_1 \cap C_3 \subseteq C_2$. Combined with $C_1 \cap C_2 \subseteq C_2 \cap C_3$ and simple set logic, the result is obtained.
\end{proof}

Every parent of $C$ is adjacent in $\Gamma_D$ to every other parent of $C$ by  \rref{prop:collider-implies-inclusion} and \rref{lemma:inclusion-implies-bypass}, and since every edge has at least one arrowhead, there can be at most one parent of $C$ that does not have an incident arrowhead.

Now we show that this parent must be in $\cP_{\Gamma_D}(C)$. \rref{cor:collider-implies-same-label} implies that for any triangle in $\Gamma_G$, two of the edge labels (corresponding to intersections of their endpoints) must be equal. If $P \in \cP_{\Gamma_D}(C)$ and $P' \in \pa_{T_D}(C) \setminus \cP_{\Gamma_D}(C)$, then the labels of $P \rightarrow_{\Gamma_D} C$ and $P' \rightarrow_{\Gamma_D} C$ are of different size and thus cannot match. Therefore, the label of $P \cap P' = P \cap C$. Finally, since we already know $P \rightarrow_{\Gamma_D} C$, it must also be the case that $P \rightarrow_{\Gamma_D} P'$.
\end{proof}




\section{Proof of \rref{thm:dct-policy-competitive-ratio}}\label{app:dct-policy-competitive-ratio}
We start by proving bounds for each of the two phases:
\begin{restatable}{lemma}{phaseOneLemma}\label{lemma:phase1}
    Algorithm \ref{alg:phase1} uses at most $\lceil \log_2 |\cC| \rceil$ clique-interventions. Moreover, assuming $T_G$ is intersection-incomparable, Algorithm \ref{alg:phase1} uses no edge-interventions.
\end{restatable}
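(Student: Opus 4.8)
The plan is to follow the undirected part of $\Gamma_D$ through the outer \texttt{while} loop of \rref{alg:phase1} and establish two facts: (a) each pass of the loop performs exactly one clique-intervention, and (b) each pass at least halves the number of maximal cliques incident to an undirected edge. Fact (a) is immediate by inspection. Granting (b), if $\Gamma_G$ has $n_0 = |\cC|$ maximal cliques and $n_i$ denotes that count after $i$ passes, then $n_i \le \lfloor n_{i-1}/2\rfloor$, hence $n_i \le \lfloor n_0/2^i\rfloor$ by the nested-floor identity $\lfloor\lfloor n/2\rfloor/2\rfloor=\lfloor n/4\rfloor$. A pass $i$ occurs only while an undirected edge remains, i.e. $n_{i-1}\ge 2$; so if $J$ passes occur then $2\le n_{J-1}\le n_0/2^{J-1}$, giving $2^J\le n_0$ and $J\le\lfloor\log_2|\cC|\rfloor\le\lceil\log_2|\cC|\rceil$, which is the first claim.

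To prove (b), consider a pass starting with the undirected part spanning vertex set $V(T)$, where $T$ is the chosen max-weight spanning tree, and let $C$ be its central node. By definition the clique-intervention on $C$ orients every edge of $\Gamma_D$ incident on $C$. Then \textsc{IdentifyUpstream} returns the unique most-upstream parent $P_\up(C)$ of $C$ — well defined because \rref{lemma:construction-proof} supplies a CDCT with no arrow-meets, equivalently a unique upstream branch — and we set $S=V(B_T^{C:P_\up(C)})$. The inner \texttt{while} loop runs (propagating, and performing edge-interventions only to restart propagation) exactly until no undirected edge of $\Gamma_D$ touches $\cC\setminus S$; so on exit every remaining undirected edge has both endpoints in $S$, i.e. the new undirected part is contained in $S$. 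Since $C$ is central, $|S|\le\lfloor |V(T)|/2\rfloor$, which is the halving. (One also uses that the undirected part of $\Gamma_D$ stays connected, so that ``the undirected component'' is unambiguous and the recursion is not spoiled by fragmentation; this connectivity descends from the fact that every chain component of an $\cI$-essential graph is a connected chordal graph, \rref{lemma:hauser}.)

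For the second claim, assume $T_G$ is intersection-incomparable; edge-interventions in \rref{alg:phase1} can arise only inside \textsc{IdentifyUpstream} or inside the inner \texttt{while} loop, and I would rule out both. For \textsc{IdentifyUpstream}: by \rref{prop:collider-implies-inclusion} there are no arrow-meets in $\Gamma_D$, so after the clique-intervention $C$ has at most one parent in $\Gamma_D$; hence the candidate set $\cP_{\Gamma_D}(C)$ of minimal-intersection parents (\rref{prop:minimal-parent-upstream}) has size at most one, and the \texttt{for} loop of \rref{alg:identify-upstream}, which ranges over \emph{pairs} drawn from $\cP_{\Gamma_D}(C)$, is vacuous. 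For the inner \texttt{while} loop: I would first note that an intersection-incomparable chordal graph is ``uniquely representable'' \cite{kumar2002clique}, so by \rref{lemma:inclusion-implies-bypass} it has no non-tree clique-graph edges, i.e. $\Gamma_G=T_G$ and correspondingly $\Gamma_D=T_D$. After the clique-intervention, each downstream tree-neighbour $w$ of $C$ has $C\to_{T_D} w$ (an arrowhead at $w$; no arrow-meet means no arrowhead back at $C$), and since \emph{every} consecutive pair of edges of $T_D$ is intersection-incomparable, repeated use of the contrapositive form of \rref{prop:collider-implies-inclusion} orients every edge in every downstream branch of $T_D$. Thus all edges incident on $\cC\setminus S$ are oriented by propagation alone, the inner loop never executes, and no edge-interventions occur.

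The step I expect to be the crux is (b) — specifically, pinning down that even when $C$ sits at an arrow-meet the procedure isolates a \emph{single} upstream branch $S$ with $|S|\le\lfloor|V(T)|/2\rfloor$ into which all residual undirected edges fall; this is exactly where \rref{lemma:construction-proof} and \rref{prop:minimal-parent-upstream} (together with the connectivity remark) are needed. Once (b) is in hand the logarithmic count is bookkeeping, and the intersection-incomparable half of the lemma is routine given $\Gamma_G=T_G$.
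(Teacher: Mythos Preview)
Your proof is correct and follows the same approach as the paper --- in fact it is more complete. The paper's own proof argues only under the intersection-incomparable hypothesis (so both claims are handled simultaneously via propagation), whereas you correctly separate the two: the unconditional halving bound is obtained by observing that the inner \texttt{while} loop forces every remaining undirected edge of $\Gamma_D$ to lie in $S$, and $|S|\le\lfloor|V(T)|/2\rfloor$ by centrality. That is the right mechanism in the general case, and the paper's text in \rref{section:policy} relies on it without spelling it out in the proof of \rref{lemma:phase1}.

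Two minor remarks. First, your appeal to \rref{lemma:construction-proof} to justify that \textsc{IdentifyUpstream} returns a single $P_\up(C)$ is slightly misdirected: that lemma concerns constructing a CDCT from a known DAG, whereas here the relevant fact is that the parents of $C$ in $\Gamma_D$ are pairwise adjacent and hence admit a unique most-upstream element (\rref{prop:minimal-parent-upstream} and the surrounding discussion). Second, the connectivity remark is not actually needed for the halving count: since the inner loop orients every edge incident on $\cC\setminus S$ (with $\cC$ the full set of maximal cliques), any stray undirected component disjoint from $S$ would be oriented as well, so $n_i\le|S|$ holds regardless of whether the undirected part fragments. Neither point affects the correctness of your argument.
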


\begin{proof}
    Since $T_G$ is intersection-incomparable, after a clique-intervention on $C$, orientations propagate in all but at most one branch of $T_G$ out of $C$. By the definition of a central node, the one possible remaining branch has at most half of the nodes from the previous time step, so the number of edges in $T_G$ reduces by at least half after each clique-intervention. Thus, there can be at most $\lceil \log_2 |\cC| \rceil$ clique-interventions.
 \end{proof}
 
For ease of notation, we will overload the symbol $\CC$ for the chain components of a chain graph $G$ to take a DAG as an argument, and return the subgraphs corresponding to the chain components of its essential graph. Formally, $\CC(D) = \{ D[V(G)] \mid G \in \CC(\cE(D)) \}$.
 
\begin{restatable}{lemma}{phaseTwoLemma}\label{lemma:phase2}
    The second phase of Algorithm \ref{alg:dct-policy} (line 6-8) uses at most $\sum_{C \in \cC(D')} |\Res_{\tT_{D'}}(C)| - 1$ single-node interventions for the moral DAG $D' \in \CC(D)$.
\end{restatable}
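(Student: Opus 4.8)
The plan is to bound the interventions residual by residual, using the single structural fact we need about noiseless single-node interventions: intervening on a node reveals the orientation of \emph{every} edge incident to that node. The inner loop of \rref{alg:dct-policy} (lines 6--8) visits the residuals $\Res_{\tT_{D'}}(C)$ of the chain component $D'$ one at a time, and while processing a residual $R$ it intervenes only on vertices of $R$, continuing until all edges of $R$ are oriented. So it suffices to bound the number of interventions spent on each single residual and then sum.

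Now fix a residual $R = \Res_{\tT_{D'}}(C)$. Suppose the loop has intervened on all but one vertex of $V(R)$, say $V(R)\setminus S = \{v^*\}$ with $|S| = |V(R)|-1$. Every edge of $R$ must have an endpoint different from $v^*$ (its two endpoints cannot both be $v^*$), hence an endpoint in $S$; since the intervention on that endpoint already revealed the orientation of the edge, $R$ is fully oriented, and the loop stops. Therefore the loop performs at most $|V(R)|-1$ interventions on $R$ --- possibly fewer, for instance if a smaller vertex cover of the edge set of $R$ is hit first, or if some edges of $R$ were already oriented during Phase~1 (which, by \rref{lemma:dct-implies-res} and \rref{lemma:residual-essential-graph-complete}, has already oriented every edge whose endpoints lie in distinct residuals). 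In particular the stopping condition ``$R$ is fully oriented'' is always reachable using only interventions on $V(R)$.

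Finally, the residuals $\{\Res_{\tT_{D'}}(C) : C \in \cC(D')\}$ partition $V(D')$: fixing a rooting of the directed clique tree, each vertex $v$ lies in the residual $C \setminus \pa(C)$ of the unique clique $C$ that is topmost among the cliques containing $v$ (by the induced-subtree property, those cliques form a subtree, so such a topmost clique exists and is unique, and every other clique $C'$ containing $v$ has $v \in \pa(C')$). Adding the per-residual bound over all residuals of $D'$ therefore shows that the second phase uses at most $\sum_{C \in \cC(D')}\bigl(|\Res_{\tT_{D'}}(C)| - 1\bigr) = |V(D')| - |\cC(D')|$ single-node interventions on $D'$. (That this set of interventions does leave all of $D'$ oriented --- so that the loop terminates as described --- is the sufficiency direction of \rref{lemma:vis-characterization} applied within the chain component.) I do not expect a serious obstacle here; the only points requiring care are this partition claim (so the sum is exactly the stated quantity) and the observation that, once Phase~1 has fixed all cross-residual edges, orienting within $V(R)$ is enough to complete $R$ --- both of which follow from the induced-subtree and running-intersection properties of clique trees.
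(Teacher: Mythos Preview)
Your proof is correct and follows essentially the same approach as the paper: bound each residual separately by $|V(R)|-1$ and sum. The only difference is cosmetic --- the paper dispatches the per-residual bound in one line by citing \citet{eberhardt2006n} for the fact that $n-1$ single-node interventions suffice to orient all edges among $n$ nodes, whereas you reprove this directly via the pigeonhole observation that after intervening on all but one vertex every edge has an intervened endpoint. Your additional remarks (termination, the partition of $V(D')$ into residuals, the role of Phase~1 in orienting cross-residual edges) are correct but go beyond what is needed for the stated bound.
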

 
 \begin{proof}
     \citet{eberhardt2006n} show that $n-1$ single-node interventions suffice to determine the orientations of all edges between $n$ nodes. We sum this value over all residuals.
 \end{proof}

\thmPolicyCompetitive*

\begin{proof}
    Consider a moral DAG $D' \in \CC(D)$. We will show that \rref{alg:dct-policy} uses at most $(3 \lceil \log_2 |\cC(\cE(D))| \rceil + 2) m(D')$ single-node interventions. The result then follows since $m(D) = \sum_{D' \in \CC(D)} m(D')$, the total number of interventions used by \rref{alg:dct-policy} is the sum over the number interventions used for each chain component, and $\cC_\Max \geq |\cC(\cE(D))|$ for all $D'$.
    
    Assume that for each clique-intervention in Algorithm \ref{alg:phase1}, we intervene on every node in the clique. Then, the number of single-node interventions used by each clique intervention is upper-bounded by $\omega(G)$. By \rref{thm:clique-lower-bound} and the simple algebraic fact that $\forall a \in \mathbb{N}$, $a \leq 3 \lfloor \frac{a}{2} \rfloor$ (which can be proven simply by noting that if $a$ is even $a \leq  3 \frac{a}{2}$ and if $a$ is odd $a \leq  3 \frac{a-1}{2}$., $\omega(G) \leq 3 m(D)$, Algorithm \ref{alg:phase1} uses at most $3 m(D)$ single-node interventions. Next, by Lemma \ref{lemma:vis-characterization} and Lemma \ref{lemma:phase2}, and the fact that $\forall a \in \mathbb{N}$, $a - 1 \leq 2 \lfloor {a \over 2} \rfloor$, the second phase of Algorithm \ref{alg:dct-policy} uses at most $2 m(D)$ single-interventions.
\end{proof}

\section{Additional Experimental Results}\label{app:additional-experiments}
\subsection{Scalability of \texttt{OptSingle}}\label{app:optsingle-scalability}
We use the same graph generation procedure as outlined in \rref{section:experiments}. We compare \texttt{OptSingle}, \texttt{Coloring}, \texttt{DCT}, and \texttt{ND-Random} on graphs of up to 25 nodes in \rref{fig:medium-results}. We observe that at 25 nodes, \texttt{OptSingle} already takes more than 2 orders of magnitude longer than either the \texttt{Coloring} or \texttt{DCT} policies to select its interventions, while achieving comparable performance in terms of average competitive ratio.

\begin{figure*}[t!]
    \begin{subfigure}[b]{0.48\textwidth}
         \includegraphics[width=\textwidth]{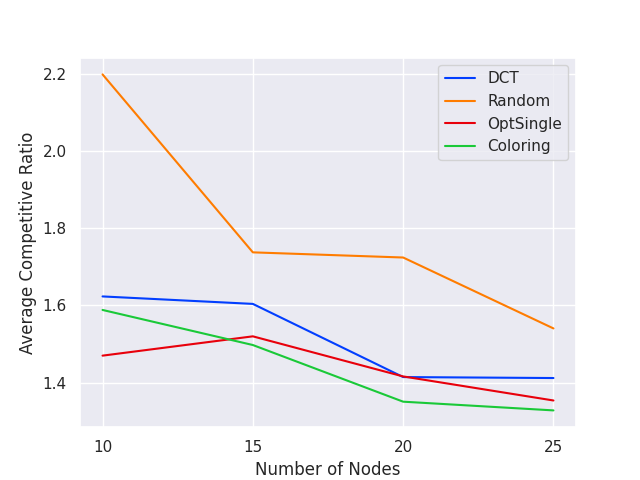}
         \caption{Average ic-ratio}
         \label{fig:avg-regret-medium}
     \end{subfigure}
     ~
     \begin{subfigure}[b]{0.48\textwidth}
         \centering
         \includegraphics[width=\textwidth]{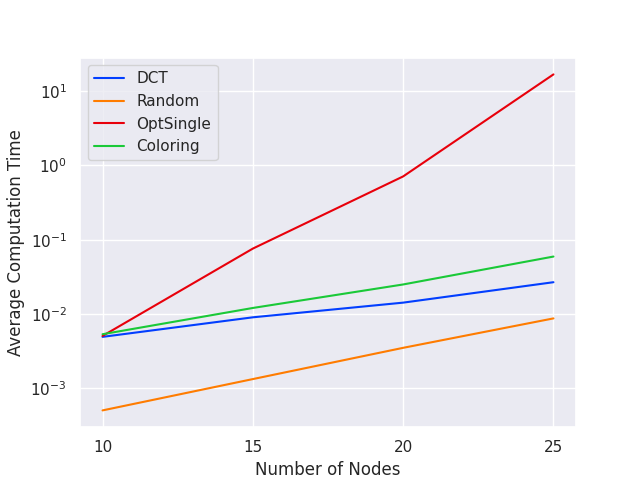}
         \caption{Average Computation Time}
         \label{fig:time-medium}
     \end{subfigure}
     \caption{Comparison (over 100 random synthetic DAGs)}
     \label{fig:medium-results}
\end{figure*}

\subsection{Computation time for large tree-like graphs}\label{app:computation-time-large}

In this section, we report the results on average computation time associated with \rref{fig:avg-regret-large} from \rref{section:experiments}. We find similar scaling for our \texttt{DCT} policy and the \texttt{Coloring} policy, both taking about 5-10 seconds for graphs of up to 500 nodes, as seen in \rref{fig:time-large}.

\subsection{Comparison on large dense graphs}

In this section, we generate dense graphs via the same Erd{\"o}s-R{\'e}nyi-based procedure described in \rref{section:experiments}.
We show in \rref{fig:large-dense-results} that the \DCT policy is more scalable to dense graphs than the \texttt{Coloring} policy, but that our performance becomes slightly \textit{worse} than even \texttt{ND-Random}. 
Since the size of the MVIS is already large for such dense graphs, this suggests that the two-phase nature of the \DCT policy may be too restrictive for such a setting.
Further analysis of the graphs on which different policies do well is left to future work.

\begin{figure*}[t!]
    \centering
    \begin{subfigure}[b]{0.48\textwidth}
         \includegraphics[width=\textwidth]{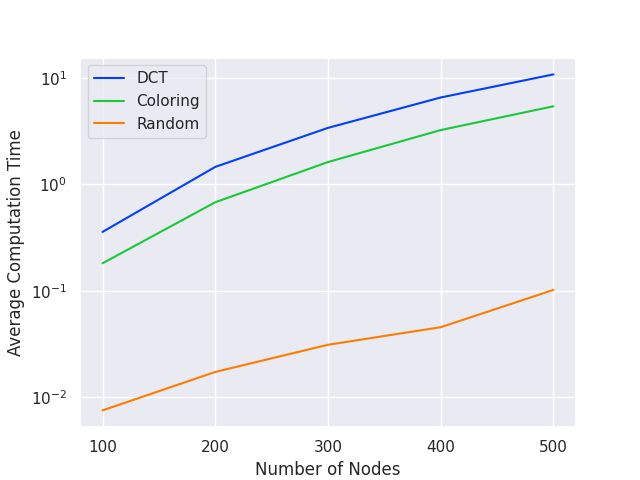}
         \caption{Average Computation Time}
         \label{fig:time-large}
     \end{subfigure}
\end{figure*}

\begin{figure*}[t!]
    \begin{subfigure}[b]{0.48\textwidth}
         \includegraphics[width=\textwidth]{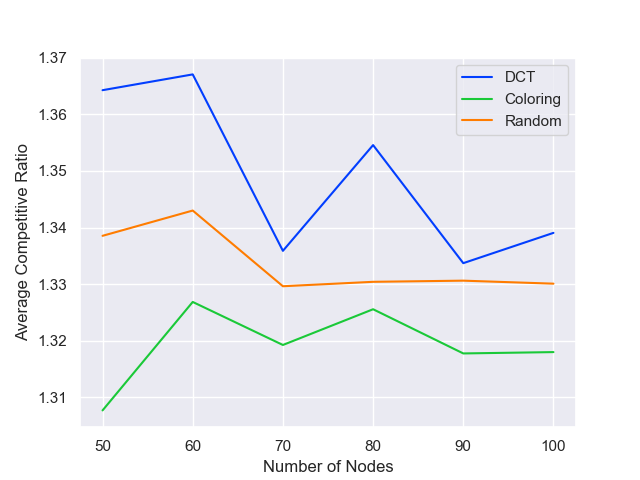}
         \caption{Average ic-ratio}
         \label{fig:avg-regret-large-dense}
     \end{subfigure}
     ~
     \begin{subfigure}[b]{0.48\textwidth}
         \centering
         \includegraphics[width=\textwidth]{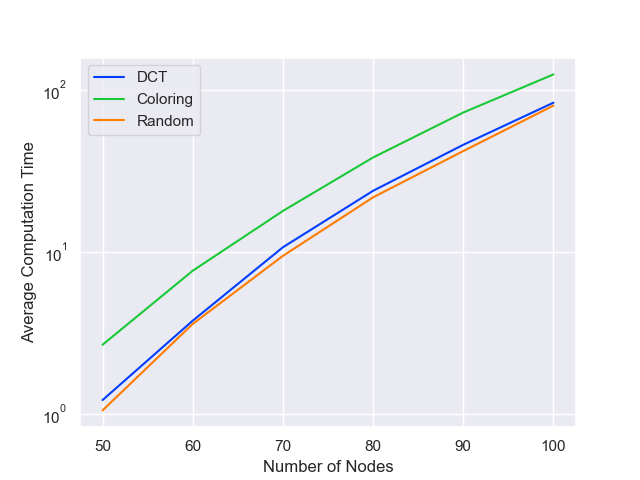}
         \caption{Average Computation Time}
         \label{fig:time-large-dense}
     \end{subfigure}
     \caption{Comparison (over 100 random synthetic DAGs)}
     \label{fig:large-dense-results}
\end{figure*}

\end{document}